\documentclass{article} 

\usepackage{float}      
\usepackage{placeins}   
\usepackage{graphicx}   

\usepackage[numbers,sort&compress]{natbib} 

\usepackage{amsthm}

\usepackage{latexsym}   
\usepackage{xcolor}     
\usepackage{lipsum}     
\usepackage{amsfonts}   
\usepackage{graphicx}
\usepackage{subcaption}
\usepackage{amssymb}    
\usepackage{amsmath}    
\usepackage{bbm}        
\usepackage{eucal}      
\usepackage{mathabx}    
\usepackage{mathrsfs}   
\usepackage{mathtools}
\usepackage{txfonts}

\usepackage{multirow}   
\usepackage{array}      
\usepackage{booktabs}   
\usepackage{ulem}       

\usepackage{todonotes}  

\usepackage{enumerate}  
\usepackage{enumitem}   

\newcommand{\githubrepo}{\url{https://github.com/tosmeow/passive-impact.git}}

\usepackage{thmtools}  
\usepackage[ruled,vlined]{algorithm2e} 
\usepackage{tikz-cd}

\newtheorem{theorem}{Theorem}[section]
\newtheorem{definition}[theorem]{Definition}

\newtheorem{proposition}[theorem]{Proposition}
\newtheorem{lemma}[theorem]{Lemma}
\newtheorem{example}[theorem]{Example}
\newtheorem{assumption}{Assumption}
\newtheorem{corollary}[theorem]{Corollary}
\newtheorem{remark}[theorem]{Remark}

\renewcommand{\theassumption}{\Alph{assumption}}

\makeatletter
\newcommand{\settheoremtag}[1]{
\let\oldtheassumption\theassumption
\renewcommand{\theassumption}{#1}
\g@addto@macro\endassumption{
\global\let\theassumption\oldtheassumption}
}
\makeatother

\def\wb{\widebar}

\newcommand{\1}{\mathbbm{1}} 

\DeclareMathOperator{\Poiss}{Poiss} 
\DeclareMathOperator{\Poisson}{Poisson} 

\newcommand{\norm}[1]{{\vert\kern-0.25ex\vert #1 \vert\kern-0.25ex\vert}}
\newcommand{\bignorm}[1]{{\big\vert\kern-0.25ex\big\vert #1 \big\vert\kern-0.25ex\big\vert}}
\newcommand{\opnorm}[1]{{\vert\kern-0.25ex\vert\kern-0.25ex\vert #1 \vert\kern-0.25ex\vert\kern-0.25ex\vert}}

\newcommand{\dd}{\mathrm{d}}

\newcommand{\E}{\mathbb{E}}

\newcommand{\lp}{\left\{}
\newcommand{\rp}{\right\}}
\newcommand{\R}{\mathbb{R}}

\newcommand{\ind}{\mathrm{1}}

\def\bbR{\mathbb{R}}

\def\bbN{\mathbb{N}}

\def\bbZ{\mathbb{Z}}

 \newcommand{\calD}{\mathcal{D}}
 \newcommand{\calF}{\mathcal{F}}
\newcommand{\calG}{\mathcal{G}}

 \newcommand{\calP}{\mathcal{P}}

\usepackage{hyperref}

\title
{Exact conditional simulation of Point processes: \\Application to pathwise market impact estimation}

\author
{Joseph Leclère\footnote{Ceremade, Université Paris Dauphine-PSL, \texttt{leclere@ceremade.dauphine.fr}} \and Youssef Ouazzani Chahdi\footnote{MICS, CentraleSupélec, \texttt{youssef.ouazzani-chahdi@centralesupelec.fr}} \and Mathieu Rosenbaum\footnote{Ceremade, Université Paris Dauphine-PSL, \texttt{mathieu.rosenbaum@dauphine.psl.eu}} \and Gr\'egoire Szymanski\footnote{DMATH, Université du Luxembourg, \texttt{gregoire.szymanski@uni.lu}}}

\begin{document}
\raggedbottom

\date{\today}

\maketitle

\begin{abstract}
    Market impact is defined as the difference between the observed price trajectory under a given execution strategy and the counterfactual trajectory that would have prevailed without it. Since this counterfactual is unobservable, estimating market impact requires simulating alternative paths under the same realized market randomness. We address this by studying the conditional simulation of point processes under perturbed intensities. 
    Given an observed counting process whose intensity is determined by its own history, we characterize the conditional law of the latent Poisson random measure in a thinning representation. This yields an exact, event-driven algorithm that reconstructs counterfactual paths on a common randomness source, enabling rigorous pathwise market impact estimation for aggressive, passive, and mixed strategies.
\end{abstract}

\maketitle




\section{Introduction}
\label{sec:introduction}

A recurring problem in stochastic modeling is to observe a system and ask what the same system would have done, had a control been applied to it. The object is not merely another independent realization of the model, but a counterfactual trajectory driven by the same primitive randomness as the observed controlled trajectory \citep{holland1986statistics,glasserman2004monte}. In favorable situations, the model structure and the observation determine this primitive randomness, or at least the part of it relevant for the counterfactual dynamics, and exact pathwise reconstruction is possible. In other situations, the observation is intrinsically coarser than the randomness that drives the system, so that the uncontrolled state cannot be identified as a single path from the controlled observation alone. \\

The obstruction is informational. A trajectory may reveal the jumps of a point process, or the state variables generated by those jumps, without revealing the auxiliary marks and rejected candidate events that were present in the underlying construction. More generally, if the $\sigma$-field generated by the observation is strictly smaller than the $\sigma$-field generated by the primitive noise, then replacing the missing noise by an arbitrary representative one would lead to some loss of information. The mathematically well-defined object is instead the conditional law of the counterfactual state given the factual state.
The construction below implements this program for point process models motivated by market impact.

\subsection{Market impact as a counterfactual problem}

Market impact is a central example of this counterfactual problem.
Transactions, order insertions and cancellations change queues, order flow intensities, and ultimately transaction prices. Nevertheless, execution, impact, and transaction cost analysis require comparison with the price process that would have prevailed without the trader's intervention, under the same realized market environment. If $P^{\mathrm{int}}$ denotes the price generated under the intervention and $P^0$ the corresponding reference price, the pathwise impact at time $t$ is
\begin{equation*}
\mathrm{MI}_t=P^{\mathrm{int}}_t-P^0_t.
\end{equation*}
The subtraction is meaningful only after specifying a coupling of the two price processes. Independent resimulation of $P^0$ would mix execution effects with unrelated market noise.\\

The literature provides several complementary notions of impact. Classical models of price formation and execution costs relate trades to permanent or temporary price changes \citep{kyle1985continuous,hasbrouck1991measuring,almgren2001optimal,webster2023handbook}. Empirical and propagator-based models emphasize signed, concave, transient responses under persistent order flow and no-arbitrage restrictions \citep{bouchaud2003fluctuations,bouchaud2009markets,jaisson2015market,jusselin2020noarbitrage,gatheral2010no,bacry2015hawkes}. These approaches justify studying average response curves and structural price functionals. For ex post cost analysis, however, two metaorders with the same size and duration may occur under different queue states, spreads, and contemporaneous order flow shocks. The quantity needed here is therefore not only an unconditional map from volume to average response, but the conditional law of market impact given the realized market environment. \\


The distinction is particularly acute for passive execution. A market order directly consumes displayed liquidity and enters signed market-order flow. A limit order supplies liquidity, may or may not be filled, and changes the future queue and market order dynamics through the state of the book. Queue reactive models make this feedback explicit by allowing event intensities to depend on the current queue state \citep{huang2015simulating}. In the passive impact model of \citep{youssef2024passiveimpact}, the price response of a market order is weighted by a queue-dependent coefficient. The price reflects the current anticipation of order flow imbalance and is modeled in a general form as follows: with aggregate ask and bid queues $q^a,q^b$ and market order counts $N^a,N^b$, the price used in Section~\ref{sec:impact} is
\begin{equation*}
P_t=P_0+\lim_{T\to\infty}
\E\!\left[
\int_0^T \kappa(q_s^a)\,\dd N_s^a
-
\int_0^T \kappa(q_s^b)\,\dd N_s^b
\ \middle|\ \calG_t
\right],
\end{equation*}
where $\kappa$ is non-increasing in the displayed depth and $(\calG_t)_t$ a filtration to be defined later. Section~\ref{sec:impact} recalls the assumptions under which this expression is finite, represents $P_t$ as a non-anticipative functional of the order book paths, and defines impact by evaluating the same functional on factual and counterfactual trajectories coupled through common primitive noise.

\subsection{Latent Poisson noise and conditional reconstruction}

Order books evolve through discrete events in continuous time: limit orders, cancellations, and market orders. Counting processes and marked point processes are therefore the natural language for the order flow \citep{bremaud1981point,daley2003introduction}. In the notation used below,
\begin{equation*}
X=(L^a,C^a,N^a,L^b,C^b,N^b)
\end{equation*}
collects these six event counts. If $X^i$ has predictable intensity $\lambda^i$, then, heuristically,
\begin{equation*}
\mathbb P\!\left(X^i_{t+h}-X^i_t=1\mid\calG_{t-}\right)
=\lambda^i_t h+o(h),
\qquad h\downarrow0,
\end{equation*}
with simultaneous jumps excluded by the standing assumptions. Dependence of $\lambda^i$ on past events and queue states expresses market endogeneity. Hawkes processes model clustering and persistence in order flow \citep{hawkes1971spectra,bacry2015hawkes}, while queue-reactive specifications model the feedback from displayed depth to event rates \citep{huang2015simulating}. \\

The coupling used in this paper is based on the Poisson random measure thinning representation. For a counting process $N$ with predictable intensity $\lambda$, we have
\begin{equation*}
N_t=\int_0^t\int_{\R_+}\1_{\{z\leq \lambda_s\}}\,\pi(\dd s,\dd z),
\end{equation*}
where $\pi$ is a Poisson random measure on $\R_+\times\R_+$. Once $\pi$ is fixed, a perturbation changes the acceptance region $\{z\le \lambda_s\}$ but not the candidate atoms. This is the common-randomness principle underlying exact thinning algorithms \citep{lewis1979simulation,ogata1981lewis} and the shared-noise comparison developed here. \\

The difficulty is precisely the non identifiability described above. The observed event path reveals the accepted jump times and types, hence the thinned counting process and the induced queue path, but it does not reveal the accepted vertical marks or the rejected atoms above the intensity graph. Counterfactual simulation from an observed trajectory therefore requires the conditional law of the latent Poisson measure given the thinned path. For history dependent intensities this is not a direct deterministic splitting argument, because the acceptance region is itself random and generated by the solution. Under the predictability, strong well-posedness and finite-horizon boundedness conditions stated in Section~\ref{sec:cond_distribution}, Theorem~\ref{thm:conditionallaw} gives the required decomposition: conditionally on the observed path, the accepted marks are independent uniforms below the corresponding intensity levels, while the unrevealed cloud on the complement of the corresponding accepted region remains an independent Poisson random measure with the restricted intensity.

\subsection{Contributions and organization}

The paper develops an exact conditional simulation framework for counterfactual market impact evaluation in point process based order book models. The first contribution is probabilistic. Theorem~\ref{thm:conditionallaw} identifies the conditional law of the Poisson random measure driving a strongly well-posed, non explosive thinned process with predictable path-dependent intensity and the finite-horizon bound of Subsection~\ref{sec:cond_distribution}. It extends the usual Poisson splitting intuition to the random acceptance regions generated by a strong thinning equation; the result reconstructs a conditional distribution of the latent noise.\\

The second contribution is algorithmic. Proposition~\ref{prop:conditional_simulation}, under the corresponding one-sided queue assumptions, turns this conditional law into an event-driven simulator: observed jumps are kept at their times with resampled admissible marks, the residual Poisson cloud is sampled on the unrevealed region, and the perturbed dynamics are solved by thinning on the reconstructed noise. No time discretization of the primitive point process is required.\\ 

The third contribution is the link with pathwise market impact. Section~\ref{sec:impact} formulates passive and aggressive impact as differences of non-anticipative price functionals evaluated on factual and counterfactual trajectories of the queue and order flow processes, coupled by the same Poisson point measures.  Section~\ref{sec:quantifying} applies the conditional simulator to these functionals and derives the tractable representation of Theorem~\ref{thm:explicit_impact_shape} under affine queue drift and exponential-sum Hawkes kernels. Section~\ref{sec:a_posteriori} treats the post-trade inverse problem: the impacted trajectory is observed, and the baseline trajectory that would have prevailed without the intervention is conditionally reconstructed under the same latent noise. For reproducibility, the code used to generate the numerical experiments is available at \githubrepo.\\

The rest of the paper is organized as follows. Section~\ref{sec:impact} fixes the pathwise market impact object and the shared-noise coupling. Section~\ref{sec:primer} recalls Poisson random measures, thinning, and common-noise perturbations, then states the conditional-law theorem. Section~\ref{sec:quantifying} gives the conditional simulation algorithm and its impact applications. Section~\ref{sec:a_posteriori} develops ex post impact and cost analysis. The appendices contain the proofs of the functional representation, the conditional law theorem, the simulation propositions, and the first-order impact formula.

\subsection{Notation and path-space conventions}
\label{subsec:intro_notation}

We write $\Poiss(r)$ for the Poisson law with parameter $r>0$. The symbol $\calD$ denotes the space of càdlàg functions from $\R_+$ to $\R$, equipped with the Kolmogorov $\sigma$-algebra generated by coordinate maps. For $x\in\calD$ and $t\ge0$, the stopping operator is
\begin{equation*}
S_t(x):=x_{\cdot\wedge t}.
\end{equation*}

\section{Pathwise market-impact theory under Poisson point process coupling}
\label{sec:impact}

This section fixes the market-impact object used in the sequel. Prices are first represented as non-anticipative functionals of queue and market order paths, following the anticipation of order flow viewpoint \citep{jaisson2015market, jusselin2020noarbitrage} and its queue-dependent extension in \citep{youssef2024passiveimpact}. Market impact is then the difference of this functional on factual and counterfactual trajectories coupled through the same underlying randomness. The final subsection identifies the latent noise reconstruction problem solved in the rest of the paper.
All probabilistic statements used below are inherited either from the Hawkes and queue reactive specifications recalled here or from the conditional simulation results proved in later sections.

\subsection{Price as anticipated queue-weighted order flow imbalance}
\label{sec:modeling_prices}

Under no-arbitrage assumption and imposing a linear price impact specification, the constant-sensitivity model of \citet{jaisson2015market} expresses the price as the conditional expectation of future signed market order imbalance. More precisely, we have
\begin{equation}
\label{eq:price_ofi_constant_kappa}
P_t = P_0 + \lim_{T\to\infty}\kappa\,\E\!\left[N^a_T-N^b_T\mid\calG_t\right],
\end{equation}
where $\kappa$ is the constant contribution attached to one signed market order. This parameter can be interpreted either as permanent impact per trade or as the average informational content of a single trade. In this baseline model, $\kappa$ is independent of the contemporaneous state of the book: it assigns the same informational price response to a trade regardless of the displayed depth it consumes. This assumption is quite restrictive as it is well known that high-frequency future price returns are highly linked to the current available liquidity. This limitation was addressed in \citet{youssef2024passiveimpact} where a queue-dependent coefficient is introduced to distinguish market orders hitting a thin queue from market orders hitting a deep queue. More precisely, we denote by $q^a$ and $q^b$ the displayed aggregated ask and bid queues and we define the queue-weighted price by
\begin{equation}
\label{eq:price}
P_t=P_0+\lim_{T\to\infty}\E\!\left[\int_0^T \kappa(q^a_s)\,\dd N^a_s-\int_0^T \kappa(q^b_s)\,\dd N^b_s\ \middle|\ \calG_t\right].
\end{equation}
Here $\kappa(q)$ is the contribution of a market order when the queue on the consumed side has size $q$. The standing monotonicity convention is that $\kappa$ is non-increasing: larger displayed depth attenuates the price response to aggressive orders.\\

The conditional expectation in Equations \eqref{eq:price_ofi_constant_kappa} and \eqref{eq:price} is therefore model dependent. We write it under a filtration $(\calG_t)_{t \geq 0}$ with respect to which the processes are progressively measurable. The Doob--Dynkin lemma \citep[see, e.g.,][]{MPS} yields a measurable map $\calP_t:\calD^4\to\R$ such that
\begin{equation*}
P_t=\calP_t\!\left(S_t(q^a),S_t(q^b),S_t(N^a),S_t(N^b)\right)\qquad\text{a.s.,}
\end{equation*}
and hence $P_t=\calP_t(q^a,q^b,N^a,N^b)$ a.s.\ for some non-anticipative $\calP_t:\calD^4\to\R$.
Here non-anticipative means that, for any $x,y\in\calD^4$,
\begin{equation*}
x_{\cdot\wedge t} = y_{\cdot\wedge t}\quad\Longrightarrow\quad \calP_t(x)=\calP_t(y),
\end{equation*}
with $S_t$ the stopping operator introduced in Subsection~\ref{subsec:intro_notation}. \\

This observation separates the pricing functional from the structural model used to generate paths. A reduced-form specification may estimate $\calP_t$ directly from data, whereas the present paper fixes an explicit point process model because counterfactual impact requires a coupling of factual and intervened trajectories.

\subsection{Hawkes and queue reactive specification}

We now recall the structural specification used in the sequel. It combines independent Hawkes market order flows \citep{hawkes1971spectra,bacry2015hawkes} with queue reactive limit order and cancellation intensities \citep{huang2015simulating}, in the form analyzed in \citep{youssef2024passiveimpact}.

The market order processes $N^a$ and $N^b$ are independent Hawkes processes with common baseline $\mu > 0$ and non-negative kernel $\varphi$ satisfying $\|\varphi\|_{L^1}<1$, such that their intensities are given by
\begin{equation*}
\lambda^x_t=\mu+\int_0^{t-}\varphi(t-s)\,\dd N^x_s,\qquad x\in\{a,b\}.
\end{equation*}

The corresponding queue processes are
\begin{equation}
\label{eq:queue}
q^a_t=q^a_0+L^a_t-C^a_t-N^a_t,\qquad
q^b_t=q^b_0+L^b_t-C^b_t-N^b_t,
\end{equation}
where, for $x\in\{a,b\}$,
\begin{equation*}
\lambda_t^{L,x}=\lambda^L(q^x_{t-}),\qquad
\lambda_t^{C,x}=\lambda^C(q^x_{t-}),
\end{equation*}
for some functions $\lambda^L, \lambda^C: \mathbb{R} \to \mathbb{R}_+$. 

As in \citep{youssef2024passiveimpact}, the state space is taken to be $\bbZ$. Negative values may occur in this stylized aggregated-queue model; the monotonicity and drift conditions below are imposed on $\bbZ$ precisely to keep the dynamics controlled without adding a reflecting boundary.\\

The existence of the limit in Equation \eqref{eq:price} is not obvious, because future Hawkes arrivals and future queue states remain correlated through the queue equations. We use the following sufficient conditions from \citep[Theorem~2.1]{youssef2024passiveimpact}.

\begin{assumption}
\label{assumption:passive}
Assume $\lambda^L$ is decreasing, $\lambda^C$ is increasing, and $\kappa$ is non-negative, bounded, and non-increasing. Also $\|\varphi\|_{L^1}<1$ and
\begin{equation}
\label{eq:uniform_drift_condition}
\mathfrak{m}_k:=\inf_{q\in\bbZ}\!\left\{\lambda^L(q)-\lambda^L(q+k)+\lambda^C(q+k)-\lambda^C(q)\right\},\qquad
\underline{\mathfrak{m}}:=\inf_{k\ge 1}\mathfrak{m}_k>0.
\end{equation}
\end{assumption}

Under Assumption~\ref{assumption:passive}, the limit in Equation\eqref{eq:price} is finite almost surely for every $t$. The assumptions are not claimed to be minimal; their role is to combine Hawkes stability with sufficient queue mean reversion. Variants in which $\kappa$ depends on both queues can be treated analogously when the corresponding integrability holds. In particular, $\|\varphi\|_{L^1}<1$ is the standard Hawkes stability condition, while Equation \eqref{eq:uniform_drift_condition} enforces a uniform restoring force for two coupled queues separated by $k$ units. These are the only well-posedness assumptions on the price model used in the rest of this section. \\

The preceding Doob--Dynkin argument gives a generic non-anticipative representation. The Markovian restart structure of the Hawkes/queue-reactive model gives the sharper decomposition needed below.

\begin{proposition}
\label{proposition:market_impact_functional}
Under Assumption~\ref{assumption:passive}, for every $t\ge 0$ there exists a measurable map
\begin{equation*}
\mathcal R_t:\R^2\times\mathcal D^2\longrightarrow\R
\end{equation*}
such that a.s.
\begin{equation}
\label{eq:functional_price_decomposition}
P_t = P_0 + \int_0^t \kappa(q_s^a)\,\dd N_s^a - \int_0^t \kappa(q_s^b)\,\dd N_s^b + \mathcal R_t(q_t^a,q_t^b,S_t(N^a),S_t(N^b)).
\end{equation}

\end{proposition}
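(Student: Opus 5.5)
Split the integral in \eqref{eq:price} at time $t$. For $T>t$ the part on $[0,t]$ is $\calG_t$-measurable and comes out of the conditional expectation, which gives the first two terms of \eqref{eq:functional_price_decomposition} directly. What remains is
\begin{equation*}
\mathcal{Z}_t:=\lim_{T\to\infty}\E\!\left[\int_t^T \kappa(q^a_s)\,\dd N^a_s-\int_t^T \kappa(q^b_s)\,\dd N^b_s\ \middle|\ \calG_t\right].
\end{equation*}
This limit exists and is finite a.s. because the full limit in \eqref{eq:price} is finite under Assumption~\ref{assumption:passive} (\citep[Theorem~2.1]{youssef2024passiveimpact}) and the $[0,t]$ part does not depend on $T$. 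It therefore suffices to show $\mathcal{Z}_t=\mathcal R_t(q^a_t,q^b_t,S_t(N^a),S_t(N^b))$ a.s. for a measurable $\mathcal R_t$.

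The key step is a Markovian restart property. Take $\calG_t$ to be the natural filtration of $(L^a,C^a,N^a,L^b,C^b,N^b)$, or equivalently the filtration of the driving Poisson random measures. On $(t,\infty)$ the six-dimensional counting process is the unique strong solution of the thinning equations driven by the restriction of the Poisson measures to $(t,\infty)\times\R_+$, which is independent of $\calG_t$. The inputs to those equations are:
\begin{itemize}
\item the queues $q^a_t,q^b_t$, since $\lambda^L,\lambda^C$ depend on the history only through $q^x_{s-}$ and the future queues are $q^x_t$ plus future increments;
\item the Hawkes pasts $S_t(N^a),S_t(N^b)$, which enter only through $\int_0^{t}\varphi(s-u)\,\dd N^x_u$ for $s>t$.
\end{itemize}
So there is a measurable map $\Phi_t$ with $(q^x_s,N^x_s-N^x_t)_{s\ge t,x\in\{a,b\}}=\Phi_t(q^a_t,q^b_t,S_t(N^a),S_t(N^b),\pi|_{(t,\infty)})$. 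The freezing lemma (independence plus measurability) then gives, for each $T$,
\begin{equation*}
\E\!\left[\int_t^T \kappa(q^a_s)\,\dd N^a_s-\int_t^T \kappa(q^b_s)\,\dd N^b_s\ \middle|\ \calG_t\right]=F_{t,T}(q^a_t,q^b_t,S_t(N^a),S_t(N^b)),
\end{equation*}
where $F_{t,T}(y)$ is the expectation of the same functional computed with the frozen initial data $y$. Integrability follows from boundedness of $\kappa$ and finiteness of Hawkes moments on finite horizons. Measurability of $y\mapsto F_{t,T}(y)$ follows from Fubini, because $\Phi_t$ is jointly measurable.

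Finally set $\mathcal R_t(y):=\limsup_{T\to\infty}F_{t,T}(y)$ along integer $T$, and $0$ wherever the limsup is infinite. This is measurable as a countable limsup of measurable maps, and it equals $\mathcal Z_t$ a.s. by the a.s. convergence established above.

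The main obstacle is making the restart argument rigorous: joint measurability of the solution map $\Phi_t$ in the initial data and the noise, and strong well-posedness of the restarted system for every deterministic initial condition, not just almost every one. I would handle this by constructing $\Phi_t$ explicitly through the event-by-event thinning recursion. Non-explosion makes the recursion well defined, and each step is a measurable operation on the atoms of $\pi$ and the frozen data, so measurability of $\Phi_t$ comes for free. A secondary point is that a.s. convergence of $F_{t,T}$ at the random point $y=(q^a_t,q^b_t,S_t(N^a),S_t(N^b))$ is enough, so no uniform control in $y$ is needed.
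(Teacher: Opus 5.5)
Your proposal is correct and follows essentially the same route as the paper's proof. Both split the integral at $t$, write the post-$t$ path as a jointly measurable solution map of $(q^a_t,q^b_t,S_t(N^a),S_t(N^b))$ and the shifted Poisson noise (the paper's $\Theta_t$, with the Hawkes prehistory carried by $\Gamma_t$), freeze the $\calG_t$-measurable data using independence of the future noise, and take $\mathcal R_t$ as a $\limsup$ over integer horizons; your convention $\mathcal R_t=0$ where the $\limsup$ is infinite is a small improvement that makes the map genuinely real-valued. One minor imprecision: the natural filtration of the counting processes is strictly smaller than the filtration of the driving Poisson measures, so they are not ``equivalent'', but your argument works for either because the future noise is independent of both.
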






The proof is given in Appendix~\ref{appendix:section_two_one}. Proposition~\ref{proposition:market_impact_functional} is the structural input for the rest of the section: at a fixed time, the price is a deterministic non-anticipative functional of the observed state variables.
The term $\mathcal R_t$ is the continuation value of future queue-weighted market order imbalance, conditional on the state and Hawkes prehistory available at time $t$.\\

When $\kappa$ is constant, \cite[Proposition~3.2]{jaisson2015market} gives the closed-form propagator representation
\begin{equation}
\label{eq:price_hawkes_propagator}
P_t=P_0+\kappa\int_0^t \xi(t-s)\,\dd\!\left(N^a_s-N^b_s\right),
\qquad
\xi(u)=1+\Big(1+\int_0^\infty \psi(r)\,\dd r\Big)\int_u^\infty \varphi(r)\,\dd r,
\end{equation}
where $\psi=\sum_{n\ge 1}\varphi^{*n}$. This propagator form will be used below only as a closed-form benchmark.
Since $\xi(u)\to1$ as $u\to\infty$, the permanent contribution of one isolated market order is $\kappa$, while $\xi(0)$ encodes the instantaneous amplification induced by order flow persistence. In the notation of Proposition~\ref{proposition:market_impact_functional}, the continuation functional is then explicitly
\begin{equation*}
    \mathcal R_t(q_t^a,q_t^b, N^a,N^b)
    =
    \kappa\int_0^t\bigl(\xi(t-s)-1\bigr)\,\dd\!\left(N^a_s-N^b_s\right).
\end{equation*}

\subsection{Functional counterfactual definition of market impact}
\label{market_impact_theory}

Market impact is the difference between the price generated by an intervened market and the price generated by a reference market in which the intervention is absent, with both markets compared under the same realized background randomness. To formalize this within Subsection~\ref{sec:modeling_prices}, suppose first that an oracle observes both the reference trajectory $\bigl((N_t^a,N_t^b,q_t^a,q_t^b)\bigr)_{t\ge 0}$ and an intervened trajectory $\bigl((\bar N_t^a,\bar N_t^b,\bar q_t^a,\bar q_t^b)\bigr)_{t\ge 0}$ constructed on the same filtered probability space $(\Omega, (\mathcal G_t)_{t \geq 0}, \mathbb{P})$. Define the price functional
\begin{equation*}
\mathcal P_t(q^a,q^b,N^a,N^b)
:=
P_0+
\int_0^t \kappa(q_s^a)\,\dd N_s^a
-\int_0^t \kappa(q_s^b)\,\dd N_s^b
+\mathcal R_t(q_t^a,q_t^b,S_t(N^a),S_t(N^b)),
\end{equation*}
so that, by Proposition~\ref{proposition:market_impact_functional}, $P_t=\mathcal P_t(q^a,q^b,N^a,N^b)$ a.s. The intervened price is then $\bar P_t:= \mathcal P_t(\bar q^a,\bar q^b,\bar N^a,\bar N^b)$ whenever the same pricing rule is evaluated on the intervened trajectory, and the corresponding pathwise market impact is
\begin{equation}
\label{eq:mi_functional_difference}
\mathrm{MI}_t
:=\bar P_t-P_t
=
\mathcal P_t(\bar q^a,\bar q^b,\bar N^a,\bar N^b)
-
\mathcal P_t(q^a,q^b,N^a,N^b).
\end{equation}


In the constant-$\kappa$ Hawkes model, we retrieve Equation~\eqref{eq:price_hawkes_propagator}.
If a buy metaorder $N^o$ is inserted as additional ask-side market orders and is assumed not to excite the background Hawkes flow, then $\bar N^a=N^a+N^o$, $\bar N^b=N^b$, and
\begin{equation*}
\mathrm{MI}_t=\kappa\int_0^t \xi(t-s)\,\dd N^o_s.
\end{equation*}
Thus each trade at time $s$ contributes $\kappa\xi(t-s)$ at time $t$, with permanent contribution $\kappa$ when $\xi(u)\to1$. \\

In practice no oracle exists: only one of the two trajectories is observed in the market. Making sense of Equation \eqref{eq:mi_functional_difference} therefore requires a model that links them, and the right notion of link is a coupling under shared randomness: factual and counterfactual processes must be generated on the same probability space, driven by the same exogenous market shocks, with the intervention modifying only the response of the queue and order flow to those shocks. The next subsection makes this coupling explicit and derives the resulting passive and aggressive impact formulas.
\subsection{From factual to counterfactual orders}

For aggressive orders, additional orders in the counterfactual market order flow are simply considered as an independent addition to the factual market order flow, as in \citep{jaisson2015market}. These new aggressive orders imply modifications of the queue dynamics. These dynamics are also influenced by counterfactual limit and cancel orders.
To model this feedback, order flow events $L^x$, $C^x$, and $N^x$ can be constructed from primitive Poisson point measures. More precisely, there exist independent Poisson point measures
\begin{equation*}
\Pi:=\bigl(\pi^{L,a},\pi^{C,a},\pi^{N,a},\pi^{L,b},\pi^{C,b},\pi^{N,b}\bigr)
\end{equation*}
on $\R_+\times\R_+$ with Lebesgue intensity such that, for $x\in\{a,b\}$,
\begin{align*}
L^x_t&=\int_0^t\!\!\int_0^\infty \1_{\{z\le \lambda^L(q^x_{u-})\}}\,\pi^{L,x}(\dd u,\dd z),
&
C^x_t&=\int_0^t\!\!\int_0^\infty \1_{\{z\le \lambda^C(q^x_{u-})\}}\,\pi^{C,x}(\dd u,\dd z),
\\
N^x_t&=\int_0^t\!\!\int_0^\infty \1_{\{z\le \lambda^x_u\}}\,\pi^{N,x}(\dd u,\dd z), & q^x_t &= L^x_t - C^x_t - N^x_t.
\end{align*}
where $\lambda^x_u=\mu+\int_0^{u-}\varphi(u-s)\,\dd N^x_s$.
This allows to define the coupling between factual and counterfactual Point processes based on the same underlying noise, formulated in Assumption \ref{assumption:shared_noise}.
\begin{assumption}[Shared-noise coupling]
\label{assumption:shared_noise}
Counterfactual order flow events, therefore denoted $\wb L^x$, $\wb C^x$, and thus the counterfactual queue process $\wb q^x$, are constructed from the same measure $\Pi$ as $L^x$, $C^x$, and $q^x$. Passive and aggressive interventions are then inserted as finite-variation paths in the corresponding queue equations, as specified below in Equations \eqref{eq:bar_queue}-\eqref{eq:bar_L_C}.
\end{assumption}

Under assumption \ref{assumption:shared_noise}, we can define the factual observed filtration $\mathcal{F}_t = \sigma(q^a_{\leq t}, q^b_{\leq t}, N^a_{\leq t}, N^b_{\leq t})$ and the shared more general filtration $\mathcal{G}_t = \sigma(\Pi_{|[0,t]})$. Assumption~\ref{assumption:shared_noise} is the precise analogue of the coupling used in \citep{youssef2024passiveimpact}. Economically, it expresses that
both worlds are exposed to the same latent market randomness: only the acceptance regions change across the two worlds; the underlying atoms are held fixed. Differences between factual and counterfactual prices are therefore attributable to execution rather than to an artificial resampling of exogenous noise. 
The aggressive formula below adopts the propagator convention used in \citep{jaisson2015market}: executed metaorder trades enter the ask side market order path appearing in the price functional up to the valuation time, while the ordinary background Hawkes path remains driven by the same $\pi^{N,a}$.
This convention is a modeling choice about the pricing rule, not a resampling of the primitive noise: the ordinary market order atoms are still those of $\pi^{N,a}$, but the signed order flow path supplied to the price functional includes the realized metaorder trades when the aggressive impact formula is evaluated.



\subsection{Oracle market impact formulas under common Poisson point measures}
\label{sec:market_impact_formula}

In this subsection, we derive the oracle formulas under the shared-noise coupling of
Assumption~\ref{assumption:shared_noise}. The passive case is special: because
the market order flow is assumed unchanged by passive interventions, the price
difference can be rewritten as the conditional expectation in
Equation~\eqref{eq:mi_passive_conditional}. This conditional-expectation form is
the basis for the Monte Carlo evaluation of passive impact. For aggressive orders,
by contrast, the intervention enters the market order path supplied to the pricing
functional; the impact therefore remains a difference of realized terms and
continuation values, as in Equation~\eqref{eq:mi_active_conditional}, and is not
reducible to a single conditional expectation without an additional approximation.
Formulas are stated on the ask side, bid side being symmetric. Throughout, Assumptions~\ref{assumption:passive} and~\ref{assumption:shared_noise} are in force.\\

\textbf{Oracle passive impact.}
Let $L^o$ be a cumulative ask-side limit metaorder. Since future intervention arrivals are not known at valuation time $t$, set
\begin{equation*}
L^{o,t}_s:=L^o_{s\wedge t},\qquad s\ge 0.
\end{equation*}
The path $L^{o,t}$ is treated as fixed at valuation time, or equivalently as conditioned upon together with the trader's realized strategy up to $t$.
The counterfactual ask queue is
\begin{equation}
\label{eq:bar_queue}
\bar q^{a,t}_s
=
q_0^a+\bar L^{a,t}_s-\bar C^{a,t}_s-N^a_s+L^{o,t}_s,
\qquad s\ge 0,
\end{equation}
where
\begin{equation}
\label{eq:bar_L_C}
\bar L^{a,t}_s=\int_0^s\!\!\int_0^\infty \1_{\{z\le \lambda^L(\bar q^{a,t}_{u-})\}}\,\pi^{L,a}(\dd u,\dd z),
\quad
\bar C^{a,t}_s=\int_0^s\!\!\int_0^\infty \1_{\{z\le \lambda^C(\bar q^{a,t}_{u-})\}}\,\pi^{C,a}(\dd u,\dd z).
\end{equation}
The bid side and market order flows are unchanged. Hence
\begin{equation*}
\bar P_t^{\mathrm{pas}}
= 
P_0+\lim_{T\to\infty}\E\!\left[
\int_0^T \kappa(\bar q^{a,t}_s)\,\dd N^a_s
-\int_0^T \kappa(q^b_s)\,\dd N^b_s
\ \middle|\ \calG_t\right],
\end{equation*}
and subtraction of Equation \eqref{eq:price} yields
\begin{equation}
\label{eq:mi_passive_conditional}
\mathrm{MI}_t^{\mathrm{pas}}
=
\E\!\left[
\int_0^\infty\!\big(\kappa(\bar q^{a,t}_s)-\kappa(q^a_s)\big)\,\dd N^a_s
\ \middle|\ \calG_t
\right].
\end{equation}
The monotone coupling of \citep{youssef2024passiveimpact} gives $\bar q^{a,t}_s\ge q^a_s$ for an ask side passive insertion; since $\kappa$ is non-increasing, the integrand is non-positive. Finiteness follows from the well-posedness of Equation \eqref{eq:price}.
Thus an ask side passive insertion has non-positive impact in this sign convention: it adds displayed ask liquidity and reduces the subsequent upward price response of buy market orders. The opposite side is obtained by symmetry.\\

\textbf{Oracle aggressive impact.}
Let $N^o$ be an ask-side market metaorder and set $N^{o,t}_s:=N^o_{s\wedge t}$. We keep the same ordinary market order noise $\pi^{N,a}$ and evaluate the signed-flow price functional on the total ask-side market order path $N^a+N^{o,t}$, consistently with the convention stated above.
The counterfactual queue is
\begin{equation}
\label{eq:bar_queue_aggressive}
\bar q^{a,t}_s
=
q_0^a+\bar L^{a,t}_s-\bar C^{a,t}_s-N^a_s-N^{o,t}_s,
\qquad s\ge 0,
\end{equation}
with $\bar L^{a,t}$ and $\bar C^{a,t}$ defined as in Equation \eqref{eq:bar_L_C}. Proposition~\ref{proposition:market_impact_functional} gives
\begin{equation*}
\bar P_t^{\mathrm{agg}}
=
P_0+
\int_0^t \kappa(\bar q^{a,t}_s)\,\dd \bigl(N^a_s+N^{o,t}_s\bigr)
-
\int_0^t \kappa(q^b_s)\,\dd N^b_s
+\mathcal R_t(\bar q^{a,t}_t,q^{b}_t,N^a+N^{o,t},N^b),
\end{equation*}
and therefore
\begin{equation}
\label{eq:mi_active_conditional}
\mathrm{MI}_t^{\mathrm{agg}}
=
\int_0^t \!\bigl(\kappa(\bar q^{a,t}_s)-\kappa(q^a_s)\bigr)\,\dd N^a_s
+\int_0^t \kappa(\bar q^{a,t}_s)\,\dd N^{o,t}_s
+\mathcal R_t(\bar q^{a,t}_t,q^{b}_t,N^a+N^{o,t},N^b)
-\mathcal R_t(q^a_t,q^b_t,N^a,N^b).
\end{equation}
The terms are, respectively, feedback on ordinary ask side trades, direct metaorder impact, and the change in continuation value. Unlike Equation \eqref{eq:mi_passive_conditional}, this is not reducible to a single conditional expectation against the unchanged ask flow without an additional approximation.

Relying on the functional formulation of market impact, one can moreover extend these equations by combining passive and aggressive orders in a given execution strategy.

\subsection{Latent-process formulation of impact evaluation}
\label{market_impact_evaluation}

Equations~\eqref{eq:mi_passive_conditional} and~\eqref{eq:mi_active_conditional}
are oracle formulas under the common Poisson source
$\Pi=(\pi^{L,a},\pi^{C,a},\pi^{N,a},\pi^{L,b}, \\ \pi^{C,b}, \pi^{N,b})$, with the
shared-noise coupling specified in Assumption~\ref{assumption:shared_noise}.
They are exact if $\Pi$ is known, but they are not directly computable from the
observed market history. Indeed, the counterfactual queue path $\bar q^{a,t}$
must be generated from the same latent Poisson atoms that produced the factual
paths, while the observed filtration $\calF_t$ contains only the aggregated paths
$(q^a,q^b,N^a,N^b)$ and not the marked measures $\Pi$ themselves. This is the
latent-noise reconstruction problem addressed in the rest of the paper.\\

We now formalize this point: factual and counterfactual prices are measurable
functionals of the same latent object $\Pi$, and market impact is the difference
of those functionals.
 The thinning equations of Assumption~\ref{assumption:shared_noise} (together with the Hawkes representation of $N^x$ used in the proof of Proposition~\ref{proposition:market_impact_functional}) yield measurable maps $\Phi^{N,x}_t,\Phi^{L,x}_t,\Phi^{C,x}_t$ such that, for $x\in\{a,b\}$ and every $t\ge 0$,
\begin{equation*}
N_t^x=\Phi_t^{N,x}(\Pi),\qquad L_t^x=\Phi_t^{L,x}(\Pi),\qquad C_t^x=\Phi_t^{C,x}(\Pi),
\end{equation*}
and hence, by Equation \eqref{eq:queue},
\begin{equation*}
q_t^x=\Phi_t^{q,x}(\Pi):=q_0^x+\Phi_t^{L,x}(\Pi)-\Phi_t^{C,x}(\Pi)-\Phi_t^{N,x}(\Pi).
\end{equation*}
The factual path $(q^a,q^b,N^a,N^b)$ is therefore a deterministic measurable functional of $\Pi$; composing with Proposition~\ref{proposition:market_impact_functional} gives a measurable map $\Psi_t$ with $P_t=\Psi_t(\Pi)$ a.s. Under the shared-noise counterfactual construction, the intervened processes and the counterfactual price are built from the same $\Pi$ (and the truncated intervention path), so there is likewise a measurable map $\bar\Psi_t$ with $\bar P_t=\bar\Psi_t(\Pi)$ a.s. Here the intervention path is regarded as fixed, or conditionally fixed once the trader's strategy is specified; if it is randomized, the same statement holds conditionally on that path. Market impact itself is thus a measurable functional of $\Pi$,
\begin{equation*}
\mathrm{MI}_t=\bar P_t-P_t=\bigl(\bar\Psi_t-\Psi_t\bigr)(\Pi),
\end{equation*}
consistently with Equation~\eqref{eq:mi_functional_difference} and, in the ask-side passive/aggressive cases, with Equations \eqref{eq:mi_passive_conditional}--\eqref{eq:mi_active_conditional}.\\

Thus $\Pi$ is the key latent object of the paper.
It is the common source of randomness from which factual and counterfactual worlds are simultaneously defined, but it is not directly observed: the factual filtration $\mathcal F_t$ contains only the aggregated counting paths $(q,N)$, which are strictly coarser than the marked measures $\Pi$. Consequently, the inverse map $(q,N)\mapsto \Pi$ is not pathwise available in general, and the observed data do not identify a single pathwise value of $\mathrm{MI}_t=(\bar\Psi_t-\Psi_t)(\Pi)$ without an additional reconstruction of the latent noise.\\

The remedy, and the technical core of the rest of the paper, is to replace the unknown $\Pi$ by its conditional law given the observation. The programme is the following: characterize the conditional distribution of $\Pi$ given the observed thinned event trajectory on the relevant horizon; sample from this conditional law; and push each draw through the measurable functional $\bar\Psi_t-\Psi_t$. The Monte Carlo distribution obtained this way is, by construction, the conditional law of $\mathrm{MI}_t$ given the realized market history, and yields in particular its conditional mean, variance, and quantiles. The next sections develop the two ingredients required by this programme: the conditional-law characterization (Section~\ref{sec:primer}, with proofs in Appendix~\ref{sec:conditional_measures}) and the resulting event-driven sampling procedure (Section~\ref{sec:quantifying}).

\section{Poisson-thinning framework and coupled trajectories}
\label{sec:primer}

This section fixes the probabilistic object held common in the factual and counterfactual markets of Section~\ref{sec:impact}. We use the Poisson random measure thinning representation of point processes with predictable intensities, see Appendix \ref{appendix:poisson:measures}, to formulate the shared-noise perturbation. Moreover, we state the conditional law of the latent Poisson measure given the observed thinned trajectory.

\subsection{Common randomness coupling for perturbed point process dynamics}

The primitive randomness in the coupling is the marked Poisson measure, not the thinned path. Once $\pi$ is fixed, a perturbation changes the acceptance boundary but not the candidate atoms. This is the common source used to compare the factual and counterfactual trajectories in Section~\ref{sec:impact}. We work in this section in a general setting for our statements, we use strong Poissonian representations. In dimension one, we consider a generic point process
\begin{equation*}
\label{eq:poisson_embedding_one_dimensional}
N_t=\int_0^t\int_{\R_+}\ind_{\{z\le\lambda_s\}}\pi(\dd s,\dd z).
\end{equation*}
The class needed below is
\begin{equation}
\label{eq:path_dependent_thinning}
N_t=\int_0^t\int_{\R_+}\ind_{\{z\le\lambda_s\}}\pi(\dd s,\dd z),
\qquad
\lambda_t=f(N,t),
\end{equation}
where $f:\mathcal D\times\R_+\to\R_+$ is measurable and predictable, or non-anticipative, in the path argument: $f(x,t)$ depends only on the pre-$t$ history of $x$. We assume that Equation \eqref{eq:path_dependent_thinning} admits a unique non-explosive strong solution for the driving $\pi$. For the Hawkes and queue-reactive specifications used later, this is supplied by the usual stability and local boundedness conditions \cite{hawkes1971spectra,bremaud1996stability,bacry2015hawkes,huang2015simulating}.

\begin{proposition}[Common-noise perturbation]
    \label{prop:shocked}
    Let $N^o$ be a deterministic non-decreasing càdlàg integer-valued process with $N^o_0=0$. If the perturbed equation is strongly well posed on the same probability space, the common-noise perturbed trajectory is the solution $(\widebar N,\widebar\lambda)$ of
    \begin{equation}
    \label{eq:common_noise_perturbed_count}
    \widebar N_t
    =
    \int_0^t\int_{\R_+}\ind_{\{z\le\widebar\lambda_s\}}\pi(\dd s,\dd z)
    +N^o_t,
    \qquad
    \widebar\lambda_t=f(\widebar N,t).
    \end{equation}
\end{proposition}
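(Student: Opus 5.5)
The statement is largely a definitional consequence of the thinning representation. The real content is twofold: the perturbed object in \eqref{eq:common_noise_perturbed_count} is well defined on the original probability space, and it is the right "same noise, different acceptance region" coupling. My plan is to build the solution constructively, event by event, from the atoms of the fixed measure $\pi$ and the jump times of $N^o$. I would then identify it with the unique strong solution granted by the hypothesis, and finally check that it has the intended compensator, i.e.\ that it is a legitimate version of the perturbed dynamics.

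\textbf{Step 1 (pathwise construction).} Let $0<\tau_1<\tau_2<\cdots$ be the jump times of $N^o$; they are deterministic and locally finite because $N^o$ is integer-valued and càdlàg. I would define $\widebar N$ recursively between successive "event" times.
\begin{itemize}
\item On an interval $[T_k,T_{k+1})$ the path $\widebar N$ is frozen at its value after $T_k$. Non-anticipativity of $f$ then makes $s\mapsto \widebar\lambda_s=f(\widebar N,s)$ determined by the already-constructed history.
\item The next event time is $T_{k+1}=\min(\tau_{\text{next}},\sigma_{k+1})$, where $\sigma_{k+1}$ is the first atom $(s,z)$ of $\pi$ with $s>T_k$ and $z\le \widebar\lambda_s$.
\item At $T_{k+1}$ we add a jump of $N^o$ if $T_{k+1}$ is a jump time of $N^o$, and a jump of $\pi$ if $T_{k+1}=\sigma_{k+1}$.
\end{itemize}
Since $\pi$ has a.s.\ no atoms at the countably many deterministic times $\tau_j$, simultaneous jumps a.s.\ do not occur. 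By construction, the resulting path satisfies \eqref{eq:common_noise_perturbed_count} up to $\lim_k T_k$.

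\textbf{Step 2 (global existence and uniqueness).} Non-explosion, i.e.\ $T_k\to\infty$, is the main obstacle. An a priori argument would need local boundedness of $\widebar\lambda$ along the perturbed path, for instance via a finite-horizon bound on $f$ of the type used later in the paper, since finitely many extra deterministic jumps on $[0,T]$ cannot by themselves cause explosion. Here, however, I would instead invoke the standing hypothesis that the perturbed equation is strongly well posed. Any solution must coincide with the recursive construction, by induction on $k$ using non-anticipativity. Hence the construction is the unique strong solution, is non-explosive, and is measurable with respect to $\pi$ and the deterministic $N^o$.

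\textbf{Step 3 (intensity identification).} With respect to $\mathcal G_t=\sigma(\pi|_{[0,t]\times\R_+})$, the process $\widebar\lambda$ is predictable, because $f$ only uses the pre-$t$ history. I would then apply the standard compensator formula for Poisson random measures, $\E\int H(s,z)\,\pi(\dd s,\dd z)=\E\int H(s,z)\,\dd s\,\dd z$ for predictable $H$, with $H=\ind_{\{z\le\widebar\lambda_s\}}$ localized by the event times $T_k$. This shows that $\widebar N-N^o-\int_0^\cdot\widebar\lambda_s\,\dd s$ is a local martingale. So $\widebar N$ is the point process with intensity $f(\widebar N,\cdot)$ plus the exogenous insertions $N^o$, driven by the same atoms of $\pi$ as $N$. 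Only the acceptance region $\{z\le \widebar\lambda_s\}$ differs from $\{z\le\lambda_s\}$, which is exactly the common-noise coupling asserted in Proposition~\ref{prop:shocked}.
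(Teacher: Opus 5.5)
Your argument is correct, but it does more than the paper: the paper gives no proof of this proposition. It states it as what amounts to a definition of the common-noise coupling. Existence, uniqueness and non-explosion of the perturbed solution are simply assumed (``if the perturbed equation is strongly well posed''). The only thing used afterwards is that factual and perturbed paths are deterministic functionals of the same configuration $\pi$.

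Your route supplies the justification the paper leaves implicit:
\begin{itemize}
\item The event-by-event construction from the atoms of $\pi$ and the jumps of $N^o$ determines any solution pathwise, by induction over event times. This is the same induction pattern the paper later uses in the proof of Lemma~\ref{lem:stability}. It shows that uniqueness comes for free once $f(x,s)$ depends only on $x_{|[0,s)}$, so the well-posedness hypothesis is really needed only for global existence and non-explosion.
\item Your Step 3 checks that $\widebar N-N^o$ has $(\mathcal G_t)$-intensity $f(\widebar N,\cdot)$. The statement does not claim this, but it is what makes the object a legitimate version of the perturbed dynamics rather than just a solution of a pathwise equation.
\end{itemize}

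One ordering detail. In Step 1 the ``first accepted atom'' after $T_k$ need not be attained unless the frozen intensity is locally bounded, since atoms could accumulate from the right under its graph. So the recursion is only well defined once you know a c\`adl\`ag integer-valued solution exists. It is cleaner to start from the solution granted by the hypothesis and show by induction that it coincides with your recursion, which is essentially what your Step 2 already does.
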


\begin{example}[Perturbed Hawkes process]
    Figure~\ref{fig:shocked_process} shows the process from Equation \eqref{eq:common_noise_perturbed_count} after adding the impulse $N^o_t=\1_{\{t\ge1\}}$ to a Hawkes trajectory. The same atoms are reused; only the intensity, hence the accepted hypograph, is recomputed from the perturbed history.
\end{example}

\begin{figure}[!htbp]
    \centering
    \includegraphics[width=0.9\textwidth]{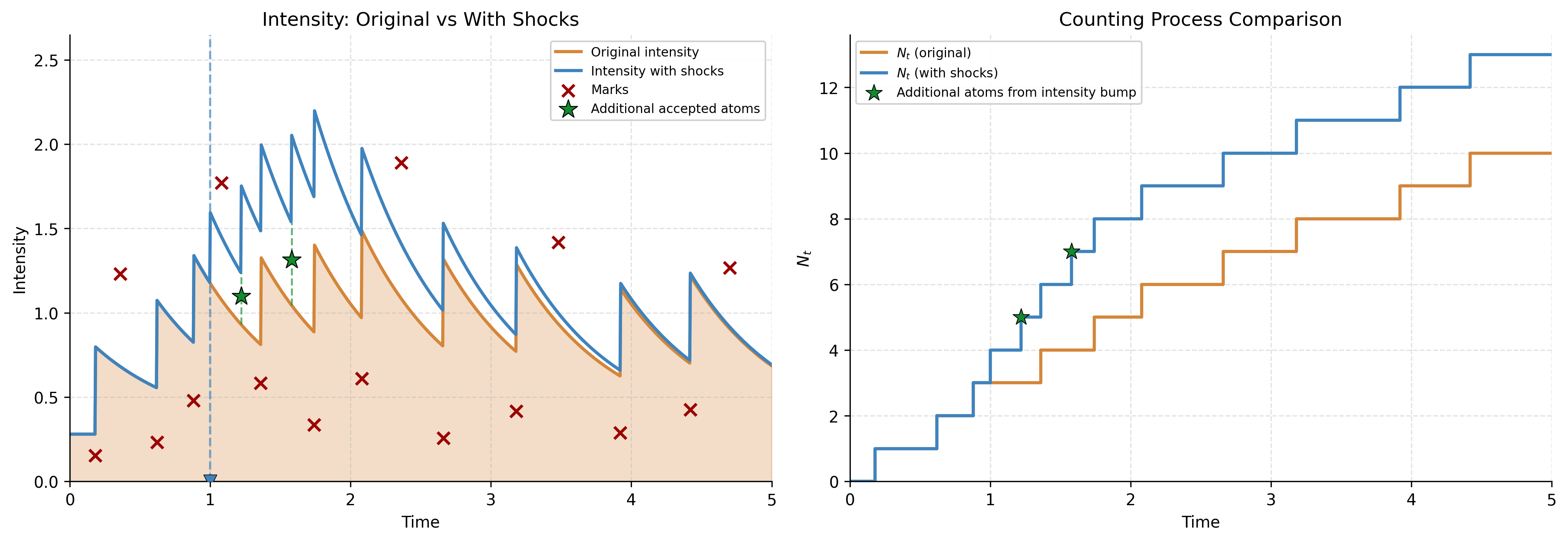}
    \caption{Impulse perturbation of a Hawkes process under shared Poisson noise.}
    \label{fig:shocked_process}
\end{figure}

Thus factual and perturbed paths are deterministic functionals of the same marked configuration. In applications this configuration is latent: the observed path reveals accepted jump times, but not the rejected atoms nor the accepted vertical marks. Counterfactual simulation therefore requires the conditional law of $\pi$ given the thinned trajectory.

\subsection{Conditional distribution of the underlying marked Poisson measure}
\label{sec:cond_distribution}

Fix $T>0$ and $d\in\mathbb N^*$. Set
\begin{equation*}
E:=[0,T]\times\{1,\ldots,d\}\times\R_+,
\qquad
\nu(\dd s,\dd k,\dd z):=\dd s\otimes\eta(\dd k)\otimes\dd z,
\end{equation*}
where $\eta(\{k\})=1$ for $1\le k\le d$. Let $e_k$ be the $k$-th canonical vector of $\R^d$.

For every counting measure $\mu\in\mathbb M(E)$ (see Appendix \ref{def:counting_measure} for the precise definition), assume a measurable strong solution
\begin{equation*}
\mu\longmapsto(\lambda^\mu,N^\mu)
\end{equation*}
with $N^\mu\in\calD([0,T],\R^d)$ and càglàd non-negative $\lambda^\mu$, satisfying
\begin{equation}
\label{eq:multivariate_thinning_functional}
N_t^\mu
=
\int_E e_k\,\ind_{\{s\le t\}}\ind_{\{z\le\lambda_s^{\mu,k}\}}\,\mu(\dd s,\dd k,\dd z),
\qquad
\lambda_t^\mu=f(N^\mu,t),
\end{equation}
for a measurable predictable, non-anticipative functional $f$. We also impose the finite-horizon bound
\begin{equation}
\label{eq:finite_horizon_non_explosion}
\forall\mu\in\mathbb M(E),\qquad
\sup_{t\in[0,T]}\sup_{1\le k\le d}\lambda_t^{\mu,k}<\infty.
\end{equation}
In the queue and Hawkes models of the paper, this abstract bound is replaced by the corresponding stability and boundedness assumptions from Sections~\ref{sec:impact} and~\ref{market_impact_evaluation}.\\

Let $\pi\sim\Poisson(\nu)$ on $E$, and write $(\lambda^\pi,N^\pi)$ for Equation \eqref{eq:multivariate_thinning_functional}. Define
\begin{equation*}
A(\mu):=\{(s,k,z)\in E:\ z\le\lambda_s^{\mu,k}\},
\qquad
B(\mu):=\overline{A(\mu)}.
\end{equation*}
The observed $\sigma$-field over $[0,t]$ for $t \leq T$ is
\begin{equation*}
\mathcal F_t^{N^\pi}:=\sigma((N_s^\pi)_{0\le s\le t}).
\end{equation*}
By Equation \eqref{eq:finite_horizon_non_explosion}, $A(\pi)$ has finite $\nu$-measure. Since $\lambda^\pi$ is càglàd, $\nu(B(\pi)\setminus A(\pi))=0$; hence $B(\pi)$ may replace $A(\pi)$ in stopping set arguments without adding Poisson atoms a.s.

\begin{restatable}{theorem}{conditionallaw}
    \label{thm:conditionallaw}
    The restrictions $\pi_{|B(\pi)}$ and $\pi_{|B(\pi)^c}$ are conditionally independent given $\mathcal F_T^{N^\pi}$. Moreover,
    \begin{equation}
    \label{eq:conditional_residual_poisson}
        \mathcal L\bigl(\pi_{|B(\pi)^c}\mid\mathcal F_T^{N^\pi}\bigr)
        =
        \Poisson\bigl(\ind_{B(\pi)^c}\nu\bigr).
    \end{equation}
    Finally, write the accepted atoms as
    \begin{equation}
    \label{eq:accepted_atoms_decomposition}
        \pi_{|A(\pi)}
        =
        \sum_{n\in I}\delta_{(T_n,k_n,Z_n)},
    \end{equation}
    where $I$ is finite a.s. Then $\{(T_n,k_n)\}_{n\in I}$ is $\mathcal F_T^{N^\pi}$-measurable, and conditionally on $\mathcal F_T^{N^\pi}$ we have
    \begin{equation}
    \label{eq:conditional_uniform_marks}
        \mathcal L\bigl((Z_n)_{n\in I}\mid\mathcal F_T^{N^\pi}\bigr)
        =
        \bigotimes_{n\in I}
        \mathcal U\bigl([0,\lambda_{T_n}^{\pi,k_n}]\bigr).
    \end{equation}
\end{restatable}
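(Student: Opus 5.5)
We prove Theorem~\ref{thm:conditionallaw} by a joint-law computation: for bounded measurable test functionals $F$ of the observed path, $G$ of $\pi_{|A(\pi)}$ and $H$ of $\pi_{|B(\pi)^c}$, we show that $\E[F(N^\pi)G(\pi_{|A(\pi)})H(\pi_{|B(\pi)^c})]$ factorizes as claimed. The key structural fact is a self-consistency property of the strong solution. For any counting measure $\mu$, the solution map in \eqref{eq:multivariate_thinning_functional} depends on $\mu$ only through $\mu_{|A(\mu)}$, and in fact only through the times and types of those atoms. Concretely, if $\mu'=\mu_{|A(\mu)}+\rho$ with $\rho$ supported in $A(\mu)^c$, and the vertical marks of atoms of $\mu_{|A(\mu)}$ are moved within $[0,\lambda^{\mu,k}_s]$, then $N^{\mu'}=N^\mu$ and $\lambda^{\mu'}=\lambda^\mu$. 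We prove this by uniqueness of the strong solution, proceeding jump by jump: $\lambda$ is predictable and non-anticipative, the jumps are finite on $[0,T]$ by \eqref{eq:finite_horizon_non_explosion}, and one inducts over successive accepted jump times.

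\textbf{Step 1: the finite-dimensional density.} Since $\lambda^\mu$ is bounded on $[0,T]$, we work on the event that all atoms in a band $z\le M$ are finite in number. The set $A(\pi)$ has finite $\nu$-mass, so $I$ is finite a.s. We then compute the law of the pair (accepted configuration, rest) by summing over $n=|I|$. The restriction of $\pi$ to $\{z\le M\}$ is a finite Poisson configuration with density $e^{-\nu(\cdot)}$ with respect to the symmetrized Lebesgue measure. Decomposing it into points below and above the graph of $\lambda$, and using the self-consistency fact, the event $\{\pi_{|A(\pi)}=\sum_{n}\delta_{(t_n,k_n,z_n)}\}$ becomes a deterministic condition on the points themselves: namely $z_n\le\lambda^{\sigma}_{t_n}$ for $\sigma=\sum\delta_{(t_n,k_n,z_n)}$, and no other atom lies in $A(\sigma)$.

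This yields the joint density
\begin{equation*}
\prod_{n}\ind_{\{z_n\le \lambda^{\sigma,k_n}_{t_n}\}}\, e^{-\nu(A(\sigma))}\times\bigl[\text{Poisson law of the rest on }A(\sigma)^c\bigr],
\end{equation*}
using the restriction property of Poisson measures on disjoint regions (Appendix~\ref{appendix:poisson:measures}). Here $\lambda^\sigma$ depends on $\sigma$ only through $(t_n,k_n)$. Letting $M\to\infty$ removes the truncation.

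\textbf{Step 2: reading off the conditional laws.} The density factorizes into three parts:
\begin{itemize}
\item a function of $(t_n,k_n)_n$ alone, namely $\prod_n\lambda^{k_n}_{t_n}e^{-\nu(A)}$ after integrating the marks, which is the usual likelihood of $N^\pi$;
\item the product of uniform densities $\ind_{\{z_n\le\lambda\}}/\lambda^{k_n}_{t_n}$ on the marks;
\item the law $\Poisson(\ind_{A^c}\nu)$ of the remaining configuration, with $A$ determined by $(t_n,k_n)$.
\end{itemize}
Since $\{(T_n,k_n)\}$ generates $\mathcal F_T^{N^\pi}$ (the path $N^\pi$ is exactly the counting path of these atoms), this gives three things at once: the measurability claim, the conditional independence, \eqref{eq:conditional_uniform_marks}, and \eqref{eq:conditional_residual_poisson} with $A^c$ in place of $B^c$. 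Because $\nu(B\setminus A)=0$ and $B(\pi)\setminus A(\pi)$ is $\mathcal F_T^{N^\pi}$-measurable, a Poisson measure with intensity $\ind_{A^c}\nu$ puts no atoms on $B\setminus A$ a.s. Hence $\pi_{|B^c}$ and $\pi_{|A^c}$ coincide a.s., and the intensity $\ind_{B^c}\nu$ is unchanged.

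\textbf{Main obstacle.} The hard part is the rigorous justification that the event "$\sigma$ is exactly the accepted part of $\pi$" equals the pointwise condition in Step 1. This requires the self-consistency property for an arbitrary measurable strong solution, together with measurability of $\mu\mapsto A(\mu)$ as a random set. It is also needed when the rejected configuration is infinite: $A^c$ is unbounded in $z$, so there one must work with the Laplace functional instead of densities.

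To handle this, first compute $\E[F\,G\,e^{-\int h\,\dd\pi_{|A^c}}]$ via the truncation $z\le M$ and monotone convergence, using Mecke/Palm-type disintegration over the finitely many accepted atoms. An alternative check is a stopping-set argument: $B(\pi)$ is a stopping set in time, by predictability, and the strong Markov property of Poisson measures at stopping sets gives the residual law.
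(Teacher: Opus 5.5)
Your argument is correct, and it takes a genuinely different route from the paper.

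\textbf{The paper's route.} The paper never writes a density. It proves the locality Lemma~\ref{lem:stability} (stated with $B(\mu)$) and deduces that $B(\pi)$ is a stopping set. It then applies the strong Markov property of Poisson processes at stopping sets (Theorem~\ref{thm:strongmarkov}) to get a replacement identity, which yields conditional independence and the residual law $\Poisson(\ind_{B(\pi)^c}\nu)$ given $\sigma(B(\pi))$. The conditioning is lifted to $\mathcal F_T^{N^\pi}$ through the sandwich $\sigma(B(\pi))\subset\mathcal F_T^{N^\pi}\subset\sigma(\pi_{|B(\pi)})$. The uniform marks are proved separately, by an iterated first-jump thinning argument at the stopping sets $B_n(\pi)$.

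\textbf{Your route.} You use locality only once, in the sharper form with $A(\mu)$ and with marks moved inside $[0,\lambda^{\mu,k}_s]$. Its role is to turn the event $\{\pi_{|A(\pi)}=\sigma\}$ into the deterministic condition ``$\sigma$ is self-consistent and $(\pi-\sigma)(A(\sigma))=0$''. A single likelihood computation then gives all the claims at once: conditional independence, the residual Poisson law, the uniform marks, and as a bonus the explicit law $\frac{1}{m!}\prod_n\lambda^{k_n}_{t_n}e^{-\nu(A)}$ of $N^\pi$.

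\textbf{Trade-offs.} Your approach gives a self-contained, elementary proof. It needs no Fell topology, no random-closed-set measurability and no stopping-set theorem, and it derives the marks more transparently than the paper's one-step iteration. The paper's replacement argument isolates the structural reason, locality implying the stopping-set property. It would transfer to other locally determined random domains without computing any likelihood.

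\textbf{Points to tighten.}
\begin{enumerate}
\item Your truncation is misstated. Finitely many atoms in the band $\{z\le M\}$ over $[0,T]$ holds for every counting measure, so it restricts nothing. The truncation must be on the event $\{\sup_{t,k}\lambda^{\pi,k}_t\le M\}$, on which $\lambda^\pi=\lambda^{\pi_{|E_M}}$ by your self-consistency property.
\item The factor $e^{-\nu(A(\sigma))}\times\Poisson(\ind_{A(\sigma)^c}\nu)$ uses the restriction property only for the \emph{fixed} set $A(\sigma)$, never for the random set $A(\pi)$. The multivariate Mecke formula
$\E\big[\sum^{\neq}_{x_1,\dots,x_m\in\pi}\Psi(x;\pi)\big]=\int\E\big[\Psi(x;\pi+\textstyle\sum_i\delta_{x_i})\big]\,\nu^{\otimes m}(\dd x)$ makes this exact. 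Apply it with $\Psi$ equal to $G(\sigma_x)H(\mu-\sigma_x)$ times the indicator of your self-consistency-and-no-other-atom condition, where only one term of the sum is nonzero. This handles the infinite rejected configuration with no truncation at all.
\item $I$ is finite for every $\mu$ because $A(\mu)$ lies in a compact set and $\mu$ is locally finite. The fact that $\nu(A(\pi))<\infty$ is not by itself the right reason.
\end{enumerate}
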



The simulation content of Theorem~\ref{thm:conditionallaw} is explicit. Given $N^\pi$, keep the observed jump times and components, resample each accepted mark independently and uniformly on its admissible interval $[0,\lambda_{T_n}^{\pi,k_n}]$, and add an independent Poisson cloud on $B(\pi)^c$ with intensity $\ind_{B(\pi)^c}\nu$. Solving the perturbed dynamics on this reconstructed configuration gives a counterfactual path conditionally compatible with the factual one.\\

For deterministic Borel set $C$, Poisson scattering gives independent restrictions on $C$ and $C^c$. Theorem~\ref{thm:conditionallaw} is subtler because $B(\pi)$ is random and depends on $\pi$ through the solution of the thinning equation. The required additional property is locality: changing atoms outside $B(\pi)$ does not change the accepted path, hence does not change $B(\pi)$. Appendix~\ref{appendix:section_three_tree} proves this as a stopping set property and then applies the strong Markov theorem for Poisson point processes \cite[Theorem~12.1.3]{RPG}.

\begin{example}[Failure of conditional independence without locality]
    Let $\pi$ be a Poisson random measure on $[0,1]^2$ with Lebesgue intensity, let $M:=\pi([0,1]^2)$, and, on $\{M\ge2\}$, order the atoms by decreasing ordinate $Z_{(1)}>Z_{(2)}>\cdots>Z_{(M)}$, with corresponding abscissas $X_{(i)}$. Define
    \begin{equation*}
    A(\pi):=
    \begin{cases}
        \{(x,y)\in[0,1]^2:\ x=X_{(2)}\}, & M\ge2,\\
        \emptyset, & M<2.
    \end{cases}
    \end{equation*}
    This set is selected by a global ranking. For instance, $\{A(\pi)\subset[0,a]\times[0,1]\}=\{X_{(2)}\le a\}$ on $\{M\ge2\}$, which cannot be decided from the atoms inside $[0,a]\times[0,1]$ alone.\\

    \begin{figure}[!htbp]
        \centering
        \includegraphics[width=0.85\textwidth]{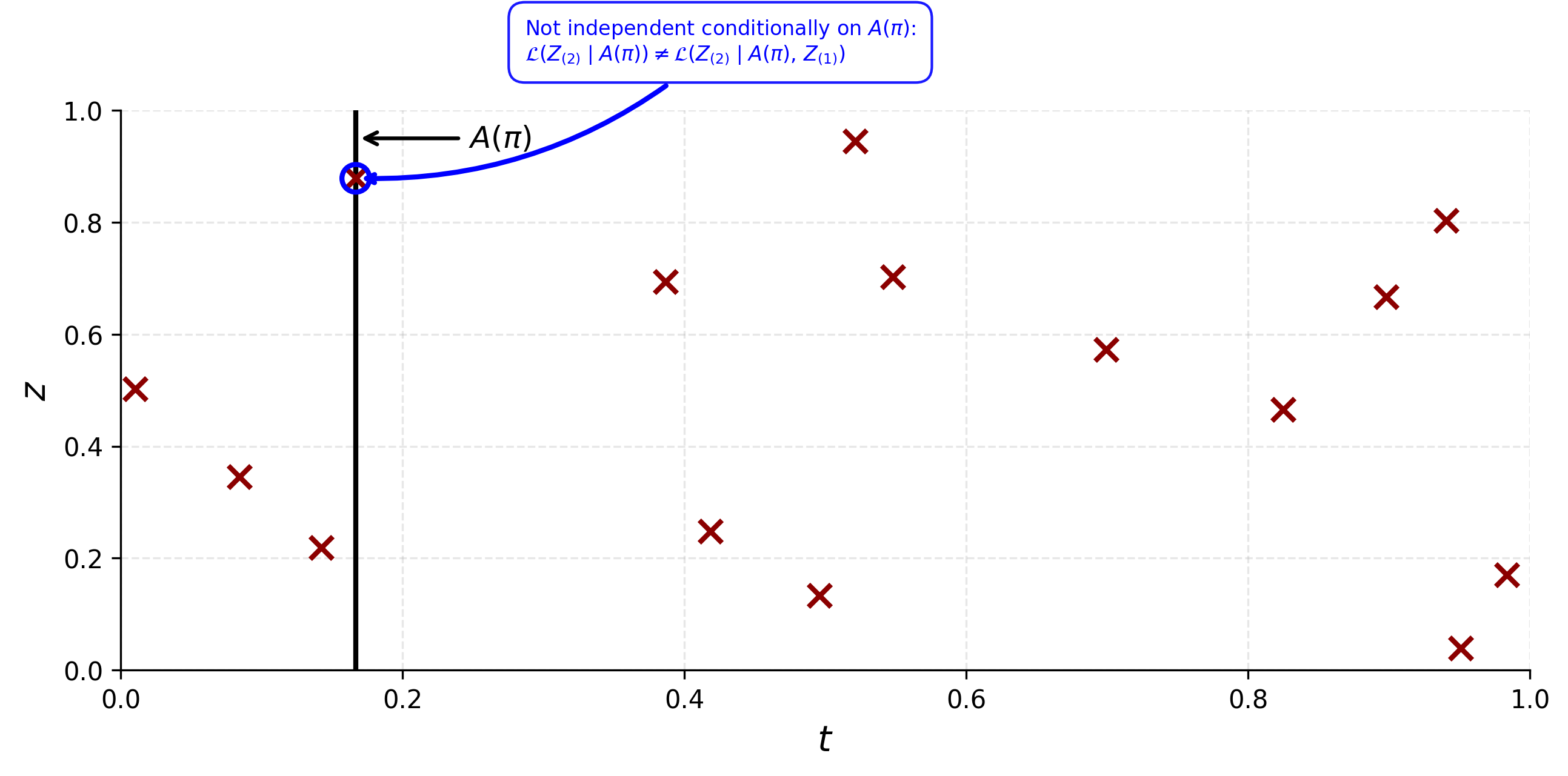}
        \caption{Visualizing the breakdown of independence due to global ordering.}
        \label{fig:independence_breakdown}
    \end{figure}

    Conditional independence fails: given $M=m\ge2$, the ordinate in $A(\pi)$ is the second largest of $m$ uniforms. If the complement reveals the largest ordinate $Z_{(1)}$, the same ordinate becomes the largest of $m-1$ uniforms on $[0,Z_{(1)}]$. Thus the conditional law inside $A(\pi)$ still depends on information carried by $A(\pi)^c$.
\end{example}


The conditional law Theorem~\ref{thm:conditionallaw} provides the density of the conditional Poisson measure and can be formulated in Laplace form as follows.
\begin{corollary}
    For every positive measurable $h:E\to\R_+$,
    \begin{equation*}
        \E \Big[ \exp \Big( -\int h \, \dd \pi \Big) \mid \calF^{N^{\pi}}_T \Big]
        =
        \exp \Big(
            -\int_{B(\pi)^c} (1 - e^{-h}) \, \dd \nu
        \Big)
        \prod_{n \in I} \frac{1}{\lambda_{T_n}^{\pi, k_n}} \int_0^{\lambda_{T_n}^{\pi, k_n}} e^{-h(T_n, k_n, z)} \, \dd z,
    \end{equation*}
    with the empty product equal to one.
\end{corollary}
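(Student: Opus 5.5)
The corollary follows from Theorem~\ref{thm:conditionallaw} by splitting the integral along the random partition $E=B(\pi)\cup B(\pi)^c$ and treating each piece separately. Since $\nu(B(\pi)\setminus A(\pi))=0$ and $\pi$ a.s.\ charges no $\nu$-null set determined in a way compatible with the stopping set structure, we have a.s.\ $\pi_{|B(\pi)}=\pi_{|A(\pi)}=\sum_{n\in I}\delta_{(T_n,k_n,Z_n)}$. This gives
\begin{equation*}
\exp\Big(-\int h\,\dd\pi\Big)
=
\exp\Big(-\int h\,\dd\pi_{|B(\pi)^c}\Big)\prod_{n\in I}e^{-h(T_n,k_n,Z_n)} .
\end{equation*}
I would state the null-set identification once, citing the remark preceding the theorem. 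One might hesitate over atoms lying exactly on the graph boundary, but these fall in $B(\pi)\setminus A(\pi)$, which is $\nu$-null and therefore carries no atom a.s. The real content is in the theorem.

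Next I use the conditional independence of $\pi_{|B(\pi)}$ and $\pi_{|B(\pi)^c}$ given $\calF_T^{N^\pi}$. Both factors above are non-negative and bounded by $1$, so the conditional expectation of the product equals the product of the conditional expectations. No integrability issue arises.

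For the first factor, Equation~\eqref{eq:conditional_residual_poisson} identifies the conditional law as Poisson with intensity $\ind_{B(\pi)^c}\nu$. The set $B(\pi)$ is $\calF_T^{N^\pi}$-measurable because $\lambda^\pi=f(N^\pi,\cdot)$, so the intensity is a fixed measure given the conditioning. The standard Laplace functional of a Poisson random measure, applied on a regular version of the conditional law, then gives $\exp(-\int_{B(\pi)^c}(1-e^{-h})\,\dd\nu)$. This integral may be infinite, in which case the value is $0$, consistent on both sides.

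For the second factor, $I$ and $(T_n,k_n)_{n\in I}$ are $\calF_T^{N^\pi}$-measurable, and the $Z_n$ are conditionally independent with laws $\mathcal U([0,\lambda^{\pi,k_n}_{T_n}])$ by Equation~\eqref{eq:conditional_uniform_marks}. Hence the conditional expectation of $\prod_n e^{-h(T_n,k_n,Z_n)}$ is $\prod_n \lambda_{T_n}^{-1}\int_0^{\lambda_{T_n}}e^{-h(T_n,k_n,z)}\,\dd z$, with the empty product equal to $1$ when $I=\emptyset$. The levels $\lambda^{\pi,k_n}_{T_n}$ are a.s.\ positive, since an atom was accepted below them.

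The only delicate point is justifying the computation with conditional laws that depend on the conditioning variable. I would handle it by working with regular conditional distributions: on a version where the theorem's laws hold pointwise in $\omega$, the computation is purely deterministic.
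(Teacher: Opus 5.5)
Your proposal is correct and follows the route the paper intends: the corollary has no separate proof there and is presented as the Laplace-functional form of Theorem~\ref{thm:conditionallaw}, obtained as you do by splitting $\int h\,\dd\pi$ over $B(\pi)$ and $B(\pi)^c$, factorizing by conditional independence, and applying the Poisson Laplace functional for $\ind_{B(\pi)^c}\nu$ and the uniform law of the accepted marks. One caution on your null-set step: $\nu(B(\pi)\setminus A(\pi))=0$ does not by itself imply that $\pi$ puts no atom there, because this set is random (the set of atoms of $\pi$ is itself $\nu$-null). The valid reason is that $B(\pi)\setminus A(\pi)$ consists of vertical segments $\{\lambda^{\pi,k}_s<z\le\lambda^{\pi,k}_{s+}\}$, located either at jump times of $N^\pi$, where a.s.\ only the accepted atom sits, or at times determined by the history up to the previous jump, which the remaining Poisson cloud a.s.\ avoids.
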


\section{Estimation of market impact}
\label{sec:quantifying}

This section turns the conditional law results of Section~\ref{sec:cond_distribution} into an operational market impact workflow. We work on a fixed side of the book, with one sided notation (representing either the bid or ask events) $(L,C,N,q)$ for the factual queue and $(\wb L,\wb C,\wb N,\wb q)$ for the intervened queue. The objective is to reconstruct counterfactual paths under the same latent Poisson noise and then evaluate the impact functionals introduced in Section~\ref{sec:impact}. In other words, we seek to simulate the counterfactual, intervened dynamics $(\wb L,\wb C,\wb N,\wb q)$ conditionally on the observed no-trade realization $(L, C, N, q)$. From this, one can evaluate the market impact of candidate trading strategies, each replayed against the same observed realization.\\

We proceed in two steps. First, we present the conditional reconstruction mechanism itself for the bid and ask side independently, we focus on the ask side: this is the event-driven part of the method and only uses the thinning structure of the order flow components. We then combine this reconstruction with the price model of Section~\ref{sec:impact}, where market orders are Hawkes and the difference between the limit and cancel intensity functions is affine in $q$, in order to compute counterfactual price trajectories, and thus estimate passive and aggressive impacts. The figures in Subsection~\ref{subsec:conditional_queue_coupling} illustrate how the counterfactual queue is generated, while Subsections~\ref{subsec:first_order_impact} and~\ref{subsec:aggressive_market_impact} explain how these conditional paths are converted into impact estimates  .

\subsection{Conditional reconstruction of counterfactual queue paths}
\label{subsec:conditional_sim}
\label{sec:lob_sim}

Fix a horizon $T>0$ and consider the one-sided queue model,
\begin{equation*}
q_t=q_0+L_t-C_t-N_t,
\qquad
\wb q_t=q_0+\wb L_t-\wb C_t-\wb N_t+L_t^o,
\end{equation*}
where $L^o$ is the passive intervention (for aggressive intervention, replace $+L_t^o$ by $-N_t^o$). We condition on
\begin{equation*}
\mathcal F_T=\sigma\bigl((L_t,C_t,N_t,q_t),\,0\le t\le T\bigr),
\end{equation*}
and seek exact samples from $\mathcal L((\wb L,\wb C,\wb N,\wb q)\mid \mathcal F_T)$.

\begin{algorithm}[!htbp]
\caption{Conditional simulation of $(\wb L,\wb C,\wb N,\wb q)$ given $\mathcal F_T$}
\label{algo:conditional_simulation}
\KwIn{Observed trajectory $(L,C,N,q)$ on $[0,T]$; jump times $(T_j^x)_{j\ge1}$ for $x\in\{L,C,N\}$; intensities $\lambda^L,\lambda^C,\lambda^N$; intervention path $L^o$ (set $L^o\equiv 0$ if there is no passive intervention).}
\KwOut{One conditional draw of $(\wb L,\wb C,\wb N,\wb q)$ on $[0,T]$.}
Set $t\leftarrow 0$, $\wb L_0\leftarrow 0$, $\wb C_0\leftarrow 0$, $\wb N_0\leftarrow 0$, and $\wb q_0\leftarrow q_0$\;
For each $x\in\{L,C,N\}$ and each jump time $T_j^x$, draw and store $U_j^x\sim\mathcal U([0,1])$\;
\While{$t<T$}{For each $x\in\{L,C,N\}$, draw $\tau^x\sim\mathrm{Exp}((\lambda^x(\wb q_t)-\lambda^x(q_t))_+)$, with $\tau^x=+\infty$ if the rate is zero\;
Set $\tau^{L^o}$ to the time from $t$ to the next jump of $L^o$ (or $+\infty$ if no jump remains before $T$)\;
For each $x\in\{L,C,N\}$, let $\theta^x\leftarrow T_{j_x}^x-t$ be the waiting time to the next unprocessed observed jump of type $x$ (or $+\infty$ if none remains)\;
Set $\theta\leftarrow\min\{\theta^L,\theta^C,\theta^N\}$ and let $x^\star$ be the corresponding type\;
Set $\tau\leftarrow \min\{\tau^L,\tau^C,\tau^N,\tau^{L^o},\theta,T-t\}$ and $t\leftarrow t+\tau$\;
\If{$t=T$}{stop the loop\;}
\If{$\tau=\theta$}{
Let $j=j_{x^\star}$\;
\If{$U_j^{x^\star}\lambda^{x^\star}(q_{t-})\le \lambda^{x^\star}(\wb q_{t-})$}{
Increase the corresponding counter $\wb L$, $\wb C$, or $\wb N$ by one at time $t$\;
}
$j_{x^\star}\leftarrow j_{x^\star}+1$\;
}
\ElseIf{$\tau=\tau^{L^o}$}{
Apply the intervention jump through the known path $L^o$\;
}
\Else{
Let $x$ be such that $\tau=\tau^x$ and increase the corresponding counter $\wb L$, $\wb C$, or $\wb N$ by one at time $t$\;
}
Update $\wb q_t$ from $\wb q_t=q_0+\wb L_t-\wb C_t-\wb N_t+L_t^o$\;
}
\Return $(\wb L,\wb C,\wb N,\wb q)$ on $[0,T]$\;
\end{algorithm}

\begin{proposition}
\label{prop:conditional_simulation}

Assume the one-sided queue dynamics are driven by thinning on a common Poisson noise as in Sections~\ref{sec:primer} and~\ref{sec:conditional_measures}. Then Algorithm~\ref{algo:conditional_simulation} generates an exact sample from
\begin{equation*}
\mathcal L\big((\wb L,\wb C,\wb N,\wb q)\mid\mathcal F_T\big).
\end{equation*}
\end{proposition}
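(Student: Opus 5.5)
The proof has two parts. First, Theorem~\ref{thm:conditionallaw} describes the conditional law of the latent noise given $\mathcal F_T$. Second, we check that Algorithm~\ref{algo:conditional_simulation} pushes a draw from that law through the counterfactual thinning map.

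\textbf{Step 1: reduction to a push-forward.} Write the one-sided factual dynamics as a $d=3$ thinning system in the form \eqref{eq:multivariate_thinning_functional}, with components $k\in\{L,C,N\}$ and driving measure $\pi=(\pi^{L},\pi^{C},\pi^{N})$ on $E$. The intensities $\lambda^k_s$ are functionals of the pre-$s$ history of $(L,C,N)$. Since $q$ is a deterministic function of $(L,C,N)$, we have $\mathcal F_T=\mathcal F_T^{N^\pi}$. By Assumption~\ref{assumption:shared_noise} and Proposition~\ref{prop:shocked}, the counterfactual $(\wb L,\wb C,\wb N,\wb q)$ is $\Gamma(\pi)$ for a measurable map $\Gamma$: solve the perturbed thinning equation on the same atoms, with the known intervention path $L^o$ added. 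Hence $\mathcal L(\Gamma(\pi)\mid\mathcal F_T)$ is the push-forward of $\mathcal L(\pi\mid\mathcal F_T)$ under $\Gamma$. It therefore suffices to show two things: the algorithm builds a configuration $\tilde\pi$ whose conditional law given $\mathcal F_T$ is the one in Theorem~\ref{thm:conditionallaw}, and its output equals $\Gamma(\tilde\pi)$.

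\textbf{Step 2: sampling the conditional law.} By Theorem~\ref{thm:conditionallaw}, $\tilde\pi$ consists of two independent pieces.
\begin{itemize}
\item Accepted atoms $(T_j^x,x,Z_j^x)$ with $Z_j^x\sim\mathcal U([0,\lambda^x(q_{T_j^x-})])$, independent across atoms. Setting $Z_j^x=U_j^x\lambda^x(q_{T_j^x-})$ with $U_j^x$ i.i.d.\ uniform on $[0,1]$ realises exactly this law. The algorithm does this in its initial loop.
\item A Poisson cloud with intensity $\1_{B(\pi)^c}\nu$, that is, Lebesgue measure on $\{z>\lambda^x(q_{s-})\}$ for each $x$.
\end{itemize}

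\textbf{Step 3: lazy simulation of the cloud.} The map $\Gamma$ only queries $\tilde\pi$ inside the counterfactual acceptance region $\{z\le\lambda^x(\wb q_{s-})\}$. For each $x$, the unobserved cloud intersected with this region is the strip between $\lambda^x(q_{s-})$ and $\lambda^x(\wb q_{s-})$, when the latter is larger.
\begin{itemize}
\item Atoms of the cloud in this strip are always accepted. Their vertical coordinates are irrelevant, because the intensity is càglàd and jumps are exclusive.
\item Between consecutive events, both $q$ and $\wb q$ are constant. The strip therefore has constant width $(\lambda^x(\wb q_t)-\lambda^x(q_t))_+$, and its first atom arrives after an $\mathrm{Exp}$ time with that rate.
\item By memorylessness and the independence of Poisson restrictions to disjoint regions, redrawing these exponentials after every event, including events at which the strip changes, gives the correct law. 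This is the standard Ogata/Lewis argument, applied to an intensity that is predictable with respect to the filtration generated by $\tilde\pi$ restricted to $[0,t]$. Formally it follows from the strong Markov property for Poisson random measures at the event times, which are stopping times.
\end{itemize}

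Observed jumps are handled as follows. The atom $(T_j^x,x,Z_j^x)$ is accepted in the counterfactual exactly when $Z_j^x\le\lambda^x(\wb q_{T_j^x-})$, which is the test $U_j^x\lambda^x(q_{t-})\le\lambda^x(\wb q_{t-})$ in the algorithm. Intervention jumps are deterministic. Atoms of the cloud lying above $\lambda^x(\wb q)$ are never needed. By induction over the a.s.\ finitely many events on $[0,T]$, the output therefore coincides with $\Gamma(\tilde\pi)$. Finiteness comes from \eqref{eq:finite_horizon_non_explosion} applied to both the factual and the perturbed solutions.

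\textbf{Main obstacle.} The delicate point is Step 3: justifying that lazily sampling the cloud along the counterfactual path reproduces, jointly, the law of $\Gamma(\tilde\pi)$. The strip is random and depends on previously generated atoms, so we need a filtration argument. Define $\mathcal H_t$ as $\mathcal F_T$ together with the marks $(U_j^x)$ and the restriction of the cloud to $[0,t]$. Given $\mathcal F_T$, the cloud is a Poisson random measure with deterministic intensity, so it remains Poisson relative to $\mathcal H_t$. Its compensator on the $\mathcal H$-predictable strip is then the strip's Lebesgue measure. Applying the time-change/thinning characterisation of a counting process via its intensity identifies the law of the added jumps. I would first try to reduce to a direct construction by induction on inter-event intervals, using independence of the cloud on disjoint time slabs.
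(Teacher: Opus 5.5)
Your proposal is correct and follows essentially the same route as the paper's proof in Appendix~\ref{appendix:B}. The paper likewise uses Theorem~\ref{thm:conditionallaw} to replace $\pi$ by a configuration $\pi^\sharp$ built from uniform marks $U_j^x\lambda^x(q_{T_j^x-})$ on the observed atoms plus an independent Poisson cloud on the complement. It then splits the counterfactual counts into recycled observed atoms and residual atoms in the strip $\{\lambda^x(q_{s-})<z\le\lambda^x(\wb q_{s-})\}$, and shows by induction over inspection times that this strip generates exponential clocks with rate $(\lambda^x(\wb q_t)-\lambda^x(q_t))_+$.

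Two of your steps simply make explicit what the paper asserts in a line: the filtration and strong-Markov justification of the lazy sampling, and applying the theorem to the joint three-component system (where the intensities are coupled through $q$) rather than ``componentwise''.
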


The proof is given in Appendix~\ref{appendix:B}. The decomposition of candidate times has a direct interpretation: the clocks $(\tau^L,\tau^C,\tau^N)$ create extra atoms induced by intensity gaps, $(\wb\tau^L,\wb\tau^C,\wb\tau^N)$ recycle observed atoms through conditional thinning, and $\tau^{L^o}$ carries the intervention flow.

\begin{remark}
    For an aggressive intervention, the same construction applies after replacing the intervention clock $\tau^{L^o}$ by the market order clock $\tau^{N^o}$ and updating the queue with $-N^o$. Figure \ref{fig:queue_with_market} illustrates the resulting simulated queue dynamics.
\end{remark}

\begin{remark}
    In practice, one fixes an observed path, calibrates $\lambda^L,\lambda^C,\lambda^N$, and repeats Proposition~\ref{prop:conditional_simulation} to obtain conditional replicas. The procedure is event-driven and linear in the number of simulated jumps, so Monte Carlo replications are straightforward to parallelize. This is the core numerical ingredient used below for impact quantification; the corresponding implementation used for the numerical experiments is available at \githubrepo.
\end{remark}

\subsection{Conditional simulation of the coupled queues}
\label{subsec:conditional_queue_coupling}

We now illustrate the conditional coupling on an interval $[0,T]$, following Algorithm~\ref{algo:conditional_simulation}. Throughout the section, the simulation parameters are:
\begin{itemize}
    \item $\lambda^L(x) = 100-0.275x$,
    \item $\lambda^C(x) = 2+0.125x$,
    \item Market orders are jump times of Hawkes process with $\mu = 1.0$ and $\varphi(x) = 0.065 e^{-0.15x} + 0.2 e^{-0.6x} + 0.325 e^{-2.5x} + 0.65 e^{-10x}$, which corresponds to a near critical regime $\norm{\varphi}_1 = 0.96$,
    \item The simulation horizon $T$ is chosen to be 1.5 minutes, while the metaorder (limit or market) is executed over the first minute of the simulation window.
\end{itemize}
Figure~\ref{fig:queue_with_limit} shows the factual queue (black) together with the counterfactual simulated trajectories (grey) and their average (red), under a passive intervention in which the trader adds limit order volume to the ask queue. Here the trader executes a metaorder uniformly over one minute, of a size accounting for 10\% of the average limit order flow. Figure~\ref{fig:queue_with_market} shows the analogous results under an aggressive intervention, in which the trader consumes ask-side liquidity through market orders. Here too the metaorder is executed uniformly over one minute, of a size accounting for 10\% of the average market order flow. In both cases the observed path $q$ is held fixed, the atoms compatible with $q$ are reused through the conditional marks of Theorem~\ref{thm:conditionallaw}, and only the unrevealed residual Poisson measure is resampled.

\begin{figure}[htbp]
        \centering
        \includegraphics[width=0.85\textwidth]{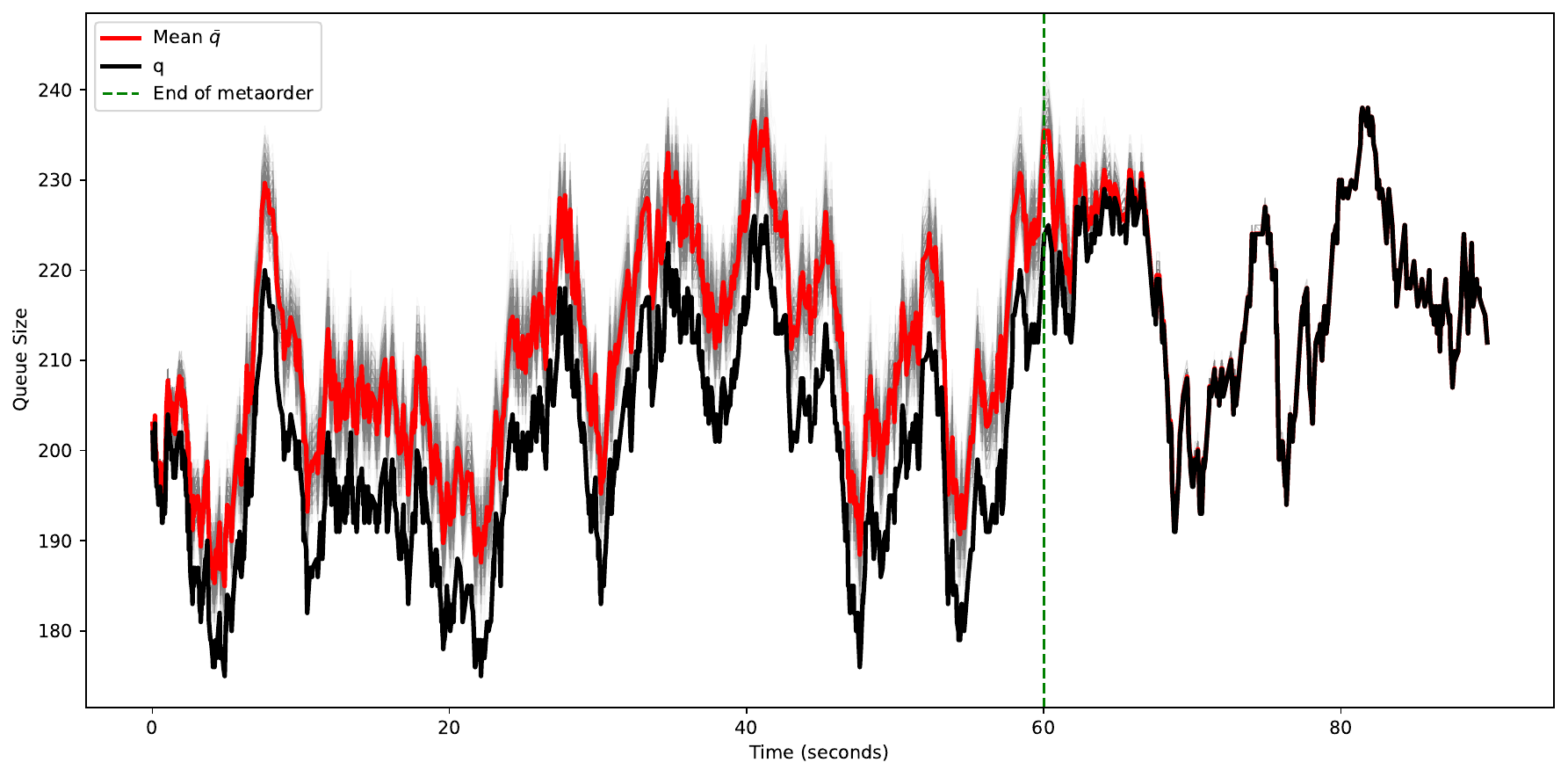}
        \caption{Conditional simulation of $\wb q$ given the observed baseline $q$ in the presence of a limit metaorder.}
        \label{fig:queue_with_limit}
\end{figure}

During the execution window, the passive metaorder in Figure \ref{fig:queue_with_limit} adds displayed liquidity, so the simulated intervened queues and their mean lie above the factual queue. The spread of the grey paths measures the remaining uncertainty on the latent Poisson atoms outside the revealed region. After the end of the metaorder, the drift pulls the perturbed and factual systems back toward the same regime. Once the intervention ceases, the two systems are governed by identical dynamics and, through the conditional coupling, are driven by the same residual randomness; all that distinguishes them is the state inherited at the moment execution stops. Because these dynamics are mean-reverting, that difference in initial condition is gradually forgotten: consumed liquidity is replenished, the imbalance created by the trade dissipates, and the perturbed trajectory relaxes back onto the factual one. The gap between the two paths, which is exactly the counterfactual impact, therefore decays at a rate set by the mean-reversion speed of the system, rather than persisting indefinitely. This relaxation is the microstructural signature of market resilience: the order book absorbs the metaorder and, left to its own dynamics, returns to equilibrium.\\

For the aggressive case, the counterfactual queue dynamics are given by
\begin{equation*}
\wb q_t = q_0 + \wb L_t - \wb C_t - \wb N_t - N^o_t,
\end{equation*}
with $N^o$ the aggressive metaorder flow. The same conditional simulation machinery applies after replacing the intervention clock, exactly as in Proposition~\ref{prop:conditional_simulation}.\\

\begin{figure}[!htbp]
        \centering
        \includegraphics[width=0.85\textwidth]{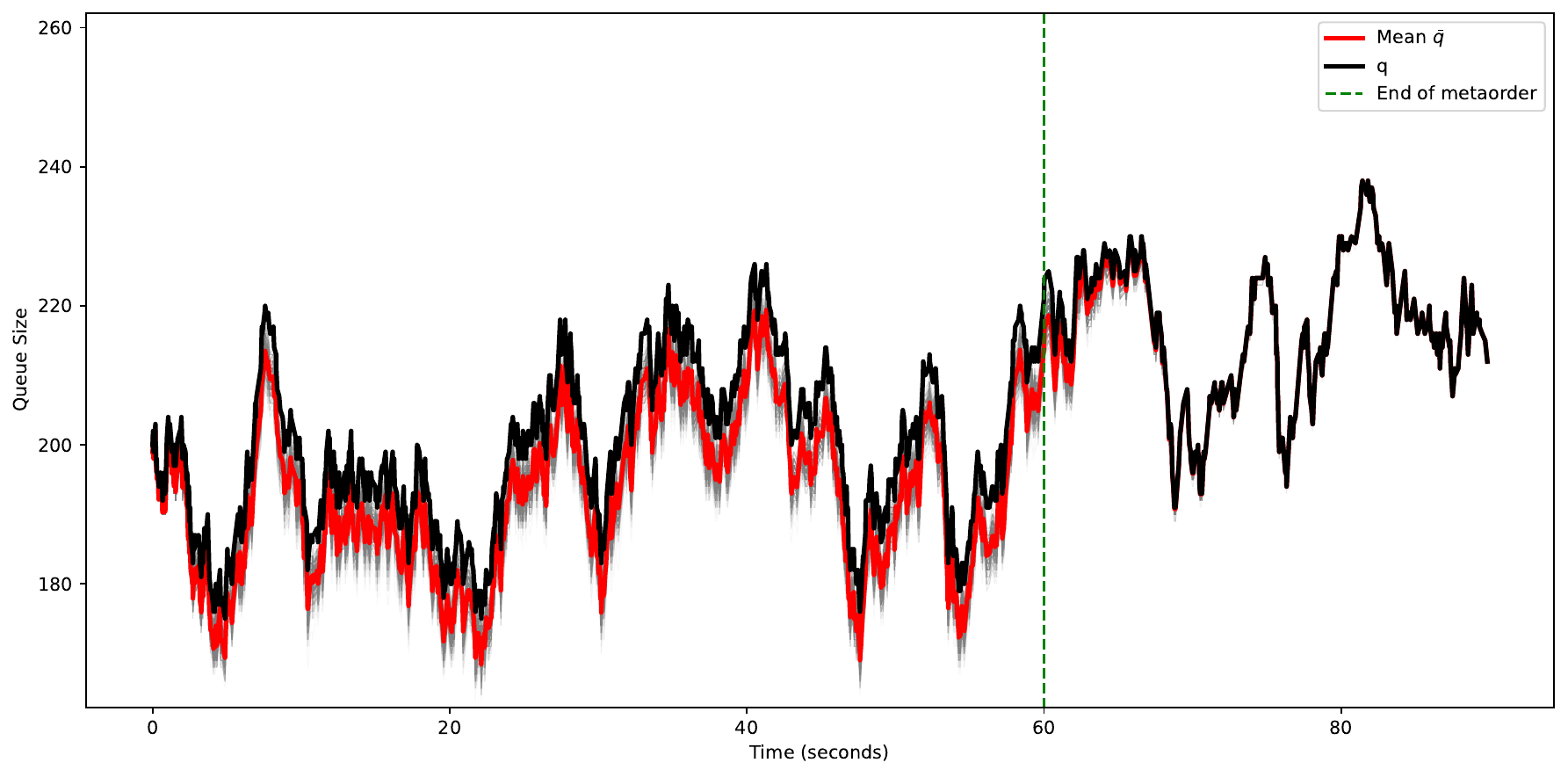}
        \caption{Conditional simulation of $\wb q$ given the observed baseline $q$ in the presence of a market metaorder.}
        \label{fig:queue_with_market}
    \end{figure}

In Figure~\ref{fig:queue_with_market}, the sign is reversed: the aggressive metaorder removes queue volume, so the conditional mean of the perturbed queue lies below the factual trajectory during most of the execution window. The narrowing of the trajectories after the vertical line has the same interpretation as in the passive case: once the intervention stops, the remaining discrepancy is carried forward only by the endogenous queue dynamics and the common future noise.

\begin{remark}
The fast reconvergence after the trading window is consistent with the coupling logic of Section~\ref{sec:impact}. First, monotonicity results from \citep{youssef2024passiveimpact} imply pathwise ordering during intervention ($\wb q\ge q$ for passive buy pressure, $\wb q\le q$ for aggressive buy pressure). Second, after the metaorder ends, both queues evolve with the same baseline dynamics and the same realized noise. Under the affine drift condition $\lambda^L(x)-\lambda^C(x)=c_\lambda x+d_\lambda$, with $c_\lambda<0$, the corresponding mean-field calculation, with $\tau_o$ the end of execution window, gives
\begin{equation*}
\E\!\left[|\Delta_t|\mid \calG_{\tau_o}\right]\lesssim e^{c_\lambda (t-\tau_o)}|\Delta_{\tau_o}|,
\qquad
\Delta_t:=\wb q_t-q_t,
\qquad t\ge \tau_o,
\end{equation*}
This is precisely why conditional coupling separates transient execution effects from background fluctuations.
\end{remark}

\subsection{Conditional estimation of market impact}
\label{subsec:first_order_impact}
Having described the reconstruction of counterfactual queue dynamics under a given trading strategy, executed through limit or market orders, we now turn to estimating the impact of such a strategy. Here we answer the question:
        \begin{center}
        \textit{Given a market realization observed in the past, what would have happened under a given strategy, and at what cost?}
        \end{center}
        \vspace{4mm}
For passive impact, at each time of interest, one has to run Monte Carlo estimates to compute the conditional expectation in Equation \eqref{eq:mi_passive_conditional} from each of the current counterfactual queue states. We will see however that under specific assumptions that it is possible to reduce computations to a simple Monte Carlo without nesting.

\subsubsection{Passive market impact}
To connect queue dynamics to market impact, we rely on the passive impact formula of Section~\ref{sec:market_impact_formula}. Under a limit-only metaorder executed on the ask side, at each valuation time $t$, define the non-anticipative truncation
\begin{equation*}
L_s^{o,t}:=L^o_{s\wedge t},\qquad s\ge 0,
\end{equation*}
and the corresponding ask queue
\begin{equation*}
\wb q_s^{a,t}=q_0^a+\wb L_s^{a,t}-\wb C_s^{a,t}-N_s^a+L_s^{o,t}.
\end{equation*}
This is motivated by the following observation: although the market digests the orders of the metaorder executed up to time $t$, it cannot anticipate those yet to arrive. Therefore, when computing the price at time $t$ as a conditional expectation of a functional of the order flow, only the portion of the metaorder executed prior to $t$ should be taken into account. The associated counterfactual price is
\begin{equation}
\label{eq:price:lim:metaorder}
\wb P_t = P_0 + \lim_{T\to\infty}
\E\bigg[\int_0^T \kappa(\wb q^{a,t}_{s}) \, \dd N^a_s - \int_0^T \kappa(q^b_s) \, \dd N^b_s \bigg | \calG_t \bigg].
\end{equation}
This is the passive counterpart of the general formulas derived in Section~\ref{sec:market_impact_formula}. The passive impact process is therefore
\begin{equation}
\label{eq:mi}
\mathrm{MI}_t^{\mathrm{pas}} = \wb P_t - P_t =
\E\bigg[\int_0^\infty \big( \kappa(\wb q^{a,t}_{s}) - \kappa(q^a_s) \big) \, \dd N^a_s \bigg | \calG_t \bigg].
\end{equation}
Equation~\eqref{eq:mi} is the quantity estimated numerically via conditional simulation. Henceforth, $N^a$ denotes the Hawkes market order flow of Section~\ref{sec:impact}; under a passive intervention, we assume that its intensity is independent of queue state and is not perturbed by $L^o$. In this framework, the impact of a strategy on the ask side therefore depends on three objects: the baseline queue trajectory $q^a$, the counterfactual trajectory $\wb q^a$ coupled to the same underlying noise, and the market order flow, which is assumed independent of the rest of the system. Estimating this impact thus reduces to simulating $\widebar{q}^a$ conditionally on the observed path $q^a$, precisely what the conditional simulation algorithm of Section~\ref{subsec:conditional_sim} provides, after which the impact of different trading strategies can be quantified a posteriori.\\

To obtain a closed-form that can be efficiently evaluated, we now impose the following affine conditions used in \citep{youssef2024passiveimpact} as well as the multi-exponential specification for the Hawkes process.

\begin{assumption}
\label{assum:linear}
The impact function and queue drift are affine and given by
\begin{equation*}
\kappa(x)=d_\kappa+c_\kappa x,
\qquad
\lambda^L(x)-\lambda^C(x)=c_\lambda x+d_\lambda,
\end{equation*}
with $c_\kappa<0$ and $c_\lambda<0$. Moreover, $N^a$ is Hawkes process with baseline intensity $\mu$ and self-exciting kernel
\begin{equation}
\label{eq:kernel_sum_exps}
\varphi(t)=\sum_{i=1}^m\alpha_i e^{-\beta_i t},
\end{equation}
for some $m\ge 1$, $\alpha_i>0$, $\beta_i>0$, and $\sum_{i=1}^m \alpha_i/\beta_i<1$.
\end{assumption}

Under Assumption~\ref{assum:linear}, the impact given by Equation \eqref{eq:mi} admits the following explicit representation.

\begin{theorem}
    \label{thm:explicit_impact_shape}

Under Assumption~\ref{assum:linear}, there exist constants $(\gamma_i)_{1\le i\le m}$ and $\zeta$ such that
\begin{equation*}
\mathrm{MI}_t^{\mathrm{pas}} = c_{\kappa} \int_0^t (\widebar{q}^{a,t}_s - q^{a}_s)\,\dd N^a_s
+ c_{\kappa} (\widebar{q}^{a,t}_t - q^{a}_t)\Big(\zeta + \int_0^t \sum_{i=1}^m \gamma_i e^{-\beta_i (t-s)}\,\dd N^a_s\Big).
\end{equation*}
\end{theorem}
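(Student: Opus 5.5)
Starting from Equation~\eqref{eq:mi}, I would use that $\kappa$ is affine, so that $\kappa(\wb q^{a,t}_s)-\kappa(q^a_s)=c_\kappa\Delta_s$ with $\Delta_s:=\wb q^{a,t}_s-q^a_s$. I would then split the integral at $t$. On $[0,t]$ both $\Delta$ and $N^a$ are $\calG_t$-measurable, so the conditional expectation of that part is just $c_\kappa\int_0^t\Delta_s\,\dd N^a_s$, which is the first term of the statement. What remains is the continuation piece
\begin{equation*}
c_\kappa\,\E\Big[\int_t^\infty \Delta_s\,\dd N^a_s\ \Big|\ \calG_t\Big],
\end{equation*}
and the task is to show that it equals $c_\kappa\Delta_t\bigl(\zeta+\int_0^t\sum_i\gamma_i e^{-\beta_i(t-s)}\dd N^a_s\bigr)$.

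The first step is the dynamics of $\Delta$ after $t$. For $s\ge t$ the intervention $L^{o,t}$ is frozen and both queues see the same $N^a$, which is assumed not to react to queues. Hence $\Delta$ jumps only through the differences $\wb L-L$ and $\wb C-C$. Using the thinning representation on the shared $\pi^{L,a},\pi^{C,a}$, its compensator has density $\lambda^L(\wb q_{s-})-\lambda^L(q_{s-})-\lambda^C(\wb q_{s-})+\lambda^C(q_{s-})=c_\lambda\Delta_{s-}$. So $\Delta_s-\Delta_t-\int_t^s c_\lambda\Delta_u\,\dd u$ is a $(\calG_s)$-martingale. This martingale is driven by $\pi^{L,a},\pi^{C,a}$ and is independent of $\pi^{N,a}$.

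The second step is to condition on the market order flow. Because $N^a$ is a functional of $\pi^{N,a}$ alone, and $\pi^{N,a}$ is independent of $\pi^{L,a},\pi^{C,a}$, I would condition on $\calG_t\vee\sigma(\pi^{N,a})$. The compensator argument above remains valid in this enlarged filtration, because the queue noise is independent of $\pi^{N,a}$ and $N^a$ enters $\Delta$ only through cancellation in the difference. This gives $\E[\Delta_s\mid\calG_t\vee\sigma(\pi^{N,a})]=e^{c_\lambda(s-t)}\Delta_t$. The identity must then be transferred inside the stochastic integral against $N^a$. For this I would write $\int\Delta_{s-}\dd N^a_s$ as a sum over jump times, which is fine since there are no simultaneous jumps, and apply Fubini.

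That reduces the continuation piece to
\begin{equation*}
\Delta_t\int_t^\infty e^{c_\lambda(s-t)}\E[\lambda^a_s\mid\calG_t]\,\dd s.
\end{equation*}

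The third step handles the exponential-sum kernel, where the Hawkes intensity is Markovian in the components $Y^i_t=\int_0^t\alpha_i e^{-\beta_i(t-s)}\dd N^a_s$. The conditional mean $m(s)=\E[\lambda^a_s\mid\calG_t]$ solves the linear Volterra equation $m(s)=\mu+\int_0^t\varphi(s-u)\dd N^a_u+\int_t^s\varphi(s-u)m(u)\,\dd u$. Equivalently, the vector $\E[Y_s\mid\calG_t]$ solves a linear ODE $\dot y=(-\mathrm{diag}(\beta)+\alpha\mathbf 1^\top)y+\mu\alpha$. Integrating against $e^{c_\lambda(s-t)}$ amounts to computing a Laplace transform at $-c_\lambda>0$. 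The result is affine in $Y_t$, which gives
\begin{equation*}
\int_t^\infty e^{c_\lambda(s-t)}m(s)\,\dd s=\zeta+\sum_i\gamma_i Y^i_t/\alpha_i.
\end{equation*}
The explicit constants would be $\zeta=\mu/\bigl((-c_\lambda)(1-\hat\varphi(-c_\lambda))\bigr)$, with $\gamma_i=\alpha_i/\bigl((\beta_i-c_\lambda)(1-\hat\varphi(-c_\lambda))\bigr)$ to be checked. Here $\hat\varphi(-c_\lambda)<\|\varphi\|_1<1$ ensures invertibility, and $e^{c_\lambda(s-t)}$ decay together with Hawkes stability gives integrability.

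I expect the main obstacle to be the second step: the rigorous justification of $\E[\Delta_s\mid\calG_t\vee\sigma(\pi^{N,a})]=e^{c_\lambda(s-t)}\Delta_t$ and of its interchange with the $\dd N^a$ integral. This needs integrability of $\Delta$ (moments of the jump-type difference process, via Gronwall on $\E|\Delta|$) and of $\int\Delta\,\dd N^a$ over infinite horizon. Only after that is the limit defining the price in Equation~\eqref{eq:price:lim:metaorder} legitimately exchanged with the expectation.
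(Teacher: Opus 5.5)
Your proof is correct and has the same skeleton as the paper's argument in Appendix~\ref{appendix:C}: affine reduction of $\kappa$, split at $t$, exponential decay of the conditional mean of the queue gap, a Volterra equation for $\E[\lambda^a_s\mid\mathcal G_t]$, and integration against $e^{c_\lambda(s-t)}$. You carry out the two technical steps differently, though.

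\textbf{The queue-gap step.} The paper rewrites the continuation term as $\int_t^\infty\E[U^t_s\lambda^a_s\mid\mathcal G_t]\,\dd s$. It then factorizes this expectation (Lemma~\ref{lem:queue_independence}) by representing the gap as a pure-death process driven by an auxiliary Poisson measure $\tilde\pi$ independent of $\pi^{N,a}$, which relies on the monotone coupling of \citep{youssef2024passiveimpact}. You instead condition on $\mathcal G_t\vee\sigma(\pi^{N,a})$, which freezes the whole market-order path. You then only need that the compensator of $\Delta$ in this enlarged filtration is $c_\lambda\Delta$. That follows from the affine form of $\lambda^L-\lambda^C$ and the independence of $\pi^{L,a},\pi^{C,a}$ from $\pi^{N,a}$, and it requires no conditional independence between $\Delta_s$ and $\lambda^a_s$. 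Indeed, the law of $\Delta$ after $t$ could depend on the $N^a$ path through the individual levels of $q$ and $\wb q$, but its conditional mean cannot.

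\textbf{The Hawkes step.} The paper inverts the kernel explicitly (Lemma~\ref{lem:propagator_formula}). This requires locating the $m$ roots $-\lambda_j$ of $1-\sum_i\alpha_i/(x+\beta_i)$ and then integrating term by term to obtain $\Theta_i$ in terms of $(\lambda_j,c_j)$. Your Laplace transform at $-c_\lambda$ gives directly $\gamma_i=\alpha_i/\bigl((\beta_i-c_\lambda)(1-\hat\varphi(-c_\lambda))\bigr)$ and $\zeta=-\mu/\bigl(c_\lambda(1-\hat\varphi(-c_\lambda))\bigr)$. These are exactly the constants stated in the remark after the theorem, with $D=1-\hat\varphi(-c_\lambda)$. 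So your route also supplies the simplification the paper leaves to ``explicit computations''. In exchange, the paper's resolvent gives the whole curve $s\mapsto\E[\lambda^a_s\mid\mathcal G_t]$, not just one Laplace value.

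\textbf{Integrability.} The paper does not address the issues you flag either: moments of $\Delta$, and exchanging the $T\to\infty$ limit with the expectation. Under the monotone coupling, $|\Delta_s|\le|\Delta_t|$ for $s\ge t$, which settles the first.
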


\begin{remark}
Equivalently, define the effective passive-response kernel
\begin{equation*}
\widetilde{\xi}(u):=\sum_{i=1}^m \gamma_i e^{-\beta_i u},
\qquad u\ge0.
\end{equation*}
Thus Theorem~\ref{thm:explicit_impact_shape} can be written as
\begin{equation}
\label{eq:closed_form_mi_passive}
\mathrm{MI}_t^{\mathrm{pas}}
=
c_{\kappa} \int_0^t (\widebar{q}^{a,t}_s-q^a_s)\,\dd N^a_s
+c_{\kappa}(\widebar{q}^{a,t}_t-q^a_t)
\left(
\zeta+\int_0^t\widetilde{\xi}(t-s)\,\dd N^a_s
\right).
\end{equation}
The first term is the realized contribution of ask-side market orders up to time $t$. The second term is the continuation value of the queue displacement at time $t$.

Explicit computations give
\begin{equation*}
\gamma_i=\frac{\alpha_i}{D(\beta_i-c_\lambda)},
\qquad
\zeta=-\frac{\mu}{D c_\lambda},
\qquad
D=1-\sum_{i=1}^m\frac{\alpha_i}{\beta_i-c_\lambda}.
\end{equation*}

The lower limit $0$ in \eqref{eq:closed_form_mi_passive} is exact under the convention that the Hawkes process is initialized at time $0$, as in Assumption~\ref{assum:linear}. If instead time $0$ denotes the beginning of the metaorder or of the recorded sample, while the market order flow was already active before that time, the Hawkes prehistory should be retained. In the stationary version of Theorem~\ref{thm:explicit_impact_shape}, the continuation term is therefore evaluated with
\begin{equation*}
    c_{\kappa}\bigl(\widebar{q}^{a,t}_t-q^a_t\bigr)
    \left(
    \zeta+\int_{-\infty}^{t}\widetilde{\xi}(t-s)\,\dd N^a_s
    \right).
\end{equation*}
In this stationary formulation, one can also compute explicitly the unconditional mean impact 
\begin{equation*}
\E\!\left[\mathrm{MI}_t^{\mathrm{pas}}\right]
=
-\frac{\mu}{1-\|\varphi\|_{L^1}}\,
\frac{c_\kappa}{c_\lambda}\,
V_t,
\end{equation*}
where $V_t$ denotes the total passive volume posted up to time $t$. This identity shows that the expected passive impact is proportional to the cumulative posted volume, with proportionality constant determined by the stationary market order intensity, the price sensitivity $c_\kappa$, and the queue mean-reversion coefficient $c_\lambda$. Note that this quantity is consistent with \cite{youssef2024passiveimpact}.
\end{remark}

\begin{remark}
Theorem~\ref{thm:explicit_impact_shape} removes nested conditional expectations from online computation and turns impact simulation into a Markovian-factor update. If one starts from power-law Hawkes kernels, exponential-sum approximations (e.g. Beylkin--Monz\'on) can be used to remain in this tractable class; see \citep{bacry2014estimationslowlydecreasinghawkes,beylkin2010approximation}.
\end{remark}

For the simulation of a passive impact distribution, we:
\begin{itemize}
    \item generate one baseline ask queue trajectory $q^a$ (interpreted as the observed path);
    \item simulate a fixed realization of the passive metaorder $L^o$ for 1 minute, accounting for 10\% of the limit order flow at the ask, in a similar way to Section \ref{subsec:conditional_queue_coupling};
    \item condition on $(q^a,L^o)$ and simulate i.i.d. baseline trajectories $(\wb q^{a,i})_{1\le i\le n}$ with Proposition~\ref{prop:conditional_simulation};
    \item evaluate Equation \eqref{eq:closed_form_mi_passive} (or \eqref{eq:mi} if we ignore the assumptions of Theorem \ref{thm:explicit_impact_shape}) on each replica to approximate the conditional distribution of impact. The parameter $c_\kappa < 0$ is left implicit.
\end{itemize}
Figure~\ref{fig:efficient_impact_forward} shows the distribution of market impact estimates obtained via conditional simulation, as described above. After execution ends, the impact does not vanish: the posted limit orders create a lasting resistance to incoming trades, leaving a persistent price displacement consistent with the findings of \citep{youssef2024passiveimpact}. It is worth separating the two levels at which this convergence acts. The queue states themselves reconverge, so the perturbation is transient in the state variables; but the price is an integral of the queue (through $\kappa(q)$) along the path, and the displacement accumulated while the two systems differ need not vanish when the gap closes. Any permanent component of impact therefore lives in this accumulated integral, not in the queue dynamics, which forget the metaorder entirely. In short, the system forgets the trade at the level of its state, while the price may retain a lasting memory of it. From a more practical perspective, since limit orders and cancellations account for the overwhelming majority of order flow, a passive execution leaves only a faint signature in the observable activity, making its detection substantially harder than that of an aggressive strategy of comparable size.

     \begin{figure}[!htbp]
        \centering
        \includegraphics[width=0.85\textwidth]{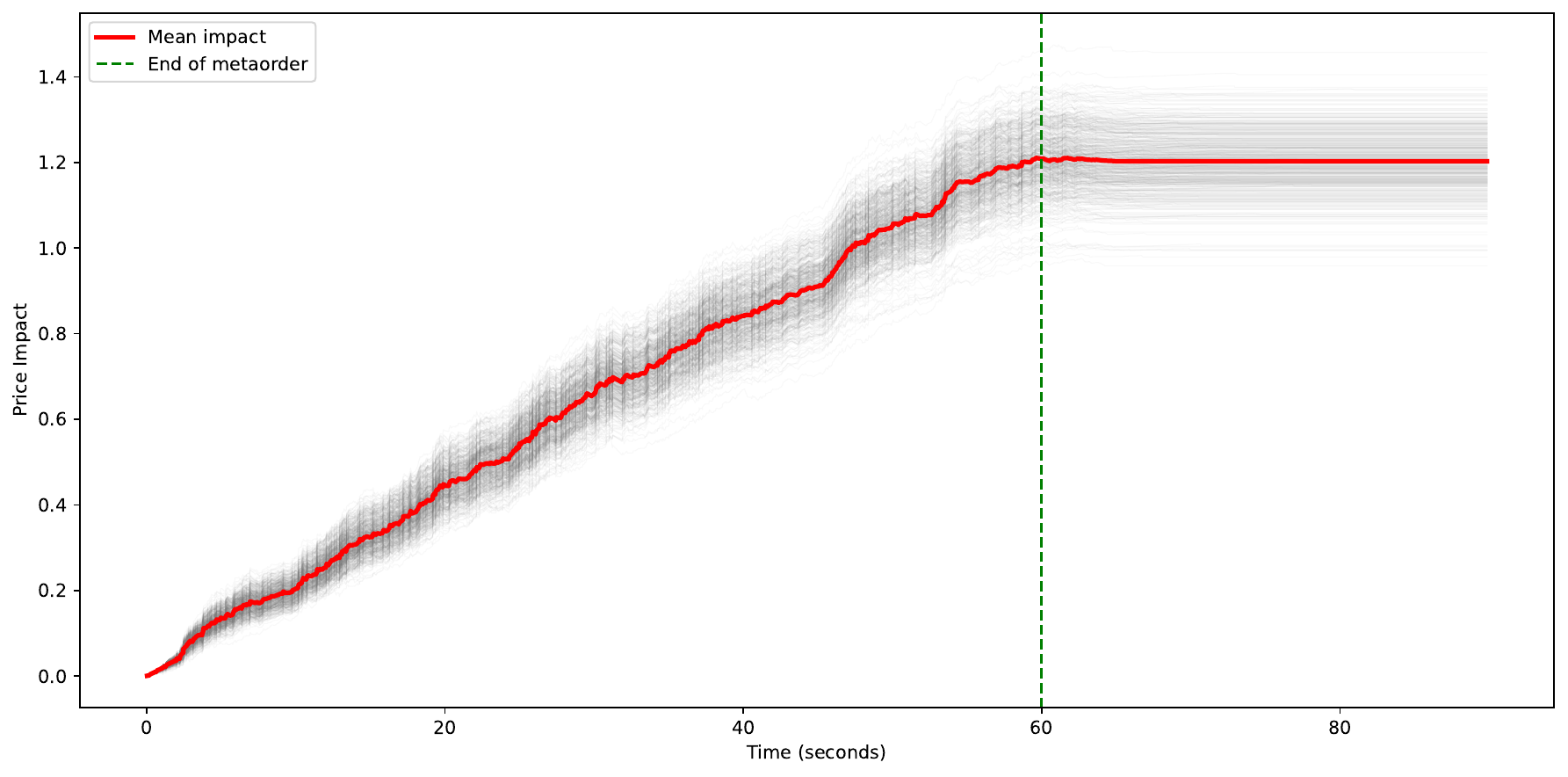}
        \caption{Conditional simulation of market impact given the observed baseline $q$ in the presence of a limit metaorder.}
        \label{fig:efficient_impact_forward}
    \end{figure}

\subsubsection{Aggressive market impact}
\label{subsec:aggressive_market_impact}
For aggressive execution, the exact formula of Equation \eqref{eq:mi_active_conditional} still contains the generally intractable continuation difference  $\mathcal R_t(\bar q^{a,t}_t,q^b_t,N^a+N^{o,t},N^b)-\mathcal R_t(q^a_t,q^b_t,N^a,N^b)$. We therefore use the reduced-form approximation introduced in \citep{hafsi2026impact}, replacing this difference by its propagator part plus observable queue corrections. The price approximation is then written as
\begin{equation*}
    P_t = P_0 + \bar{\kappa} \int_0^t \xi(t-s) d(N^a - N^b)_s + \int_0^t\bigl(\kappa(q^a_s)-\widebar{\kappa}\bigr)\,\dd N^a_s
-\int_0^t\bigl(\kappa(q^b_s)-\widebar{\kappa}\bigr)\,\dd N^b_s,
\end{equation*}
by neglecting the theoretical error term $J^\kappa_t := \int_t^\infty
\E\!\left[
\bigl(\kappa(q^a_s)-\widebar{\kappa}\bigr)\lambda^a_s
-\bigl(\kappa(q^b_s)-\widebar{\kappa}\bigr)\lambda^b_s
\ \middle|\ \calF_t
\right]\dd s$.

Given a buy market metaorder $N^o$ on the ask side, we define
\begin{equation*}
\label{eq:impact_aggressive}
    \mathrm{MI}_t^{\mathrm{agg}}
    =
    \int_0^t \left(\kappa(\wb q^{a}_s) - \kappa(q^a_s)\right)\,\dd N^a_s
    +
    \int_0^t
    \left(
    \widebar{\kappa}\,\xi(t-s)+\kappa(\wb q^{a}_s)-\widebar{\kappa}
    \right)\,\dd N^o_s
\end{equation*}
with counterfactual queue dynamics
\begin{equation*}
\wb q^a_t = q_0^a + \wb L^{a}_t - \wb C^{a}_t - N^{a}_t - N^{o}_t.
\end{equation*}

The first term in Equation \eqref{eq:impact_aggressive} is the indirect queue-feedback effect on ordinary ask-side market orders. The second term is the direct contribution of the metaorder itself: the factor $\widebar{\kappa}\xi(t-s)$ is the propagator response of an additional buy market order, and $\kappa(\wb q^a_s)-\widebar{\kappa}$ corrects this response for the contemporaneous ask-queue state.\\

The simulation loop is the same as in the passive case: simulate one baseline path of $q$ and one realization of the metaorder $N^o$, condition on it, generate conditional baseline replicas under shared noise, then compute the impact functional pathwise. The simulation is again done over 2 minutes, with the metaorder executed over the first minute and accounting for 10\% of the typical market order flow at the ask. Figure~\ref{fig:mi_agressive} reports the resulting distribution.
\begin{figure}[!htbp]
    \centering
    \includegraphics[width=0.85\textwidth]{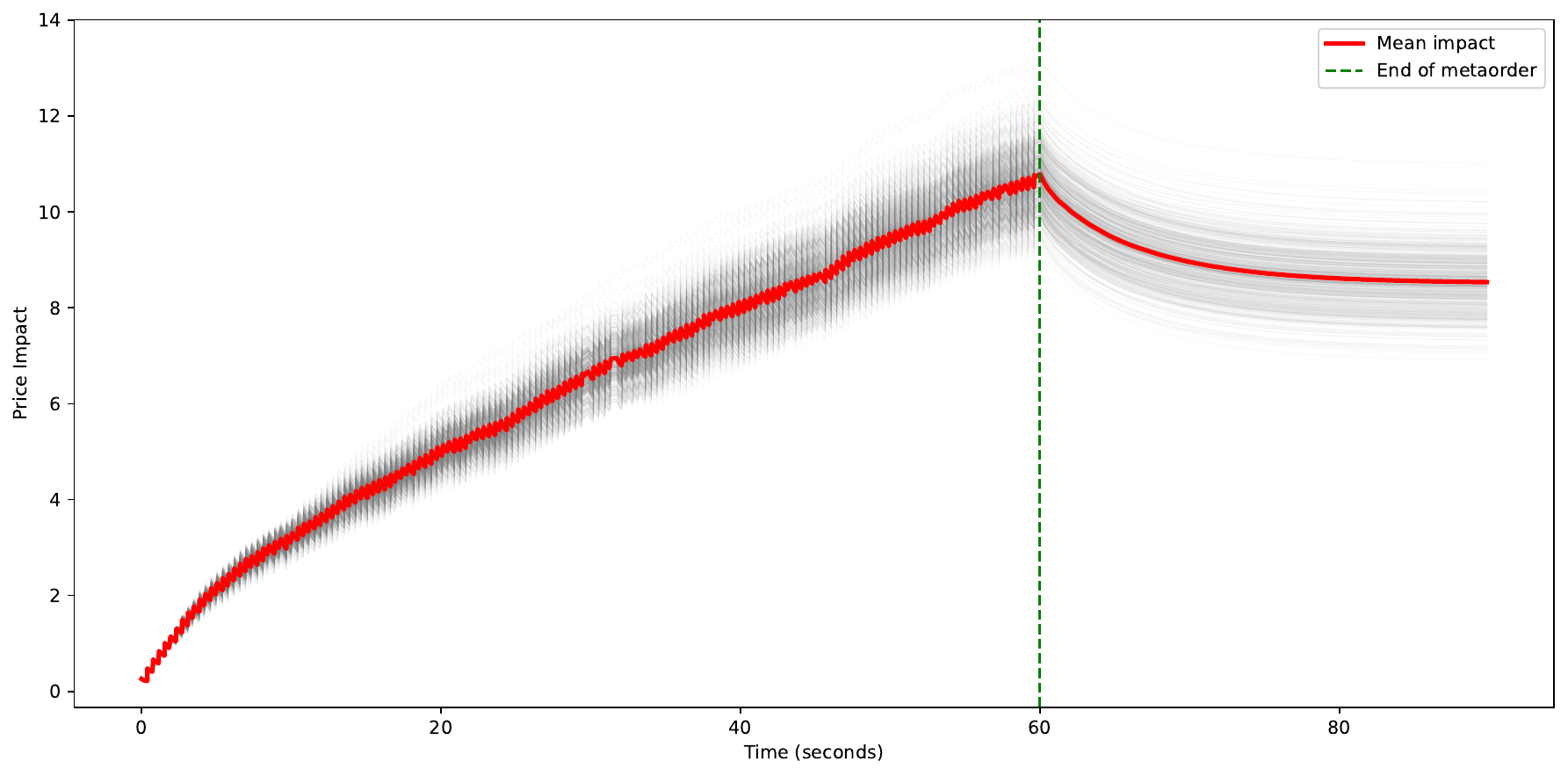}
    \caption{Conditional simulation of market impact given the observed baseline $q$ in the presence of a market metaorder.}
    \label{fig:mi_agressive}
\end{figure}

The model also exhibits permanent impact: as $t\to\infty$, the second integral vanishes as the transient kernel $\xi$ decays and the perturbed queue returns to its stationary regime, while the first integral freezes once the two queues merge. The permanent component is therefore the cumulative cost accrued through market orders before $\tau + 1$ minutes for some $\tau > 0$ over which $\wb q^a$ and $q^a$ remain distant; once $\kappa(\wb q^a_s)=\kappa(q^a_s)$, no further impact accumulates.

\begin{remark}
In this section, we illustrated our methodology using Hawkes market order processes taken independent of the remaining book dynamics, which yields closed-form expressions for the impact. More generally, the market order flow can be given a state dependent intensity $\lambda^M$, for instance, one coupled to the queue itself, but the impact in Equation~\eqref{eq:mi} must then be evaluated by nested Monte Carlo rather than in closed form. We further restricted the illustration to metaorders executed exclusively through either market or limit orders. Mixed strategies are of course also possible; from a practical standpoint, however, they would require a dedicated model for execution probabilities and order patience, an engineering task that lies beyond the scope of this paper.
\end{remark}

\FloatBarrier

\section{A posteriori evaluation of market impact}
\label{sec:a_posteriori}

Section~\ref{sec:quantifying} solved the forward problem: starting from baseline dynamics, one inserts an intervention and quantifies the induced distortion. Here we study the inverse post-trade problem, which is the relevant object for ex post impact and cost analysis \cite{jaisson2015market,youssef2024passiveimpact}: the strategy has already been executed, the impacted trajectory is observed, and the missing quantity is the baseline trajectory that would have prevailed without intervention under the same realized exogenous noise. Thus, in this section we attempt to answer the following question:

\begin{center}
\textit{Given a past market realization in which we executed a trading strategy: what would the market have looked like  had we not been present, and what were our true impact  and execution costs?}
\end{center}
\vspace{4mm}
We keep the one-sided notation of Section~\ref{sec:lob_sim}. For a mixed strategy, $L^o$ denotes the passive leg and $N^o$ the aggressive leg. On $[0,T]$, the observed impacted queue is
\begin{equation*}
\wb q_t = q_0 + \wb L_t - \wb C_t - \wb N_t + L^o_t - N^o_t,
\end{equation*}
while the unknown baseline queue is
\begin{equation*}
q_t = q_0 + L_t - C_t - N_t.
\end{equation*}
Hence the objective is to sample the conditional baseline law given the realized impacted trajectory,
\begin{equation*}
\mathcal L\!\left((L,C,N,q)\,\middle|\,(\wb q_s,L^o_s,N^o_s)_{0\le s\le T}\right),
\end{equation*}
under the same thinning-based coupling as in Sections~\ref{sec:market_impact_formula} and~\ref{sec:quantifying}.

\subsection{A posteriori conditional reconstruction}

For each valuation time $t\in[0,T]$, define the non-anticipative truncation
\begin{equation*}
L^{o,t}_s := L^o_{s\wedge t},
\qquad
N^{o,t}_s := N^o_{s\wedge t},
\qquad s\in[0,T],
\end{equation*}
and denote by $\wb q^{\,t}$ the corresponding truncated impacted queue. This is the same information structure as in Section~\ref{sec:market_impact_formula}. We set the counterfactual filtration
\begin{equation*}
\wb{\mathcal F}_t
:=
\sigma\bigl((\wb L_s,\wb C_s,\wb N_s,\wb q_s,L^o_s,N^o_s),\,0\le s\le t\bigr).
\end{equation*}

\begin{algorithm}[htbp]
\caption{A posteriori conditional simulation of $(L,C,N,q)$ given $\wb{\mathcal F}_T$}
\label{algo:posteriori_conditional_simulation}
\KwIn{Observed impacted trajectory $(\wb L,\wb C,\wb N,\wb q,L^o,N^o)$ on $[0,T]$; jump times $(T_j^{\wb x})_{j\ge1}$ of each $\wb x\in\{\wb L,\wb C,\wb N\}$; intensities $\lambda^L,\lambda^C,\lambda^N$.}
\KwOut{One conditional draw of $(L,C,N,q)$ on $[0,T]$.}
Set $t\leftarrow 0$, $L_0\leftarrow 0$, $C_0\leftarrow 0$, $N_0\leftarrow 0$, and $q_0\leftarrow q_0$\;
For each $\wb x\in\{\wb L,\wb C,\wb N\}$ and each jump time $T_j^{\wb x}$, draw and store $U_j^{\wb x}\sim\mathcal U([0,1])$\;
Set $j_x\leftarrow1$ for each $x\in\{L,C,N\}$\;
\While{$t<T$}{
For each $x\in\{L,C,N\}$, draw $\tau^x\sim\mathrm{Exp}((\lambda^x(q_t)-\lambda^x(\wb q_t))_+)$, with $\tau^x=+\infty$ if the rate is zero\;
For each $x\in\{L,C,N\}$, let $\theta^x\leftarrow T_{j_x}^{\wb x}-t$ be the waiting time to the next unprocessed impacted jump of type $x$ (or $+\infty$ if none remains)\;
Set $\theta\leftarrow\min\{\theta^L,\theta^C,\theta^N\}$ and let $x^\star$ be the corresponding type\;
Set $\eta$ to the time from $t$ to the next jump of $L^o$ or $N^o$ (or $+\infty$ if no intervention jump remains before $T$)\;
Set $\tau\leftarrow \min\{\tau^L,\tau^C,\tau^N,\theta,\eta,T-t\}$ and $t\leftarrow t+\tau$\;
\If{$t=T$}{stop the loop\;}
\If{$\tau=\theta$}{
Let $j=j_{x^\star}$\;
\If{$U_j^{\wb x^\star}\lambda^{x^\star}(\wb q_{t-})\le \lambda^{x^\star}(q_{t-})$}{
Increase the corresponding counter $L$, $C$, or $N$ by one at time $t$\;
}
$j_{x^\star}\leftarrow j_{x^\star}+1$\;
}
\ElseIf{$\tau=\eta$}{
Advance through the known intervention jump; no baseline counter is changed\;
}
\Else{
Let $x$ be such that $\tau=\tau^x$ and increase the corresponding counter $L$, $C$, or $N$ by one at time $t$\;
}
Update $q_t$ from $q_t=q_0+L_t-C_t-N_t$\;
}
\Return $(L,C,N,q)$ on $[0,T]$\;
\end{algorithm}

\begin{proposition}
\label{prop:posteriori_conditional_simulation}

Assume the one-sided queue dynamics are driven by thinning on a common Poisson noise source as in Sections~\ref{sec:primer} and~\ref{sec:conditional_measures}. Then Algorithm~\ref{algo:posteriori_conditional_simulation} generates an exact sample from
\begin{equation*}
\mathcal L\big((L,C,N,q)\mid\wb{\mathcal F}_T\big).
\end{equation*}

\end{proposition}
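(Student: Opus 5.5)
The plan is to reduce Proposition~\ref{prop:posteriori_conditional_simulation} to Theorem~\ref{thm:conditionallaw}, with the roles of the factual and counterfactual worlds exchanged. Under Assumption~\ref{assumption:shared_noise}, the impacted triple $(\wb L,\wb C,\wb N)$ is the thinning solution driven by the three independent Poisson measures $\Pi=(\pi^L,\pi^C,\pi^N)$ on $E=[0,T]\times\{L,C,N\}\times\R_+$. Its intensity is $\wb\lambda^x_s=\lambda^x(\wb q_{s-})$, and the deterministic (or conditionally fixed) paths $L^o,N^o$ enter only through the state $\wb q$.

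\textbf{Step 1: apply Theorem~\ref{thm:conditionallaw} to the impacted system.} First, for fixed intervention paths, the map $\mu\mapsto(\wb\lambda^\mu,\wb N^\mu)$ is a measurable strong solution of \eqref{eq:multivariate_thinning_functional}. Its functional $f$ is non-anticipative, and it is the composition of the counts with the known path $L^o-N^o$. Second, the finite-horizon bound \eqref{eq:finite_horizon_non_explosion} holds, because on $[0,T]$ the queue is bounded on non-explosive paths and $\lambda^x$ is locally bounded. Conditioning on $\wb{\mathcal F}_T$ is then the same as conditioning on $\mathcal F^{\wb N}_T$ together with the fixed intervention. Theorem~\ref{thm:conditionallaw} therefore gives the conditional law of $\Pi$. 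The observed jumps $T^{\wb x}_j$ carry independent marks $Z_j\sim\mathcal U([0,\lambda^x(\wb q_{T_j-})])$, which can be written $Z_j=U_j\lambda^x(\wb q_{T_j-})$ with $U_j$ uniform on $[0,1]$. Independently, the residual cloud on $B^c=\{z>\lambda^x(\wb q_{s-})\}$ is Poisson with the restricted Lebesgue intensity.

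\textbf{Step 2: express the baseline as a deterministic functional of the reconstructed noise.} The baseline $(L,C,N,q)$ is the solution of the same thinning equations driven by the same $\Pi$, with no intervention term. It is therefore a measurable functional $G(\Pi)$. Since $(\wb q,L^o,N^o)$ is $\wb{\mathcal F}_T$-measurable, the conditional law of $G(\Pi)$ given $\wb{\mathcal F}_T$ is the pushforward of the Step~1 law under $G$ with the observed path frozen. It remains to show that Algorithm~\ref{algo:posteriori_conditional_simulation} samples a configuration with exactly this law and evaluates $G$ on it.

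\textbf{Step 3: verify that the event-driven loop realizes the pushforward.} Between events, the baseline accepts an atom $(s,x,z)$ iff $z\le\lambda^x(q_{s-})$. Atoms with $z\le\lambda^x(\wb q_{s-})$ are exactly the observed jumps. Such an atom is accepted by the baseline iff $U_j\lambda^x(\wb q_{t-})\le\lambda^x(q_{t-})$, which is the recycling test in the algorithm. Atoms with $\lambda^x(\wb q_{s-})<z\le\lambda^x(q_{s-})$ come from the residual Poisson cloud. On intervals where both $q$ and $\wb q$ are piecewise constant, so that the band has constant height $(\lambda^x(q_t)-\lambda^x(\wb q_t))_+$, the first such atom arrives after an exponential time with that rate. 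Residual atoms above $\max(\lambda^x(q),\lambda^x(\wb q))$ never affect either path on $[0,T]$ and need not be simulated. The band height changes only when $q$ or $\wb q$ jumps, that is, at an observed jump, an intervention jump, or a newly accepted residual atom. The algorithm includes all of these in $\tau$ (through $\theta$, $\eta$ and the $\tau^x$). Memorylessness, together with the restriction property of the Poisson measure to disjoint regions, justifies redrawing the clocks at each event. An induction on the almost surely finite number of events then shows that the output equals $G$ evaluated on a configuration distributed according to Step~1.

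\textbf{Main obstacle.} The delicate point is Step~1: checking that the intervention-driven impacted system satisfies the hypotheses of Theorem~\ref{thm:conditionallaw}, and in particular the locality and stopping-set property in the presence of the exogenous jumps $L^o,N^o$. I would handle this by treating the intervention as fixed and conditioning on it throughout, so that $f$ remains a predictable functional of the counts alone. A secondary issue is Step~3. Re-sampling the exponential clocks after every event, including recycled observed atoms, must not bias the residual cloud. This follows from the independence of the Poisson restrictions to the disjoint regions swept between successive stopping times, which is the strong Markov property cited as \cite[Theorem~12.1.3]{RPG}.
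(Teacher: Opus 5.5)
Your proposal is correct and follows the paper's route. The paper obtains this result as the mirror image of the proof of Proposition~\ref{prop:conditional_simulation:rephrased} in Appendix~\ref{appendix:B}: it applies Theorem~\ref{thm:conditionallaw} to the observed (here impacted) system, so that given $\wb{\mathcal F}_T$ the noise becomes the observed atoms with uniform marks plus an independent residual Poisson cloud, and then checks that the event-driven loop (recycling test on observed atoms, exponential clocks on the band between $\lambda^x(\wb q)$ and $\lambda^x(q)$, intervention jumps only moving $\wb q$) reproduces the baseline thinning driven by this reconstructed noise. Treating $L^o,N^o$ explicitly as a fixed time-dependent input to a non-anticipative $f$, and applying the theorem to the three components jointly as one $d=3$ system rather than ``componentwise'', spells out points the paper leaves implicit.
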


\begin{remark}
    It is the reverse counterpart of Proposition~\ref{prop:conditional_simulation}: the roles of $(q,L,C,N)$ and  $(\wb q,\wb L,\wb C,\wb N)$ are exchanged, and intervention clocks are removed because $(L^o,N^o)$ are already observed in the conditioning $\sigma$-field.
\end{remark}

\begin{remark}
The practical interpretation is straightforward: once the impacted trajectory is fixed, all uncertainty on the baseline path comes from the unrevealed part of the latent Poisson noise characterized in Section~\ref{sec:conditional_measures}. The same argument extends to replacement tests. If strategy $A$ is observed and strategy $B$ is an alternative, one can reconstruct $q^B$ conditionally on $q^A$ under shared noise, either by composition ($q^A\to q$, then $q\to q^B$) or by a direct event-driven coupling. This yields a rigorous A/B ex post comparison where differences are attributable to execution rules rather than to independent resampling.
\end{remark}

\subsection{A posteriori impact and cost computation}

Let us start with an observed market realization where we actually intervened through market and limit orders $(\wb q,\wb L,\wb C,\wb N)$. Conditionally on $\wb{\mathcal F}_T$, generate i.i.d.\ baseline replicas
\begin{equation*}
\bigl(q^{(m)},L^{(m)},C^{(m)},N^{(m)}\bigr)_{1\le m\le M}
\sim
\mathcal{L}\!\left((q,L,C,N)\,\middle|\,\wb{\mathcal F}_T\right)
\end{equation*}
for some $M > 1$. For each replica, define the posteriori impact sample at time $t$ by
\begin{equation*}
\mathrm{MI}_t^{(m)}:=\wb P_t-P_t^{(m)},
\end{equation*}
where $\wb P_t$ is the observed impacted price and $P_t^{(m)}$ is the baseline price obtained by applying the same pricing rule as in Section~\ref{sec:market_impact_formula}. In particular, for passive and aggressive specifications, this corresponds to formulas of Equations  \eqref{eq:mi_passive_conditional} and \eqref{eq:mi_active_conditional} with baseline objects replaced by replica $m$.\\

The empirical distribution of $\{\mathrm{MI}_t^{(m)}\}_{m=1}^M$ approximates the conditional law of realized impact, and
\begin{equation*}
\widehat{\mathrm{MI}}_t^{\,\mathrm{post}}:=\frac{1}{M}\sum_{m=1}^M\mathrm{MI}_t^{(m)}
\end{equation*}
provides a Monte Carlo estimator of the conditional mean. This separates the statistical problem into two steps: Algorithm~\ref{algo:posteriori_conditional_simulation} reconstructs the latent baseline order flow compatible with the observed impacted path, and the pricing functional is then evaluated on each reconstructed path. The second step can use either the full conditional formula of Equations \eqref{eq:mi_passive_conditional}--\eqref{eq:mi_active_conditional} or the closed-form approximation in Equation \eqref{eq:closed_form_mi_passive}, depending on the specification retained for the numerical experiment.\\

For cost analysis of a mixed strategy, let $L^{o,\mathrm{exe}}$ denote the executed passive volume (with $\dd L_t^{o,\mathrm{exe}}\le \dd L_t^o$). For any price path $P$, define
\begin{equation*}
\mathcal{C}_T(P)
:=
\int_0^T P_{t-}\,\dd L_t^{o,\mathrm{exe}}
\;+\;
\int_0^T P_{t-}\,\dd N_t^o.
\end{equation*}
Then, for each replica,
\begin{equation}
\label{eq:cost_function}
\Delta \mathcal{C}_T^{(m)}
:=
\mathcal{C}_T(\wb P)-\mathcal{C}_T(P^{(m)})
=
\int_0^T (\wb P_{t-}-P^{(m)}_{t-})\,\dd L_t^{o,\mathrm{exe}}
\;+\;
\int_0^T (\wb P_{t-}-P^{(m)}_{t-})\,\dd N_t^o,
\end{equation}
which separates passive and aggressive contributions and matches implementation-shortfall decompositions \cite{almgren2001optimal}.

\subsection{Numerical simulations}

While the results of this section extend straightforwardly to mixed strategies, we focus on strategies executed exclusively through either limit or market orders. We use the same parameters for the order-arrival intensities, the impact function, and the trading horizon as in Section~\ref{sec:quantifying}. The left panel of Figure~\ref{fig:passive_simulation_aposterio} shows the factual queue $\bar q$ (black), observed under the executed metaorder, together with the simulated counterfactual baseline trajectories $q$ (grey) and their average (red), for a passive metaorder accounting for 10\% of the average limit order flow. The right panel shows the corresponding passive impact distribution. Figure~\ref{fig:aggressive_simulation_aposterio} shows the analogous results for an aggressive metaorder. The scripts used to reproduce these numerical experiments are available at \githubrepo.

\begin{figure}[ht]
    \centering

    \begin{subfigure}{0.48\textwidth}
        \centering
        \includegraphics[width=\linewidth]{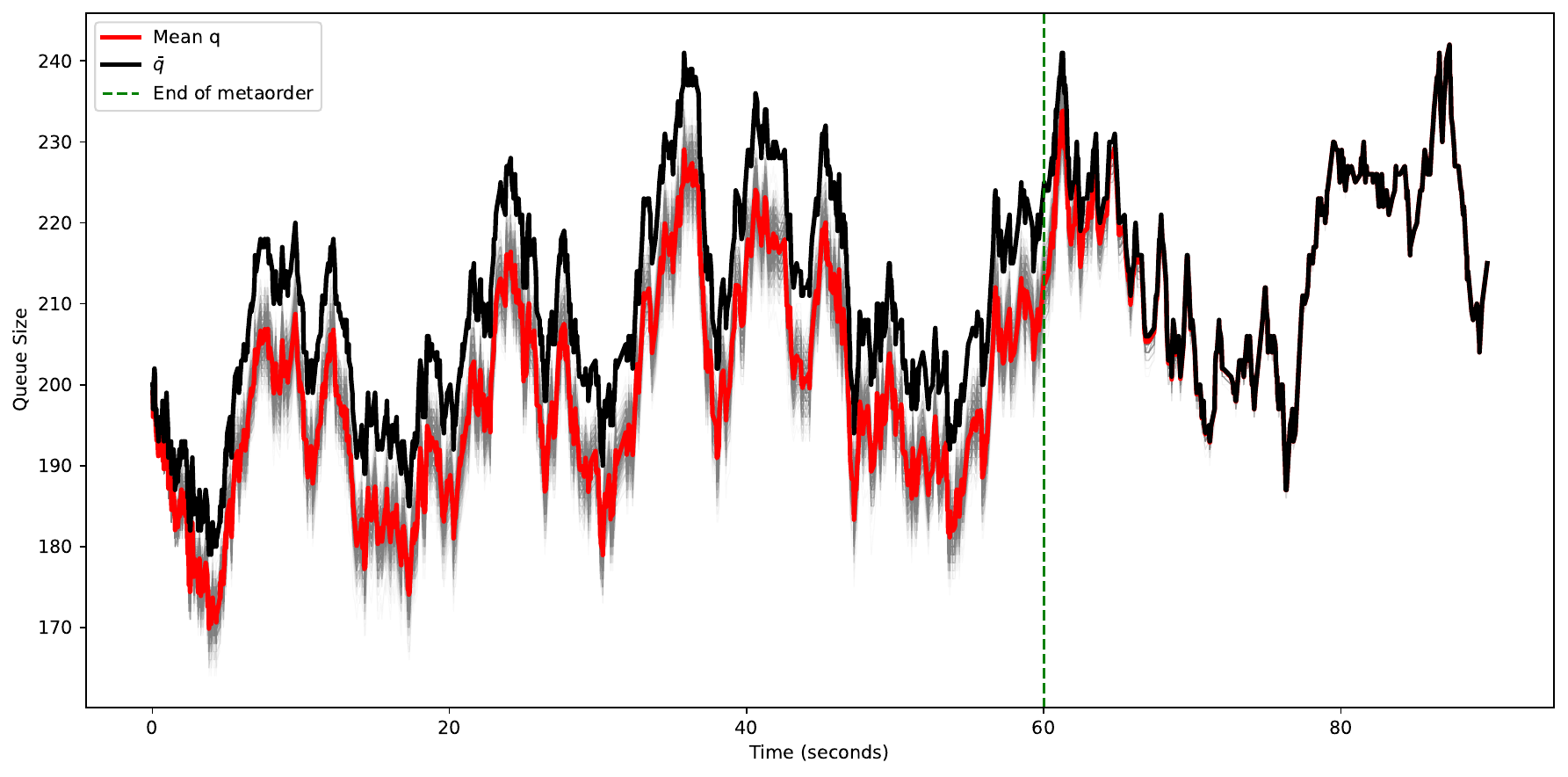}
        \caption{Conditional simulation of baseline $q$.}
    \end{subfigure}
    \hfill
    \begin{subfigure}{0.48\textwidth}
        \centering
        \includegraphics[width=\linewidth]{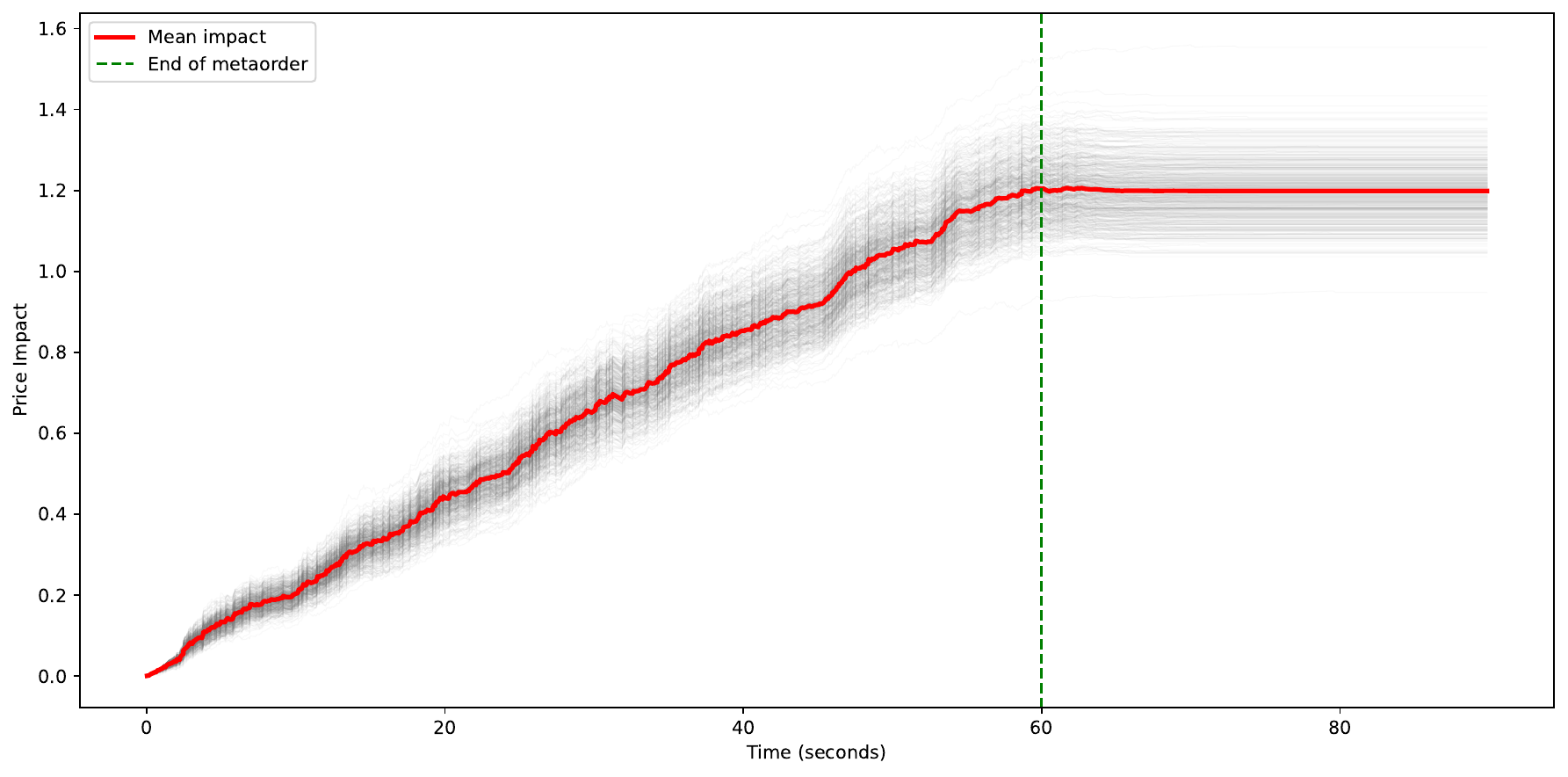}
        \caption{Conditional simulation of market impact.}
    \end{subfigure}

    \caption{Conditional simulation of baseline $q$ and the corresponding market impact given the observed intervened queue $\wb q$ in the presence of a passive metaorder.}
    \label{fig:passive_simulation_aposterio}
\end{figure}

\begin{figure}[ht]
    \centering

    \begin{subfigure}{0.48\textwidth}
        \centering
        \includegraphics[width=\linewidth]{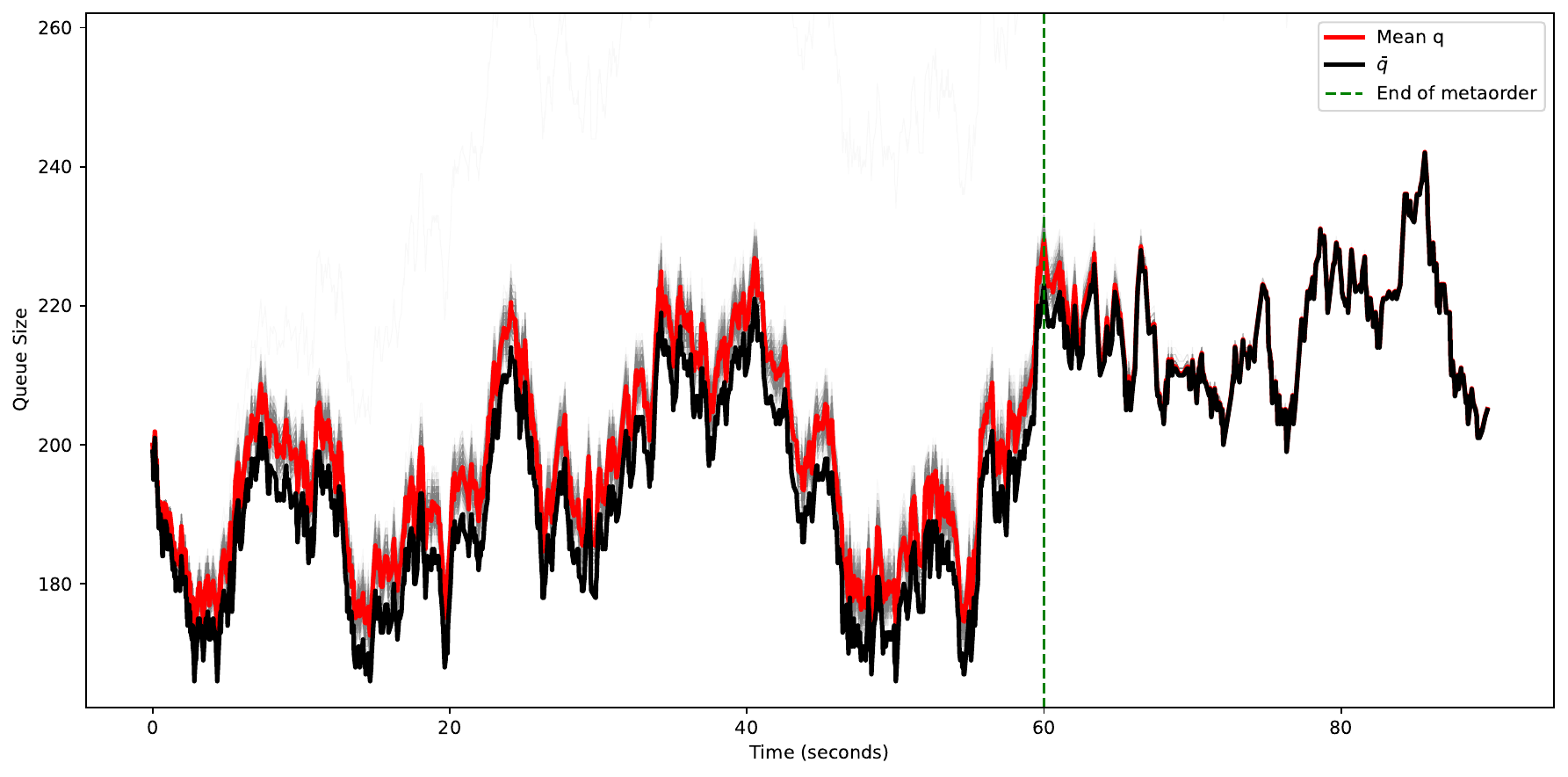}
        \caption{Conditional simulation of baseline $q$.}
    \end{subfigure}
    \hfill
    \begin{subfigure}{0.48\textwidth}
        \centering
        \includegraphics[width=\linewidth]{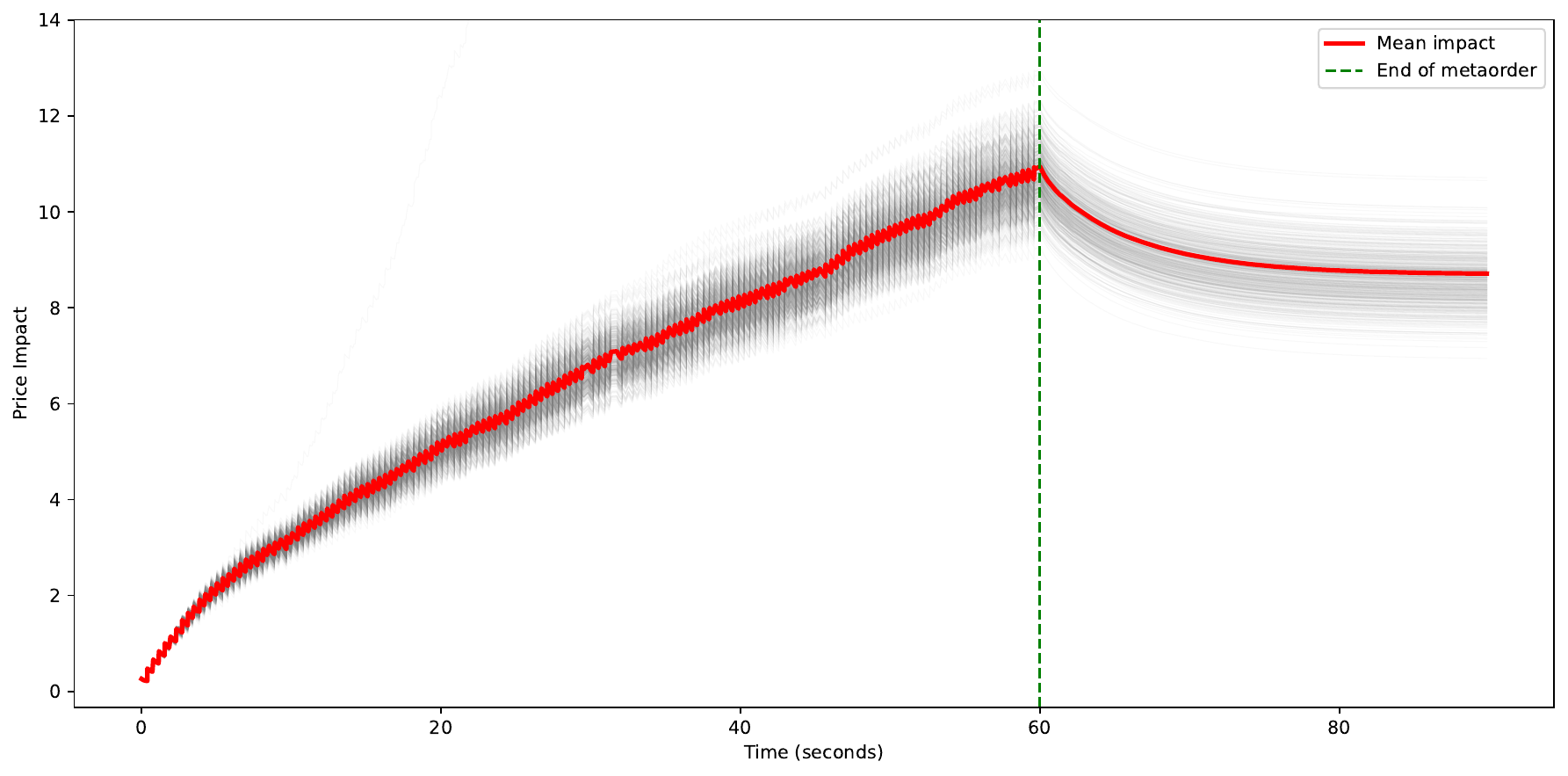}
        \caption{Conditional simulation of market impact.}
    \end{subfigure}

    \caption{Conditional simulation of baseline $q$ and the corresponding market impact given the observed intervened queue $\wb q$ in the presence of an aggressive metaorder.}
    \label{fig:aggressive_simulation_aposterio}
\end{figure}

\subsection{Real data application}

Let us now apply our methodology to a passive trading strategy on real data. We use
E-mini S\&P~500 futures order book data from May~29, 2025. The observed queue trajectory is treated as the intervened trajectory $\bar q$. Since our data are anonymous, we embed a deliberately simple strategy by flagging a subset of the observed limit and cancellation events, as follows. At initial time, we consider the next ten posted limit orders as part of our strategy and assume that, within a randomly sampled time $\tau$ \footnote{$\tau$ is exponentially distributed with mean such that, within time $\tau$, enough cancellations and markets are observed to carry on the strategy}, three of them are executed while the remaining seven are canceled. After time $\tau$, we wait enough time to observe 10 new posted limit orders, and the cycle is repeated three times. The rationale for this structure is that, by the time some of the orders are executed, the information on which they were originally posted has become stale, so the agents reposition themselves more appropriately.


We then proceed as follows:
\begin{itemize}
    \item First, calibrate the intensity functions $\lambda^L$ and $\lambda^C$, the Hawkes parameters $\mu$ and $\varphi$, and the impact function $\kappa$ from the observed trading history.
    entering Equation~\eqref{eq:mi_passive_conditional}.
    \item Conditionally on the observed intervened trajectory $\bar q$, we
    construct the no-strategy queue trajectory by removing the quantities
    flagged as belonging to the passive strategy.
    \item We estimate the passive market impact using
    Equation~\eqref{eq:closed_form_mi_passive}.
    \item Finally, we sample the market impact at the strategy's execution
    times and estimate the execution costs using
    Equation~\eqref{eq:cost_function}.
\end{itemize}

We use the following fitted parameters from \cite{hafsi2026impact} to evaluate
Equation~\eqref{eq:mi_passive_conditional}:
\begin{itemize}
    \item $c_\kappa = -0.00001713$,
    \item $\zeta = 0$,
    \item $\widetilde{\xi}(t)
    = -5.23666801e^{-10t}
    + 4.33922263e^{-t}
    + 8.26171935e^{-0.1t}
    + 1.63572603e^{-0.01t}$.
\end{itemize}

Figure~\ref{fig:mi_real_data} illustrates the result for this strategy. The
impact, shown in grey, rises when the flagged limit orders are posted and
decreases when the flagged cancellations occur, since cancellations have the
opposite sign to limit orders. The three cycles generate a sequence of impact
build-ups and reversals, while the resulting execution costs, obtained by
sampling the left-limit impact at execution times, are shown in blue. We obtain a peak price impact for the strategy of order a third of the tick size. This shows that even with a limited turnover, a passive strategy can generate non negligible impact.
\begin{figure}[!htbp]
    \centering
    \includegraphics[width=0.80\textwidth]{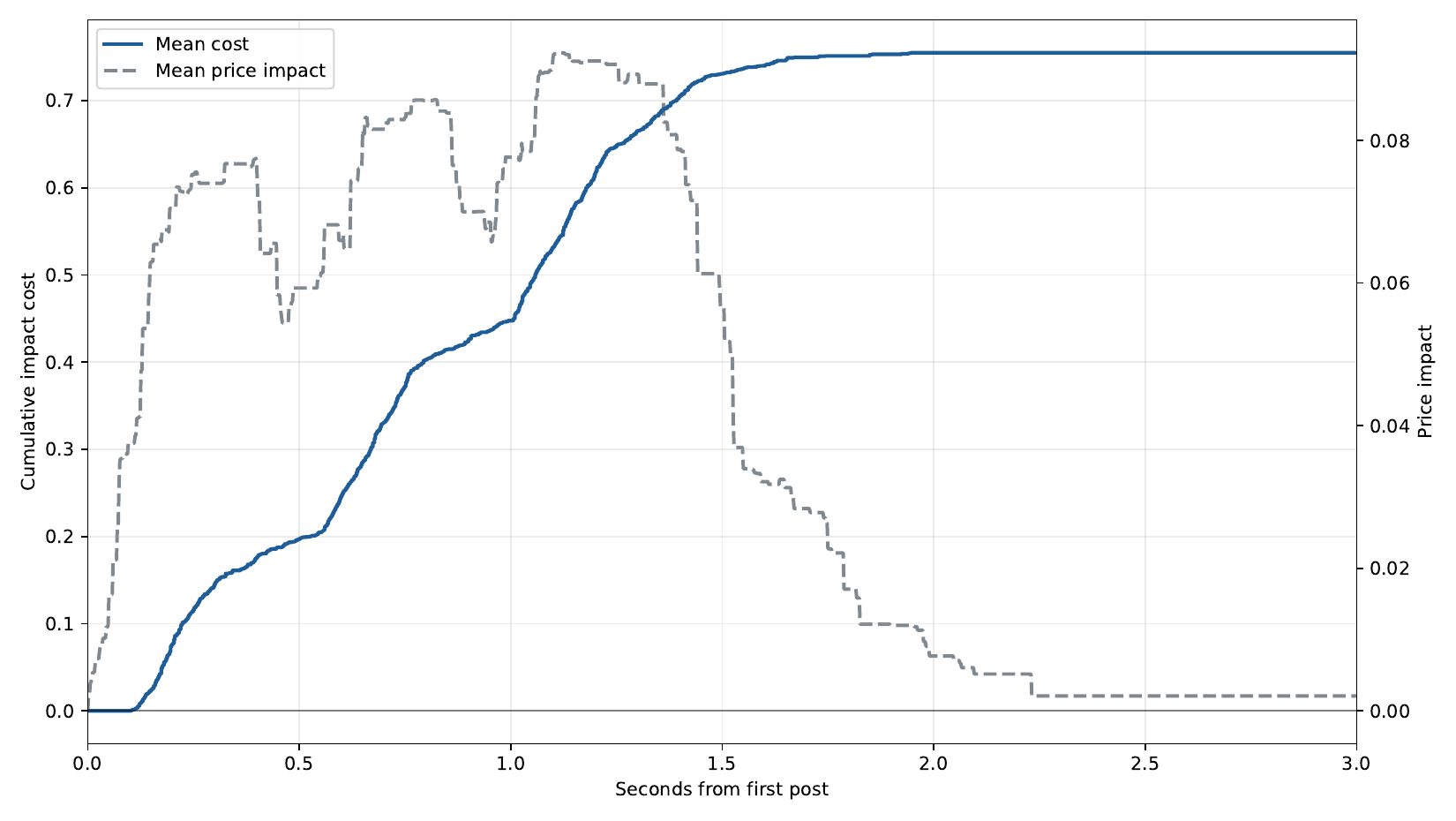}
    \caption{Conditional estimation of market impact and execution costs given
    the observed intervened queue $\wb q$ in the presence of a passive
    metaorder on real data. The numbers are expressed in index point.}
    \label{fig:mi_real_data}
\end{figure}

\begin{remark}
    For illustration, we constructed a specific strategy from the observed order flow. A broker or professional can of course feed the model their own trades and recover their own impact.
\end{remark}

\subsection*{Aknowledgments}
The authors gratefully acknowledge support from the ILB Chair \textit{Artificial Intelligence and Quantitative Methods for Finance} at University Paris Dauphine-PSL. The authors also thank Fabrizio Lillo, Yadh Hafsi, and Jesse Davis for inspiring discussions on market impact.

\FloatBarrier

\bibliographystyle{plainnat}
\bibliography{references}

\appendix

\section{Proof of Proposition \ref{proposition:market_impact_functional}}
\label{sec:prop_functional}
\label{appendix:section_two_one}

Fix $t\ge 0$. For $T\ge t$, set
\begin{equation*}
A_t:=\int_0^t \kappa(q_s^a)\,\dd N_s^a-\int_0^t \kappa(q_s^b)\,\dd N_s^b,
\qquad
Z_{t,T}:=\int_t^T \kappa(q_s^a)\,\dd N_s^a-\int_t^T \kappa(q_s^b)\,\dd N_s^b.
\end{equation*}
Since counting processes have no fixed jump time, $\Delta N_t^a=\Delta N_t^b=0$ a.s. Hence, from Equation \eqref{eq:price},
\begin{equation}
\label{eq:Pt_decomposition_forward_term}
P_t
=
P_0+A_t+\lim_{T\to\infty}\E\!\left[Z_{t,T}\mid\calG_t\right]
\qquad\text{a.s.}
\end{equation}
Therefore it is enough to show that the future term depends on $\calG_t$ only through \\
$(q_t^a,q_t^b,S_t(N^a),S_t(N^b))$.

For $y\in\calD$ and $r\ge 0$, define
\begin{equation*}
\Gamma_t(y)(r):=\int_{[0,t)}\varphi(t+r-s)\,\dd y(s).
\end{equation*}
This map is measurable and depends on $y$ only through $S_t(y)$. For $x\in\{a,b\}$,
\begin{equation}
\label{eq:future_hawkes_intensity}
\lambda_{t+r}^x
=
\mu+\Gamma_t(N^x)(r)+\int_{(t,t+r)}\varphi(t+r-s)\,\dd N_s^x,
\qquad r\ge 0,
\end{equation}
so the pre-$t$ contribution to future Hawkes intensities is fully encoded by
$S_t(N^a),S_t(N^b)$. Let
\begin{equation*}
\widehat\pi_t:=
\bigl(\vartheta_t\pi^{L,a},\vartheta_t\pi^{C,a},\vartheta_t\pi^{N,a},
\vartheta_t\pi^{L,b},\vartheta_t\pi^{C,b},\vartheta_t\pi^{N,b}\bigr),
\quad
(\vartheta_t\pi)(B):=\pi\bigl(\{(t+s,z):(s,z)\in B\}\bigr).
\end{equation*}
By independent increments of Poisson point measures, $\widehat\pi_t$ is independent of
$\calG_t$ and has the law of a fresh copy $\widehat\pi$. Fix deterministic $(u,v,n^a,n^b)\in\bbR^2\times\calD^2$. Given $\widehat\pi$, define for
$x\in\{a,b\}$, with $(u_a,u_b):=(u,v)$,
\begin{align*}
\widehat N_r^x
&=
\int_0^r\!\int_0^\infty
\1_{\left\{\xi\le \mu+\Gamma_t(n^x)(s)+\int_0^{s-}\varphi(s-u)\,\dd\widehat N_u^x\right\}}
\widehat\pi^{N,x}(\dd s,\dd\xi),
\\
\widehat X_r^x
&=
\int_0^r\!\int_0^\infty
\1_{\{z\le\lambda^X(\widehat q_{s-}^x)\}}
\widehat\pi^{X,x}(\dd s,\dd z),
\qquad X\in\{L,C\},
\\
\widehat q_r^x
&=
 u_x+\widehat L_r^x-\widehat C_r^x-\widehat N_r^x.
\end{align*}
This is the original queue/Hawkes system restarted at time $0$ with initial queues $(u,v)$ and
Hawkes prehistory terms $\Gamma_t(n^a),\Gamma_t(n^b)$. By the standard Poissonian pathwise
construction (see, e.g., \cite[Chapter~II]{bremaud1981point}), it has a unique strong solution; hence there exists a measurable map
\begin{equation*}
\Theta_t:\bbR^2\times\calD^2\times\mathfrak M^6\to\calD^4,
\qquad
\Theta_t(u,v,n^a,n^b,\widehat\pi)=(\widehat q^a,\widehat q^b,\widehat N^a,\widehat N^b),
\end{equation*}
which is non-anticipative in $(n^a,n^b)$ through $(S_t(n^a),S_t(n^b))$. Applying the same equations to the actual process after time $t$, and using Equation
\eqref{eq:future_hawkes_intensity}, pathwise uniqueness yields
\begin{equation}
\label{eq:future_solution_map}
\bigl(q_{t+\cdot}^a,q_{t+\cdot}^b,N_{t+\cdot}^a-N_t^a,N_{t+\cdot}^b-N_t^b\bigr)
=
\Theta_t\bigl(q_t^a,q_t^b,N^a,N^b,\widehat\pi_t\bigr)
\qquad\text{a.s.}
\end{equation}

For $n\in\bbN$, define
\begin{equation*}
G_n(\widehat q^a,\widehat q^b,\widehat N^a,\widehat N^b)
:=
\int_0^n \kappa(\widehat q_s^a)\,\dd\widehat N_s^a
-\int_0^n \kappa(\widehat q_s^b)\,\dd\widehat N_s^b.
\end{equation*}
Because $\kappa$ is bounded and $\widehat N^a,\widehat N^b$ are counting paths, $G_n$ is finite and measurable. Set
\begin{equation*}
\mathcal R^{(n)}(t,u,v,n^a,n^b)
:=
\E\!\left[G_n\!\left(\Theta_t(u,v,n^a,n^b,\widehat\pi)\right)\right].
\end{equation*}
Then $\mathcal R^{(n)}$ is measurable, and non-anticipative in $(n^a,n^b)$. From Equation 
\eqref{eq:future_solution_map} and independence of $\widehat\pi_t$ from $\calG_t$,
\begin{equation*}
\E\!\left[Z_{t,t+n}\mid\calG_t\right]
=
\mathcal R^{(n)}\bigl(t,q_t^a,q_t^b,N^a,N^b\bigr)
\qquad\text{a.s.}
\end{equation*}

Define
\begin{equation*}
\mathcal R_t(u,v,n^a,n^b):=\limsup_{n\to\infty}\mathcal R^{(n)}(t,u,v,n^a,n^b).
\end{equation*}
As a $\limsup$ of measurable non-anticipative maps, $\mathcal R_t$ is measurable and non-anticipative.
Since the limit in Equation \eqref{eq:price} exists a.s., evaluating Equation\eqref{eq:Pt_decomposition_forward_term} along
$T=t+n$ gives
\begin{align*}
P_t
&=
P_0+A_t+\lim_{n\to\infty}\mathcal R^{(n)}\bigl(t,q_t^a,q_t^b,N^a,N^b\bigr)
\\
&=
P_0
+\int_0^t \kappa(q_s^a)\,\dd N_s^a
-\int_0^t \kappa(q_s^b)\,\dd N_s^b
+\mathcal R_t\bigl(q_t^a,q_t^b,N^a,N^b\bigr)
\qquad\text{a.s.,}
\end{align*}
which is Equation \eqref{eq:functional_price_decomposition}. This also proves the stated non-anticipativity
property of $\mathcal R_t$.

\section{Poisson random measures, stochastic intensities, and thinning}
\label{appendix:poisson:measures}

Let $(E,\mathcal E)$ be a measurable space with $\sigma$-finite measure $\nu$. We write $\mathbb M(E)$ for the space of counting measures on $E$, equipped with the evaluation $\sigma$-field when measurability is needed. We use the following standard convention for Poisson random measures; see, for example, \cite{kingman1992poisson,daley2003introduction,last2017lectures}.

\begin{definition}
    \label{def:rpm}
    A Poisson random measure on $(E,\mathcal E)$ with intensity measure $\nu$ is a random measure $\pi:\Omega\to\mathbb M(E)$ such that:
    \begin{itemize}
        \item for every $\omega\in\Omega$, $\pi(\omega)$ is a counting measure on $(E,\mathcal E)$;
        \item if $C_1,\ldots,C_p\in\mathcal E$ are disjoint, then $\pi(C_1),\ldots,\pi(C_p)$ are independent;
        \item for every $C\in\mathcal E$ with $\nu(C)<\infty$, $\pi(C)$ has law $\Poiss(\nu(C))$.
    \end{itemize}
    We write $\Poisson(\nu)$ for its law.
\end{definition}

The following finite-intensity construction is the elementary scattering representation used in the simulation arguments; see \cite[Proposition~2.1.6]{RPG}.

\begin{proposition}[Construction of a Poisson measure with finite intensity]
    Let $\nu$ be a finite measure on $(E,\mathcal E)$. Let $M\sim\Poiss(\nu(E))$, and, on $\{\nu(E)>0\}$, let $(X_i)_{i\ge1}$ be i.i.d. with law $\nu(E)^{-1}\nu$, independent of $M$; when $\nu(E)=0$, set $M=0$. Then
    \begin{equation*}
    \Phi:=\sum_{i=1}^{M}\delta_{X_i}
    \end{equation*}
    is a Poisson random measure with intensity $\nu$.

    The construction extends to $\nu$ $\sigma$-finite with $E = \bigcup_{n \in \mathbb{N}} E_n$ for disjoint $E_n$ and $\nu(E_n)<+\infty$. One constructs $\Phi_n$ on each $E_n$ as previously explained as independent Poisson measures with intensity $\nu(-\cap E_n)$, and then  sets $\Phi := \sum_{n \in \mathbb{N}} \Phi_n$.
\end{proposition}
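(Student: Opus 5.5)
We prove the finite case by computing the joint law of $(\Phi(C_1),\ldots,\Phi(C_p))$ for disjoint $C_1,\ldots,C_p\in\mathcal E$ directly. Then we pass to the $\sigma$-finite case by superposition of independent pieces.

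\textbf{Trivial points.} For every $\omega$, $\Phi(\omega)=\sum_{i\le M(\omega)}\delta_{X_i(\omega)}$ is a finite sum of Dirac masses, hence a counting measure. For each $C\in\mathcal E$, $\Phi(C)=\sum_{i\ge1}\1_{\{i\le M\}}\1_C(X_i)$ is a random variable, so $\Phi$ is measurable for the evaluation $\sigma$-field. If $\nu(E)=0$, then $\Phi=0$, which is the Poisson measure with zero intensity. We therefore assume $0<\nu(E)<\infty$.

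\textbf{Finite case.} Complete the family by $C_0:=E\setminus\bigcup_{j\ge1}C_j$ and set $p_j:=\nu(C_j)/\nu(E)$, so that $\sum_{j=0}^p p_j=1$.
\begin{itemize}
\item Conditionally on $\{M=m\}$, independence of $(X_i)$ from $M$ shows that $(\Phi(C_0),\ldots,\Phi(C_p))$ is multinomial with parameters $(m;p_0,\ldots,p_p)$.
\item For $n_0+\cdots+n_p=m$ we get
\begin{equation*}
\mathbb P\bigl(\Phi(C_j)=n_j,\ 0\le j\le p\bigr)
= e^{-\nu(E)}\frac{\nu(E)^m}{m!}\,\frac{m!}{n_0!\cdots n_p!}\prod_{j=0}^p p_j^{n_j}
=\prod_{j=0}^p e^{-\nu(C_j)}\frac{\nu(C_j)^{n_j}}{n_j!}.
\end{equation*}
This uses $\sum_j\nu(C_j)=\nu(E)$ and $\nu(E)^m\prod_j p_j^{n_j}=\prod_j\nu(C_j)^{n_j}$.
\item The joint law therefore factorizes into $\Poiss(\nu(C_j))$ marginals. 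This yields both independence of $\Phi(C_1),\ldots,\Phi(C_p)$ and $\Phi(C)\sim\Poiss(\nu(C))$, the latter by taking $p=1$.
\end{itemize}
Equivalently, one could compute $\E[\prod_j s_j^{\Phi(C_j)}]=\exp\bigl(\sum_j\nu(C_j)(s_j-1)\bigr)$ by conditioning on $M$. The multinomial identity above is the heart of this step.

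\textbf{$\sigma$-finite case.} Let the $\Phi_n$ be independent Poisson measures with intensities $\nu_n:=\nu(\cdot\cap E_n)$, as built above, and set $\Phi=\sum_n\Phi_n$.
\begin{itemize}
\item Each $\Phi(\omega)$ is a countable sum of counting measures, hence a counting measure (with values in $\bbN\cup\{\infty\}$).
\item For disjoint $C_1,\ldots,C_p$, each $\Phi(C_j)=\sum_n\Phi_n(C_j\cap E_n)$ is a measurable function of the array $(\Phi_n(C_j\cap E_n))_{n,j}$.
\item The entries of this array are mutually independent. Within a fixed $n$ this follows from the finite case applied to the disjoint sets $C_j\cap E_n$; across different $n$ it follows from independence of the $\Phi_n$.
\item By grouping (the $j$-th column is a function of one block of an independent family), $\Phi(C_1),\ldots,\Phi(C_p)$ are independent.
\item If $\nu(C)<\infty$, then $\Phi(C)$ is a countable sum of independent $\Poiss(\nu(C\cap E_n))$ variables. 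Monotone convergence of the generating functions gives $\E[s^{\Phi(C)}]=\prod_n e^{\nu(C\cap E_n)(s-1)}=e^{\nu(C)(s-1)}$. Hence $\Phi(C)\sim\Poiss(\nu(C))$, and in particular $\Phi(C)$ is a.s. finite.
\end{itemize}

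\textbf{Main obstacle.} The only delicate step is this countable superposition: justifying that the infinite sum of independent Poisson variables is Poisson when $\nu(C)<\infty$, and that independence survives the countable grouping. Both follow from monotone convergence and the standard grouping lemma for independent families. Sets with $\nu(C)=\infty$ impose no distributional requirement in Definition~\ref{def:rpm} beyond independence, which the grouping argument already covers.
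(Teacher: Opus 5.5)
Your proof is correct, and it takes a different route from the one the paper relies on. The paper gives no proof of its own and simply cites \cite[Proposition~2.1.6]{RPG}. The textbook argument usually goes through the Laplace functional: conditioning on $M$ gives $\E\bigl[e^{-\int f\,\dd\Phi}\bigr]=\sum_{m\ge0}e^{-\nu(E)}\frac{\nu(E)^m}{m!}\bigl(\nu(E)^{-1}\int e^{-f}\,\dd\nu\bigr)^m=\exp\bigl(-\int(1-e^{-f})\,\dd\nu\bigr)$ for measurable $f\ge0$. The $\sigma$-finite case then follows by multiplying the Laplace functionals of the independent $\Phi_n$ and passing to the limit by monotone convergence. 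The axioms of Definition~\ref{def:rpm} are read off by taking $f=\sum_j t_j\1_{C_j}$ and using uniqueness of multivariate Laplace transforms.

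You instead compute the joint probability mass function on the partition $C_0,\dots,C_p$ via the multinomial identity. In the $\sigma$-finite case you combine an explicit grouping argument for the independent array $(\Phi_n(C_j\cap E_n))_{n,j}$ with generating functions.

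What each route buys:
\begin{itemize}
\item Your route is fully elementary. It verifies exactly the definition the paper uses, without any transform-uniqueness theorem.
\item The Laplace route is shorter and treats both cases with a single identity. Independence of counts on disjoint sets comes for free from the factorization, rather than through a grouping lemma. It is also the formalism of the paper's Laplace-form corollary to Theorem~\ref{thm:conditionallaw}.
\end{itemize}

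One harmless technicality concerns the $\sigma$-finite case. There, $\Phi_n(C)=\Phi_n(C\cap E_n)$ holds only almost surely unless the points of $\Phi_n$ are chosen $E_n$-valued. Also, $\Phi(\omega)$ can be infinite on sets of infinite $\nu$-measure. This is fine under your $\bbN\cup\{\infty\}$-valued reading of ``counting measure''. Under the locally finite notion of Appendix~\ref{sec:conditional_measures}, however, you would additionally need $\nu$ to be locally finite and a modification of $\Phi$ on a null set.
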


\begin{example}
\label{ex:simulation_method}
    A canonical example is a Poisson random measure on $\R_+^2$ with Lebesgue intensity. On the rectangle $[0,T]\times[0,M]$, its support can be generated through either of the following equivalent procedures:
    \begin{itemize}
        \item Sample $N\sim\Poiss(TM)$, then sample $((T_i,Z_i))_{i=1}^N$ i.i.d. with law $\mathcal{U}([0,T]\times[0,M])$.
        \item Generate i.i.d. pairs $((\tau_i,Z_i))_{i\ge1}$ with $\tau_i\sim\mathrm{Exp}(M)$ and $Z_i\sim\mathcal{U}([0,M])$, set $T_i=\sum_{j=1}^i\tau_j$, and stop at the first index $N$ such that $T_N>T$. 
    \end{itemize}
    This method is illustrated in Figure \ref{fig:my_image}.

    \begin{figure}[!htbp]
        \centering
        \includegraphics[width=0.9\textwidth]{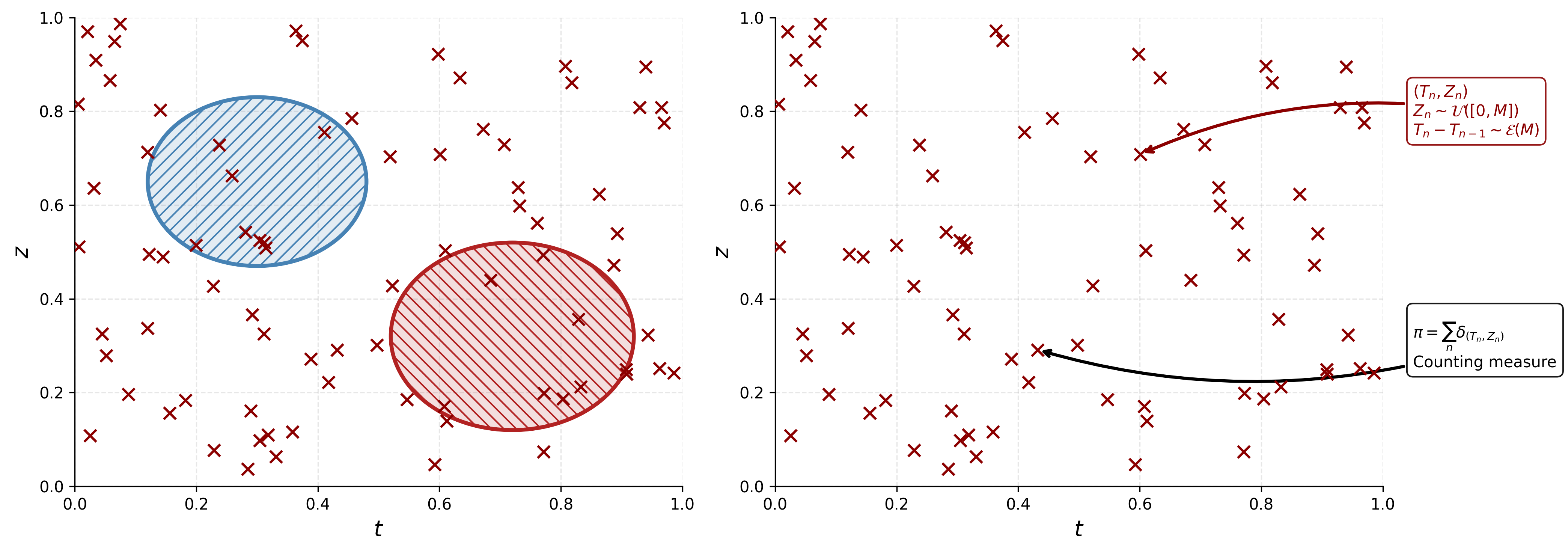}
        \caption{Poisson scattering on a rectangle and equivalent sampling descriptions.}
        \label{fig:my_image}
    \end{figure}
\end{example}

We now connect this geometric picture to counting processes with stochastic intensities.

\begin{definition}[Counting process]
    A counting process is an adapted process $N=(N_t)_{t\ge0}$ with values in $\mathbb N$, non-decreasing and right-continuous, such that $N_0=0$ and all jumps have size one. Unless stated otherwise, we work with non-explosive counting processes on finite horizons.
\end{definition}

The intensity is the predictable local rate of arrival. Informally,
\begin{equation*}
\mathbb P(N_{t+h}-N_t=1\mid\mathcal F_{t-})=\lambda_t h+o(h),
\qquad h\downarrow0,
\end{equation*}
with probability of two or more jumps of smaller order. The martingale-compensator definition is the form used below.

\begin{definition}[Counting process with stochastic intensity]
    Let $N=(N_t)_{t\ge0}$ be an $(\mathcal F_t)$-adapted counting process. We say that $N$ admits the $(\mathcal F_t)$-intensity $\lambda=(\lambda_t)_{t\ge0}$ if:
    \begin{enumerate}
        \item $\lambda$ is non-negative, locally integrable, and $(\mathcal F_t)$-predictable;
        \item $N_t-\int_0^t\lambda_s\,\dd s$ is an $(\mathcal F_t)$-local martingale.
    \end{enumerate}
\end{definition}

The constructive link with Poisson random measures is thinning \cite[Proposition~2.2.6]{RPG}.

\begin{proposition}[Thinning]
    Let $\Phi=\sum_{k\ge1}\delta_{X_k}$ be a Poisson random measure with intensity $\nu$ on $(E,\mathcal E)$. Let $(U_k)_{k\ge1}$ be i.i.d. uniform random variables on $[0,1]$, independent of $\Phi$, and let $p:E\to[0,1]$ be measurable. Then
    \begin{equation*}
    \sum_{k\ge1}\ind_{\{U_k\le p(X_k)\}}\delta_{X_k}
    \end{equation*}
    is a Poisson random measure with intensity $p(u)\nu(\dd u)$.
\end{proposition}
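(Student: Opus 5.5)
The plan is to prove the statement through Laplace functionals, since they characterize the law of a random counting measure. First I reduce to the finite case. Write $E=\bigcup_n E_n$ with disjoint $E_n$ of finite $\nu$-measure. The restrictions $\Phi_{|E_n}$ are independent Poisson random measures with intensities $\nu(\,\cdot\cap E_n)$. The uniforms attached to their atoms are i.i.d. and independent of $\Phi$, so the thinned restrictions are independent as well. A countable sum of independent Poisson random measures with intensities $p\,\nu(\,\cdot\cap E_n)$ is Poisson with intensity $p\,\nu$. It therefore suffices to treat finite $\nu$. In that case I use the finite-intensity construction above: $\Phi=\sum_{i=1}^M\delta_{X_i}$ with $M\sim\Poiss(\nu(E))$ and $X_i$ i.i.d. with law $\nu(E)^{-1}\nu$. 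This also gives a measurable enumeration of the atoms, so that pairing atom $X_k$ with $U_k$ is well defined.

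Next I compute the Laplace functional of $\Psi:=\sum_k\ind_{\{U_k\le p(X_k)\}}\delta_{X_k}$. Fix a measurable $h:E\to\R_+$ and condition on $\Phi$, i.e. on $M$ and $(X_i)$. By independence of the $U_k$,
\begin{equation*}
\E\Big[e^{-\int h\,\dd\Psi}\,\Big|\,\Phi\Big]=\prod_{i=1}^M\bigl(1-p(X_i)+p(X_i)e^{-h(X_i)}\bigr)=\prod_{i=1}^M\bigl(1-p(X_i)(1-e^{-h(X_i)})\bigr).
\end{equation*}
Taking expectations over the $X_i$, and then over $M$ with the Poisson generating function, gives
\begin{equation*}
\E\Big[e^{-\int h\,\dd\Psi}\Big]=\exp\Big(-\nu(E)\Big(1-\nu(E)^{-1}\!\int\bigl(1-p(1-e^{-h})\bigr)\dd\nu\Big)\Big)=\exp\Big(-\int_E(1-e^{-h})\,p\,\dd\nu\Big).
\end{equation*}
This is the Laplace functional of $\Poisson(p\,\nu)$.

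To conclude, I check that $\Psi$ is a random counting measure, which follows from the measurable enumeration. Since the Laplace functional determines the law of a random measure, $\Psi\sim\Poisson(p\,\nu)$. As a sanity check, the items of Definition~\ref{def:rpm} can also be recovered directly. Take $h=\sum_j t_j\ind_{C_j}$ for disjoint $C_j$. The functional then factorizes as $\prod_j\exp\bigl(-(1-e^{-t_j})\int_{C_j}p\,\dd\nu\bigr)$. This gives independence of the counts $\Psi(C_j)$ and the laws $\Poiss\bigl(\int_{C_j}p\,\dd\nu\bigr)$.

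The main obstacle is not the computation but the bookkeeping. The labels $U_k$ must be attached to atoms in a way that is measurable and does not depend on the chosen enumeration. Independence must also be preserved when the finite pieces are glued together in the $\sigma$-finite case. I will handle both through the explicit construction: i.i.d. points within each $E_n$, and an independent family of uniforms for each $E_n$. The resulting law does not depend on the enumeration, because the $U_k$ are exchangeable and independent of $\Phi$.
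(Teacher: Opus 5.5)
Your proof is correct. The paper gives no argument for this proposition; it only cites \cite[Proposition~2.2.6]{RPG}. So your Laplace-functional computation is a self-contained substitute rather than a variant of a proof in the text. Your closing factorization with $h=\sum_j t_j\ind_{C_j}$ is exactly the verification that Definition~\ref{def:rpm} requires.

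Two points would tighten the argument. First, the reduction to finite $\nu$ is unnecessary. Conditioning on the atom sequence gives
\[
\E\bigl[e^{-\int h\,\dd\Psi}\,\big|\,(X_k)_{k\ge1}\bigr]=\prod_{k\ge1}\bigl(1-p(X_k)(1-e^{-h(X_k)})\bigr)=\exp\Bigl(-\int g\,\dd\Phi\Bigr),\qquad g:=-\log\bigl(1-p(1-e^{-h})\bigr).
\]
Since $1-e^{-g}=p(1-e^{-h})$, the Laplace functional of $\Poisson(\nu)$, which is valid for $\sigma$-finite $\nu$, yields $\exp(-\int p(1-e^{-h})\,\dd\nu)$ in one line.

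Second, this display is the precise reason you may replace the given $\Phi$ by the explicit construction $\sum_{i\le M}\delta_{X_i}$ and glue the pieces $E_n$. The conditional Laplace functional is a function of $\Phi$ alone, so the law of $\Psi$ depends only on the law of $\Phi$. What this uses is that $(U_k)$ is independent of the enumerated sequence $(X_k)$, for instance because the enumeration is a measurable function of $\Phi$. Exchangeability of the $U_k$ is not the relevant property. An enumeration allowed to depend on the marks would break the conclusion: pairing the smallest uniforms with atoms in a chosen region biases the retained intensity even for constant $p$.

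A common alternative route goes through the marking theorem. The measure $\sum_k\delta_{(X_k,U_k)}$ is Poisson on $E\times[0,1]$ with intensity $\nu\otimes\dd u$. Then $\Psi$ is the projection onto $E$ of its restriction to $\{(x,u):u\le p(x)\}$, so the restriction and mapping theorems give intensity $p\,\nu$. Your direct computation avoids the marking theorem, at the price of the conditioning step above.
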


If $\pi$ is a Poisson random measure on $\R_+\times\R_+$ with Lebesgue intensity and $\lambda$ is predictable, then
\begin{equation*}
N_t=\int_0^t\int_{\R_+}\ind_{\{z\le\lambda_s\}}\,\pi(\dd s,\dd z)
\end{equation*}
has compensator $\int_0^t\lambda_s\,\dd s$ under the usual integrability assumptions; see \cite[Chapter~II]{bremaud1981point}. Equivalently, under a local deterministic upper bound $c$, proposal times are generated by a rate-$c$ clock and accepted with probability $\lambda/c$, which is the Lewis--Ogata simulation principle \cite{lewis1979simulation,ogata1981lewis}.

\begin{lemma}[Law of the next accepted jump]
\label{lem:next_jump}
    Let $N$ have predictable intensity $\lambda$ and first jump time
    \begin{equation*}
    \widetilde T:=\inf\{s>0:\Delta N_s=1\}.
    \end{equation*}
    Assume $\lambda_t\le c$ a.s. on $[0,\widetilde T]$ for some $c>0$. Let $(T_n)_{n\ge1}$ be the event times of a rate-$c$ Poisson process, and let $(U_n)_{n\ge1}$ be i.i.d. uniform variables on $[0,1]$, independent of that clock. Define
    \begin{equation*}
    S:=\inf\{n\ge1:\ U_n\le \lambda_{T_n}/c\}.
    \end{equation*}
    Then $T_S$, with $T_S=\infty$ on $\{S=\infty\}$, has the same law as $\widetilde T$.
\end{lemma}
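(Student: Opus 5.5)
The plan is to read the candidate pairs $(T_n,cU_n)$ as the atoms of a Poisson random measure, and to show that the first accepted atom has the same survival function as $\widetilde T$. Set
\begin{equation*}
\Phi:=\sum_{n\ge1}\delta_{(T_n,\,cU_n)}.
\end{equation*}
The $T_n$ are the points of a rate-$c$ Poisson clock and the $cU_n$ are i.i.d.\ $\mathcal U([0,c])$ marks independent of it. By the marking form of the finite-intensity construction recalled above (the second procedure of Example~\ref{ex:simulation_method}), $\Phi$ is a Poisson random measure on $\R_+\times[0,c]$ with Lebesgue intensity. The acceptance rule $U_n\le\lambda_{T_n}/c$ reads $cU_n\le\lambda_{T_n}$. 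Hence $T_S$ is the abscissa of the first atom of $\Phi$ lying in the hypograph $\{(s,z):z\le\lambda_s\}$, and
\begin{equation*}
\{T_S>t\}=\Bigl\{\Phi\bigl(\{(s,z):\,s\le t,\ z\le\lambda_s\}\bigr)=0\Bigr\}.
\end{equation*}

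Two points must be made precise: what $\lambda$ is along the constructed clock, and that the bound $\lambda\le c$ lets the truncation at height $c$ lose nothing. In the setting of Section~\ref{sec:primer}, $\lambda_t=f(N,t)$ with $f$ non-anticipative. On $\{t\le\widetilde T\}$ the pre-$t$ path of $N$ is identically zero, so $\lambda_t=g(t):=f(0,t)$ is a deterministic function there. On the simulated side I use the same $g$ up to the first acceptance. Since $g\le c$ on $[0,\widetilde T]$, the region $\{z\le g(s)\}$ lies inside $\R_+\times[0,c]$. Poisson void probabilities then give
\begin{equation*}
\mathbb P(T_S>t)=\exp\Bigl(-\int_0^t g(s)\,\dd s\Bigr),
\end{equation*}
where the event $\{S=\infty\}$ corresponds to $t\to\infty$.

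For the original process I compute the same survival function by a compensator argument. The stopped process $M_t:=\ind_{\{\widetilde T\le t\}}-\int_0^{t\wedge\widetilde T}g(s)\,\dd s$ is a (local, and bounded on finite horizons since $g\le c$) martingale. Solving the resulting linear equation for $F(t):=\mathbb P(\widetilde T>t)$, namely $F(t)=1-\int_0^t g(s)F(s)\,\dd s$, gives $F(t)=\exp(-\int_0^t g)$, which matches the law of $T_S$. Alternatively, one can note directly that $t\mapsto\ind_{\{\widetilde T>t\}}\exp\bigl(\int_0^{t}g\bigr)$ is a martingale. For a general predictable $\lambda$ whose pre-jump value is random, the same argument can be run conditionally on $\mathcal F_0$, or one can invoke the fact that a one-jump process is determined in law by its compensator.

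I expect the main obstacle to be this identification step: the acceptance test must use the intensity driven by the constructed path rather than by the original $N$, and the pre-jump intensity must be shown to coincide on both sides. I would first settle it by using non-anticipativity to reduce to the deterministic pre-jump function $g$, as above; the remaining computations are then routine void-probability and ODE calculations.
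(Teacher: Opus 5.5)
The paper does not prove this lemma. It states it as the Lewis--Ogata simulation principle, with pointers to Lewis--Shedler, Ogata and Br\'emaud (Chapter~II), and then invokes it componentwise, conditionally on $\mathcal F^{N^\pi}_{T_n^\pi}$, in the proof of the mark law of Theorem~\ref{thm:conditionallaw}.

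Your argument is correct and supplies the missing proof by a self-contained route:
marking makes $\sum_n\delta_{(T_n,cU_n)}$ a Poisson measure on $\R_+\times[0,c]$; a void probability gives $\mathbb P(T_S>t)=\exp\bigl(-\int_0^t g(s)\,\dd s\bigr)$; and the stopped compensator identity gives the same survival function for $\widetilde T$.

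The cited result is broader, since Ogata's theorem covers the whole path and refreshes the bound after each accepted jump. Yours treats one jump only and relies on the iteration the paper performs anyway.

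What your route buys is precision about hypotheses the statement leaves implicit. Read literally, $\lambda_{T_n}$ for $T_n>\widetilde T$ is the post-jump intensity of the original $N$, which is not assumed to be $\le c$. With that reading $T_S$ is in general not distributed as $\widetilde T$; for a Hawkes process with $\varphi\not\equiv0$ and $c>\mu$ it is stochastically smaller. The test has to be run along the constructed, jump-free path, i.e.\ with $g=f(0,\cdot)$ and a clock independent of it. This is exactly what the paper does implicitly with its frozen intensities $\lambda_{n,j}(u)=f_j(S_{T_n^\pi}(N^\pi),T_n^\pi+u)$.

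Your Poisson-measure picture also gives for free something that proof actually uses but the lemma does not state. Given $T_S=s$, the accepted mark $cU_S$ is uniform on $[0,g(s)]$, and with $d$ components the accepted type is $k$ with probability $g_k(s)/\sum_\ell g_\ell(s)$.

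Two small repairs are needed.

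First, the hypothesis only gives $g(s)\le c$ for those $s$ with $\mathbb P(\widetilde T\ge s)>0$. So the claim that the hypograph lies in $\R_+\times[0,c]$ needs one more line. Your Volterra equation only involves $g(s)\,\mathbb P(\widetilde T>s)$, so it yields $\mathbb P(\widetilde T>t)\ge e^{-ct}>0$ for every $t$, and therefore $g\le c$ on all of $\R_+$.

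Second, drop the second option in your closing remark. Conditioning on $\mathcal F_0$ works when the filtration is $\mathcal F_0\vee\calF^N$, since pre-jump predictable processes are then $\mathcal F_0$-measurable functions of time, provided the clock is independent of $\mathcal F_0$. But the ``law determined by the compensator'' principle is a statement about that internal filtration. In a general filtration carrying extra information correlated with $\{\widetilde T>t\}$, one can have $\lambda=h$ on $[0,\widetilde T]$ with $\mathbb P(\widetilde T>t)\neq\E\bigl[\exp\bigl(-\int_0^t h_s\,\dd s\bigr)\bigr]$. In that case the lemma itself fails.
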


\begin{example}[Hawkes process simulation]
    For a Hawkes process \cite{hawkes1971spectra} with intensity
    \begin{equation*}
    \lambda_t=\mu+\int_0^{t-}\varphi(t-s)\,\dd N_s,
    \end{equation*}
    where $\mu>0$ and non-negative $\varphi$, Ogata's algorithm updates the dominating bound after each accepted jump, so only a local bound is required between jumps \cite{ogata1981lewis,bacry2015hawkes}. 
    The last step of this process is illustrated on Figure~\ref{fig:hawkes_count}.
    \begin{figure}[!htbp]
        \centering
        \includegraphics[width=0.9\textwidth]{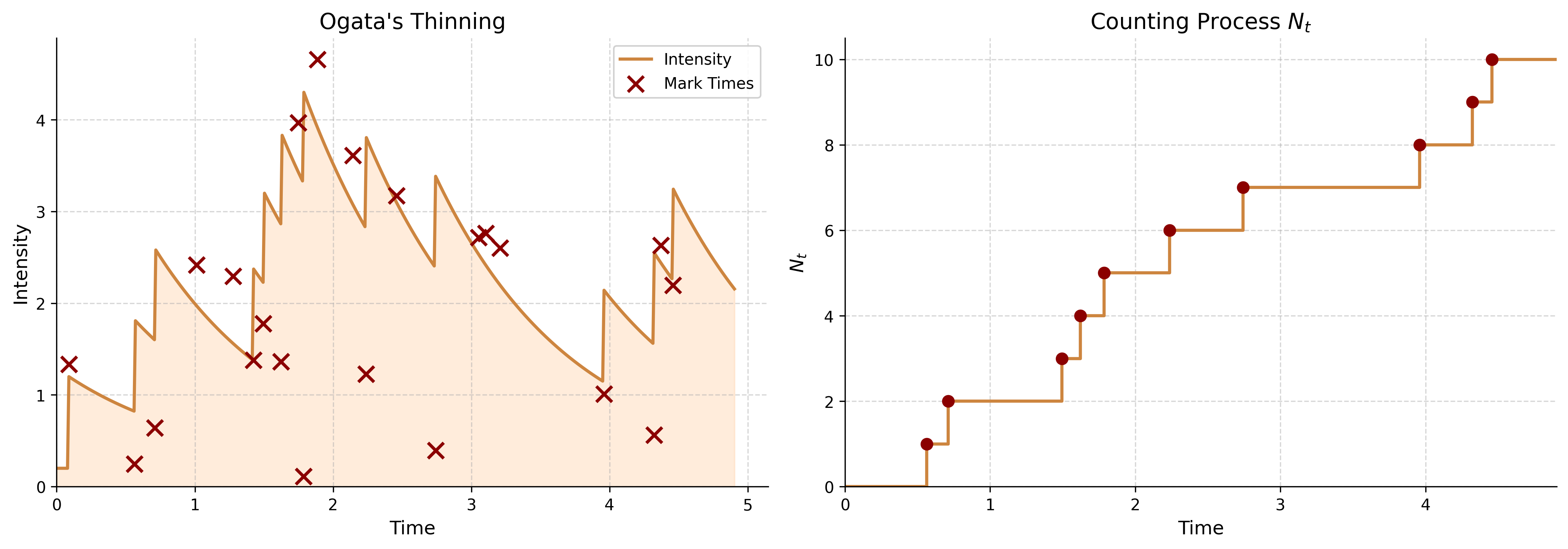}
        \caption{Thinning procedure for a Hawkes process.}
        \label{fig:hawkes_count}
    \end{figure}
\end{example}

\section{Proof of Theorem \ref{thm:conditionallaw}}
\label{sec:conditional_measures}
\label{appendix:section_three_tree}

Throughout this appendix, we work under the notation of Section~\ref{sec:cond_distribution}. We first recall some general definitions and results on random counting measures, random closed sets, and stopping sets, which are used in the proof of Theorem~\ref{thm:conditionallaw}. We then give a detailed structure of the proof, and finally prove the remaining lemmas.


\subsection{Stopping sets, measurability, and locality}

Throughout this subsection, we work on a general locally compact second-countable Hausdorff space $\mathbb G$ with Borel $\sigma$-field $\mathcal B(\mathbb G)$. 

\begin{definition}[Counting measures]
\label{def:counting_measure}
The space $\mathbb M(\mathbb G)$ consists of measures $\mu$ on $(\mathbb G,\mathcal B(\mathbb G))$ such that $\mu(C)\in\mathbb N$ for every relatively compact $C\in\mathcal B(\mathbb G)$. It is equipped with the $\sigma$-algebra
\begin{equation*}
\mathcal M(\mathbb G):=\sigma\bigl(\mu\mapsto\mu(C):\ C\in\mathcal B(\mathbb G)\bigr).
\end{equation*}
\end{definition}

Note that counting measures are locally finite. This structure is used repeatedly when applying measurable operations such as restriction, closure, and conditioning to Poisson configurations. For random closed sets, we use the Fell (hit-or-miss) topology.

\begin{definition}
Let $\mathcal C(\mathbb G)$ be the set of closed subsets of $\mathbb G$, equipped with the Fell topology, generated by
\begin{equation*}
\mathcal C^K_{O_1,\dots,O_n}
:=
\Bigl\{
F\in\mathcal C(\mathbb G):\ F\cap K=\varnothing,\ F\cap O_i\neq\varnothing,\ i=1,\dots,n
\Bigr\},
\end{equation*}
with $K\subset\mathbb G$ compact and $O_i\subset\mathbb G$ open. Let $\mathcal K(\mathbb G)\subset\mathcal C(\mathbb G)$ be the compact subsets.
\end{definition}

We refer to \cite[Definition 5.1.1]{TCCS} for a detailed presentation of the Fell topology. Note that the Borel $\sigma$-field on $\mathcal C(\mathbb G)$ is generated by hit sets
\begin{equation*}
\mathcal C_O:=\{F\in\mathcal C(\mathbb G):\ F\cap O\neq\varnothing\},
\qquad O\subset\mathbb G\ \text{open}.
\end{equation*}
Moreover, following \cite[Theorem 5.1.10.]{TCCS}; the Fell topology is exactly the topology induced by the Hausdorff distance when $\mathbb G$ is compact.

\begin{definition}[Random counting measure]
\label{def:random_counting_measure_appendix}
A random counting measure on $(\mathbb G,\mathcal B(\mathbb G))$ is a measurable map
\begin{equation*}
\pi:(\Omega,\mathcal F)\to(\mathbb M(\mathbb G),\mathcal M(\mathbb G)).
\end{equation*}
A Poisson random measure with intensity $\nu$ is a random counting measure whose finite-dimensional marginals satisfy Definition~\ref{def:rpm}.
\end{definition}

\begin{definition}[Stopping set]
Let $\pi$ be a random counting measure on $\mathbb G$, and let $S:\mathbb M(\mathbb G)\to\mathcal K(\mathbb G)$ be measurable. We call $S(\pi)$ a stopping set (with respect to $\pi$) if
\begin{equation*}
\{S(\pi)\subset K\}\in\sigma(\pi_{|K}),\qquad \forall K\in\mathcal K(\mathbb G).
\end{equation*}
\end{definition}

\begin{theorem}[Strong Markov property of Poisson point processes, {\cite[Theorem 12.1.3]{RPG}}]
\label{thm:strongmarkov}
Let $\pi$ be a Poisson random measure on $\mathbb G$, and let $S(\pi)$ be a stopping set with respect to $\pi$. Then, for every measurable map $F:\mathbb M(\mathbb G)\to\mathbb R_+$,
\begin{equation*}
\E[F(\pi)]
=
\E\!\left[F\!\left((\pi_{|S(\pi)})\cup(\pi'_{|S(\pi)^c})\right)\right],
\end{equation*}
where $\pi'$ is an independent copy of $\pi$.
\end{theorem}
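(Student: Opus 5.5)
The plan is the classical two-step argument: first prove the identity for stopping sets taking at most countably many values, then approximate a general stopping set from outside by discrete ones and pass to the limit on Laplace functionals. Throughout, since $\pi$ and $\pi'$ are independent copies, the pair $(\pi_{|C},\pi_{|C^c})$ has the same law as $(\pi_{|C},\pi'_{|C^c})$ for every deterministic Borel set $C$. This is the Poisson scattering property: restrictions to disjoint deterministic sets are independent.

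\textbf{Step 1 (discrete stopping sets).} Assume $S(\pi)$ takes values in a countable family $\mathcal K_0\subset\mathcal K(\mathbb G)$ that is closed under finite intersections. For $K\in\mathcal K_0$, I will check that $\{S(\pi)=K\}$ belongs to $\sigma(\pi_{|K})$. This follows from
\begin{equation*}
\{S(\pi)=K\}=\{S(\pi)\subset K\}\setminus\bigcup_{K'\in\mathcal K_0,\,K'\subsetneq K}\{S(\pi)\subset K'\},
\end{equation*}
since each event on the right lies in $\sigma(\pi_{|K'})\subset\sigma(\pi_{|K})$. Because $\pi_{|K}$ is independent of $\pi_{|K^c}$ and distributed like $\pi'_{|K^c}$, we get
\begin{equation*}
\E\bigl[F(\pi)\ind_{\{S(\pi)=K\}}\bigr]=\E\bigl[F(\pi_{|K}\cup\pi'_{|K^c})\ind_{\{S(\pi)=K\}}\bigr].
\end{equation*}
Summing over $K\in\mathcal K_0$ by monotone convergence, using $F\ge0$, gives the theorem in this case.

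\textbf{Step 2 (approximation).} Fix a countable base of relatively compact open sets and the associated increasing finite families $\mathcal U_n$ of compact closures. Define $S_n(\pi)$ as the intersection of all finite unions of elements of $\mathcal U_n$ whose interior contains $S(\pi)$; restricting to a fixed compact exhaustion keeps the value set countable and closed under intersection.
\begin{itemize}
\item $S_n$ takes countably many values and $S_n\downarrow S$ in the Fell topology.
\item $S_n$ is again a stopping set: the event $\{S_n\subset K\}$ is a countable union of events $\{S\subset K'\}$ with $K'\subset K$, each of which lies in $\sigma(\pi_{|K})$.
\end{itemize}
Checking this measurability and the value-set properties carefully is, I expect, the main technical obstacle. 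It is where the structure of $\mathbb G$ (locally compact, second countable) and the Fell $\sigma$-field are used.

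\textbf{Step 3 (limit).} Take $F(\mu)=\exp(-\int h\,\dd\mu)$ with $h\ge0$ continuous and compactly supported. The quantities $\pi(S_n\setminus S)$ and $\pi'(S_n\setminus S)$ restricted to $\operatorname{supp}h$ are finite, and they decrease to $0$ a.s. because $\bigcap_n S_n=S$ and both measures are locally finite. Hence
\begin{equation*}
\int h\,\dd(\pi_{|S_n}\cup\pi'_{|S_n^c})\longrightarrow\int h\,\dd(\pi_{|S}\cup\pi'_{|S^c})\quad\text{a.s.}
\end{equation*}
Dominated convergence (since $F\le1$) transfers Step 1 to $S$ for all such $F$. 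Laplace functionals over this class of $h$ determine the law of a random counting measure, so the two random measures $\pi$ and $\pi_{|S(\pi)}\cup\pi'_{|S(\pi)^c}$ have the same law. The identity therefore holds for every measurable $F\ge0$, by a monotone class argument.
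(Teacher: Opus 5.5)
Your argument is correct. It differs from the paper because the paper does not prove this statement at all: Theorem~\ref{thm:strongmarkov} is imported from \cite[Theorem~12.1.3]{RPG}. It is then used as a black box in Step~1 of the proof of Theorem~\ref{thm:conditionallaw} and again for the sets $B_n(\pi)$ in the proof of the law of the marks.

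What you supply is the classical self-contained proof. For countably-valued stopping sets the identity is plain Poisson scattering on each event $\{S(\pi)=K\}$, which lies in $\sigma(\pi_{|K})$. The general case then follows by outer approximation and a Laplace-functional limit. Citing keeps the paper short. Your route shows exactly which hypotheses carry the argument: $S(\pi)$ must be compact and $\pi,\pi'$ locally finite, so that $\pi(\operatorname{supp}h\cap(S_n\setminus S))\downarrow0$, and likewise for $\pi'$. This is why the paper needs the bound \eqref{eq:finite_horizon_non_explosion} to make $B(\pi)$ compact in Lemma~\ref{lem:closuremeasurability}.

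Two loose ends in your Step~2 are easy to close.
\begin{itemize}
\item Your $S_n$ is undefined when no finite union from $\mathcal U_n$ covers $S(\pi)$. Use the index $n\vee N(S)$ instead, where $N(S)$ is the first index at which such a cover exists. This keeps $S\subset S_{n+1}\subset S_n$, a countable range, and $\bigcap_nS_n=S$.
\item The stopping property of $S_n$, including under this convention, is cleanest via a general fact: on $\{S\subset K\}$ the set $S$ is $\sigma(\pi_{|K})$-measurable. Indeed $\{S\subset L\}\cap\{S\subset K\}=\{S\subset L\cap K\}\in\sigma(\pi_{|K})$, and the events $\{F\subset L\}$ with $L$ compact generate the Borel $\sigma$-field of $\mathcal K(\mathbb G)$. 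Since $S_n$ is a measurable function of $S$ with $S\subset S_n$, you get $\{S_n\subset K\}=\{S_n\subset K\}\cap\{S\subset K\}\in\sigma(\pi_{|K})$.
\end{itemize}
Two smaller remarks: closure of $\mathcal K_0$ under intersections is never used in Step~1, and the final monotone-class step is redundant. Equality in law of $\pi$ and $\pi_{|S(\pi)}\cup\pi'_{|S(\pi)^c}$ already gives the identity for every measurable $F\ge0$.
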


\begin{definition}[Path spaces]
Let $\mathcal D$ be the space of càdlàg paths $\mathbb R_+\to\mathbb R^d$, and $\mathbb L$ the space of càglàd paths $\mathbb R_+\to\mathbb R^d$, both equipped with the Kolmogorov $\sigma$-field generated by coordinate maps.
\end{definition}

\begin{remark}[Measurability on càdlàg/càglàd spaces]
For $\mathcal D$ and $\mathbb L$, the Kolmogorov $\sigma$-field coincides with the Borel $\sigma$-field of the standard Skorokhod topologies; see \cite[Theorem 11.5.2]{SPL}.
\end{remark}

\subsection{Structure and completion of the proof}
\label{subsec:main_results}

The proof has three steps: 
\begin{itemize}
\item Step 1 proves a replacement identity in law: outside $B(\pi)$, $\pi$ can be replaced by an independent copy without changing expectations of functionals that keep track of the split $(\pi_{|B(\pi)},\pi_{|B(\pi)^c})$.
\item Step 2 converts this identity into conditional independence and identifies the conditional law on the complement as a Poisson random measure with restricted intensity $\ind_{B(\pi)^c}\nu$.
\item Step 3 upgrades the conditioning from $\sigma(B(\pi))$ to the observable $\sigma$-field $\mathcal F_T^{N^\pi}$, using that $B(\pi)$ is $\mathcal F_T^{N^\pi}$-measurable and that $N^\pi$ is determined by $\pi_{|B(\pi)}$.
\end{itemize}
This yields the first two bullets of Theorem~\ref{thm:conditionallaw}; the third one (conditional law of accepted marks) is proved afterwards. The proof of these points relies on the following properties of $A(\pi)$ and $B(\pi)$.

\begin{restatable}{lemma}{closuremeasurability}
\label{lem:closuremeasurability}
The map $\omega\mapsto B(\pi(\omega))$ is a measurable $\mathcal C(E)$-valued random compact set.
\end{restatable}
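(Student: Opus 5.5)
The plan is to show that $B(\pi)$ is a compact subset of $E$ and that $\omega\mapsto B(\pi(\omega))$ is Borel for the Fell $\sigma$-field. By the remark after the definition of the Fell topology, it suffices to check that $\{B(\pi)\cap O\neq\varnothing\}$ is measurable for every open $O\subset E$.

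\emph{Compactness.} Since $\lambda^\pi$ is càglàd and bounded on $[0,T]$ by the finite-horizon bound \eqref{eq:finite_horizon_non_explosion}, each set $A(\pi)\cap\{k\}$ lies in $[0,T]\times\{k\}\times[0,M(\pi)]$ with $M(\pi):=\sup_{t,k}\lambda^{\pi,k}_t<\infty$. Its closure is therefore a closed bounded subset of the finite-dimensional space $[0,T]\times\R_+$, hence compact. A finite union over $k$ preserves compactness, so $B(\pi)\in\mathcal K(E)$.

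\emph{Measurability.} I will reduce hitting events to countable operations on the path $\lambda^\pi$. First, measurability of $\mu\mapsto N^\mu$ into $\calD$ and measurability of the predictable functional $f$ give that $(t,\mu)\mapsto\lambda^{\mu,k}_t$ is jointly measurable for each $t$. For an open set $O$, write $O=\bigcup_j R_j$ as a countable union of open rectangles $R_j=(a_j,b_j)\times\{k_j\}\times(c_j,d_j)$ with rational endpoints, intersected with $E$. Then
\[
\{B(\pi)\cap O\neq\varnothing\}=\bigcup_j\{\overline{A(\pi)}\cap R_j\neq\varnothing\}=\bigcup_j\{A(\pi)\cap R_j\neq\varnothing\},
\]
because $R_j$ is open and a set meets an open set iff its closure does. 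Next, $A(\pi)\cap R_j\neq\varnothing$ iff there exist $s\in(a_j,b_j)\cap[0,T]$ and $z\in(c_j,d_j)$ with $z\le\lambda^{\pi,k_j}_s$, which holds iff $\sup_{s\in(a_j,b_j)\cap[0,T]}\lambda^{\pi,k_j}_s>c_j$ (recall $z\ge0$, so we may take $c_j\ge0$ after intersecting with $E$).

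\emph{Main obstacle.} The key step is replacing this supremum over an uncountable interval by one over rationals. Because $\lambda$ is left-continuous with right limits, its value at any $s$ is a limit of values at rationals $r\uparrow s$, and these rationals can be chosen inside the open interval. Hence the supremum over the interval equals the supremum over $\mathbb Q\cap(a_j,b_j)\cap[0,T]$. Care is needed only at the endpoint $s=0$ when $a_j<0$; there $\lambda_0$ is a deterministic or measurable value, and I would simply add it as a separate term. Each $\lambda^{\pi,k}_r$ is a measurable random variable, so the event is a countable union of measurable sets, and measurability of $B(\pi)$ follows.
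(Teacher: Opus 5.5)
Your proposal is correct and follows essentially the same route as the paper. Compactness comes from the finite-horizon bound on $\lambda^\pi$. Measurability is obtained by reducing the hit events $\{B(\pi)\cap O\neq\varnothing\}$ on a countable basis to $\{\sup_{s\in(a,b)}\lambda^{\pi,k}_s>c\}$, and then, via left-continuity, to a countable union over rational times.

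The differences are cosmetic. The paper factors the argument through the map $g\mapsto\Gamma(g)$ on c\`agl\`ad paths and uses strips $(a,b)\times\{k\}\times(c,+\infty)$, implicitly relying on the downward-closedness of $B$ in $z$; you use bounded rational rectangles directly. Your concern about $s=0$ is also moot, since $0\in\mathbb{Q}$ is already included in the rational supremum.
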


\begin{restatable}{lemma}{stability}
\label{lem:stability}
For all $\mu,\mu'\in\mathbb M(E)$,
\begin{equation*}
A(\mu)=A\bigl(\mu_{|B(\mu)}\cup\mu'_{|B(\mu)^c}\bigr),
\qquad
B(\mu)=B\bigl(\mu_{|B(\mu)}\cup\mu'_{|B(\mu)^c}\bigr).
\end{equation*}
\end{restatable}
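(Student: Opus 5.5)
The statement to prove is Lemma~\ref{lem:stability}: resampling the configuration outside $B(\mu)$ leaves both $A$ and $B$ unchanged. Set $\mu^\ast:=\mu_{|B(\mu)}\cup\mu'_{|B(\mu)^c}$. The plan is a pathwise comparison of the two strong solutions $(\lambda^\mu,N^\mu)$ and $(\lambda^{\mu^\ast},N^{\mu^\ast})$ of \eqref{eq:multivariate_thinning_functional}. The aim is to show $N^{\mu^\ast}=N^\mu$ on $[0,T]$. Non-anticipativity of $f$ then gives $\lambda^{\mu^\ast}=\lambda^\mu$, hence $A(\mu^\ast)=A(\mu)$, and $B(\mu^\ast)=B(\mu)$ follows by taking closures.

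\textbf{Step 1: $N^\mu$ solves the thinning equation driven by $\mu^\ast$.} Fix $t\le T$. The integrand $\ind_{\{s\le t\}}\ind_{\{z\le\lambda^{\mu,k}_s\}}$ is supported in $A(\mu)\subset B(\mu)$. On $B(\mu)$ the measures $\mu$ and $\mu^\ast$ coincide, and $\mu^\ast$ puts no mass of $\mu'$ on $B(\mu)$, since the restriction of $\mu'$ is to $B(\mu)^c$. Therefore
\begin{equation*}
\int_E e_k\ind_{\{s\le t\}}\ind_{\{z\le\lambda^{\mu,k}_s\}}\mu^\ast(\dd s,\dd k,\dd z)=\int_E e_k\ind_{\{s\le t\}}\ind_{\{z\le\lambda^{\mu,k}_s\}}\mu(\dd s,\dd k,\dd z)=N^\mu_t .
\end{equation*}
Together with $\lambda^\mu=f(N^\mu,\cdot)$, this shows that the pair $(\lambda^\mu,N^\mu)$ solves \eqref{eq:multivariate_thinning_functional} with driving measure $\mu^\ast$.

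\textbf{Step 2: conclude by uniqueness.} The standing hypothesis is a strong solution map $\mu\mapsto(\lambda^\mu,N^\mu)$, which I read as pathwise uniqueness for every counting measure. Under that reading, Step~1 immediately gives $N^{\mu^\ast}=N^\mu$ and $\lambda^{\mu^\ast}=\lambda^\mu$.

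\textbf{Main obstacle.} The hypothesis may only provide \emph{some} measurable solution map rather than uniqueness. In that case I would prove uniqueness directly by an event-by-event induction, in the style of Ogata's construction. Let $(\lambda,N)$ be any solution driven by $\mu^\ast$, and let $\sigma$ be the first time at which $N$ and $N^\mu$ differ. Before $\sigma$ the two paths agree, so by predictability and non-anticipativity of $f$ the intensities agree on $[0,\sigma]$. The jump at $\sigma$ is decided by the atoms of $\mu^\ast$ on $\{\sigma\}\times\{1,\dots,d\}\times[0,\lambda_\sigma]$, and these are the same for both solutions, so no discrepancy can arise there. This contradicts the definition of $\sigma$. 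The induction terminates because \eqref{eq:finite_horizon_non_explosion} makes $A(\mu)$ a bounded region containing finitely many atoms of the locally finite $\mu^\ast$.

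Two technical points need care. First, the intensity is càglàd, so one must verify that the boundary of the graph lies in $B(\mu)$; this holds because $B(\mu)$ is the closure of $A(\mu)$. Second, if both solutions share the same intensity on $[0,\sigma]$, their acceptance regions coincide up to time $\sigma$, and both regions lie in $B(\mu)$ where $\mu^\ast=\mu$. This is exactly the locality property that fails in the paper's counterexample and that makes the result work here.
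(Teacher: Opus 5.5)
Your proof is correct, but it is organized differently from the paper's. The paper's $\tilde\mu$ is your $\mu^\ast$. The paper never isolates your Step~1; it argues directly on the pair $(N^\mu,N^{\tilde\mu})$:
\begin{itemize}
\item it fixes a compact $K\supset B(\mu)\cup B(\tilde\mu)$, which uses the finite-horizon bound for $\tilde\mu$ as well as for $\mu$;
\item it lists the finitely many times at which $\mu+\mu'$ has atoms in $K$, and inducts over these times;
\item on each open gap neither measure can accept an atom, because both acceptance regions lie in $K$, and predictability turns equal counts into equal intensities;
\item at the next atom time the jumps agree, because the segment $\{t\}\times\{k\}\times[0,\lambda^{\mu,k}_t]$ lies in $A(\mu)\subset B(\mu)$, where $\mu'$ contributes nothing.
\end{itemize}

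You instead split the argument into two parts. The first is a one-line locality identity: $(\lambda^\mu,N^\mu)$ already solves the $\mu^\ast$-driven equation, since its integrand is supported in $A(\mu)$, where $\mu^\ast=\mu$. The second is a generic pathwise-uniqueness statement.

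What each route buys:
\begin{itemize}
\item Your split makes explicit that the lemma is locality plus uniqueness. Your fallback uniqueness argument needs neither the compact $K$ nor any bound on $\lambda^{\mu^\ast}$.
\item The paper's inline induction is self-contained. It never has to decide whether uniqueness for arbitrary deterministic $\mu$ is among the standing hypotheses, because it proves exactly the instance needed.
\end{itemize}

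When you write out the fallback, make the first-discrepancy step precise. Both paths are non-decreasing, integer-valued and finite on $[0,T]$, so they have finitely many jumps. Hence $\sigma:=\inf\{t:\ N_t\neq N^\mu_t\}$ is attained, at a time where the left limits agree. Equality of intensities at $\sigma$ then forces equal jumps, a contradiction. This step needs $f(x,\sigma)$ to depend only on $x$ restricted to $[0,\sigma)$, which is the same predictability the paper uses.

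Two of your remarks are unnecessary. The appeal to the finite-horizon bound to make the induction terminate is not needed. Neither is the remark about closures, since the acceptance segments already lie in $A(\mu)$ itself.
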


\begin{restatable}{lemma}{stoppingset}
\label{lem:stoppingset}
The random set $B(\pi)$ is a stopping set with respect to $\pi$.
\end{restatable}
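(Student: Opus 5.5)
The plan is to reduce the event $\{B(\pi)\subset K\}$ to an event that depends on $\pi$ only through $\pi_{|K}$. The tool is the locality property of Lemma~\ref{lem:stability}, used twice, once in each direction.

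First I record that $B(\pi)$ takes values in $\mathcal K(E)$. For every $\mu$, the set $A(\mu)$ is contained in $[0,T]\times\{1,\dots,d\}\times[0,\sup_{t,k}\lambda^{\mu,k}_t]$. By the finite-horizon bound \eqref{eq:finite_horizon_non_explosion} this box is bounded, so its closure $B(\mu)$ is compact. Lemma~\ref{lem:closuremeasurability} then gives that $\mu\mapsto B(\mu)$ is measurable into $\mathcal C(E)$.

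Now fix a compact $K\subset E$. View the restriction $\pi_{|K}$ as a counting measure on $E$ with no atoms outside $K$. The claim to prove is
\begin{equation*}
\{B(\pi)\subset K\}=\{B(\pi_{|K})\subset K\}.
\end{equation*}
For the inclusion from left to right, assume $B(\pi)\subset K$. Apply Lemma~\ref{lem:stability} with $\mu=\pi$ and $\mu'=\pi_{|K}$. Because $B(\pi)\subset K$, the combined configuration is
\begin{equation*}
\pi_{|B(\pi)}\cup(\pi_{|K})_{|B(\pi)^c}=\pi_{|B(\pi)}\cup\pi_{|K\setminus B(\pi)}=\pi_{|K}.
\end{equation*}
Lemma~\ref{lem:stability} therefore gives $B(\pi_{|K})=B(\pi)\subset K$.

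For the reverse inclusion, assume $B(\pi_{|K})\subset K$. Apply Lemma~\ref{lem:stability} with $\mu=\pi_{|K}$ and $\mu'=\pi$. On $B(\pi_{|K})\subset K$ the measures $\pi_{|K}$ and $\pi$ coincide, so the combined configuration is
\begin{equation*}
(\pi_{|K})_{|B(\pi_{|K})}\cup\pi_{|B(\pi_{|K})^c}=\pi.
\end{equation*}
Lemma~\ref{lem:stability} therefore gives $B(\pi)=B(\pi_{|K})\subset K$.

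Measurability then follows by composition. The map $\mu\mapsto\mu_{|K}$ is measurable for the evaluation $\sigma$-field, since $\mu_{|K}(C)=\mu(C\cap K)$. The set $\{F\in\mathcal C(E):F\subset K\}$ equals $\{F: F\cap K^c=\varnothing\}$, which is the complement of the hit set $\mathcal C_{K^c}$ with $K^c$ open, so it is Borel in the Fell topology. Composing these two maps with the measurable map $B$ shows that $\{B(\pi_{|K})\subset K\}\in\sigma(\pi_{|K})$. Together with the set identity above, this proves the stopping set property.

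The step I expect to need the most care is the bookkeeping in applying Lemma~\ref{lem:stability}. One has to check that the lemma is stated for arbitrary $\mu,\mu'\in\mathbb M(E)$, so that it applies to the truncated measure $\pi_{|K}$ and not only to typical Poisson configurations. One also has to confirm that the "union" in Lemma~\ref{lem:stability} means the sum of restrictions to the disjoint sets $B$ and $B^c$, so that the two configuration identities above hold exactly rather than only up to null sets. If the lemma were available only almost surely, I would instead prove the set identity pathwise directly from the thinning equation \eqref{eq:multivariate_thinning_functional}. That route is an induction over the finitely many accepted atoms in time order: atoms lying outside $B$ never fall in an acceptance region, so they do not affect $N^\mu$ or $\lambda^\mu$.
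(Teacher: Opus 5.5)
Your proposal is correct and follows the paper's proof essentially verbatim. It proves the same set identity $\{B(\pi)\subset K\}=\{B(\pi_{|K})\subset K\}$ by applying Lemma~\ref{lem:stability} with $(\mu,\mu')=(\pi,\pi_{|K})$ and then $(\pi_{|K},\pi)$, and concludes with the same measurability-by-composition argument; your explicit check that $\{F\subset K\}$ is the complement of the hit set $\mathcal C_{K^c}$ spells out a detail the paper leaves implicit.
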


Lemmas~\ref{lem:closuremeasurability}--\ref{lem:stoppingset} are proved in Subsection~\ref{sec:conditional_measures:remaining_proofs}. Lemma~\ref{lem:closuremeasurability} ensures that conditioning with respect to $B(\pi)$ is well-posed and that $B(\pi)$ can be handled as a genuine random closed set. Lemma~\ref{lem:stability} is the locality statement: atoms outside $B(\mu)$ do not affect the accepted dynamics, hence neither $A(\mu)$ nor $B(\mu)$. Lemma~\ref{lem:stoppingset} is the bridge to Theorem~\ref{thm:strongmarkov}: once $B(\pi)$ is a stopping set, the strong Markov replacement can be applied on this random domain.

\begin{proof}[Proof of the conditional decomposition in Theorem~\ref{thm:conditionallaw}]
Let $\pi'$ be an independent copy of $\pi$, and set
\begin{equation*}
X:=\pi_{|B(\pi)},\qquad
Y:=\pi_{|B(\pi)^c},\qquad
Y':=\pi'_{|B(\pi)^c}.
\end{equation*}

\textbf{Step 1: replacement identity.}
For bounded measurable $h:\mathcal C(E)\to\mathbb R_+$, $f,g:\mathbb M(E)\to\mathbb R_+$, define
\begin{equation*}
\Phi_h(\mu):=h(B(\mu))\,f(\mu_{|B(\mu)})\,g(\mu_{|B(\mu)^c}).
\end{equation*}
Applying Theorem~\ref{thm:strongmarkov} to $\Phi_h(\mu)$ gives
\begin{equation*}
\E\!\left[h(B(\pi))f(X)g(Y)\right]
=
\E\!\left[
\Phi_h\!\left(\pi_{|B(\pi)}\cup\pi'_{|B(\pi)^c}\right)
\right].
\end{equation*}
Set $\widetilde\pi:=\pi_{|B(\pi)}\cup\pi'_{|B(\pi)^c}$.
Then
\begin{align*}
    \Phi_h(\widetilde\pi)
&=
h(B(\widetilde\pi))\,
f(\widetilde\pi_{|B(\widetilde\pi)})\,
g(\widetilde\pi_{|B(\widetilde\pi)^c}).
\end{align*}
Lemma~\ref{lem:stability}, applied with $(\mu,\mu')=(\pi,\pi')$, gives
$B(\widetilde\pi)=B(\pi)$. Therefore
\begin{align*}
\widetilde\pi_{|B(\widetilde\pi)}
&=
\left(\pi_{|B(\pi)}\cup\pi'_{|B(\pi)^c}\right)_{|B(\pi)}
=
\pi_{|B(\pi)}
=X,
\\
\widetilde\pi_{|B(\widetilde\pi)^c}
&=
\left(\pi_{|B(\pi)}\cup\pi'_{|B(\pi)^c}\right)_{|B(\pi)^c}
=
\pi'_{|B(\pi)^c}
=Y'.
\end{align*}
Hence
\begin{equation}
\label{eq:step1_identity}
\E\!\left[h(B(\pi))f(X)g(Y)\right]
=
\E\!\left[h(B(\pi))f(X)g(Y')\right].
\end{equation}

\textbf{Step 2: conditioning on $B(\pi)$.}
Let $Z$ be bounded and $\sigma(B(\pi))$-measurable. By the characterization of the Fell Borel $\sigma$-field, $\sigma(B(\pi))$ is generated by hit events $\{B(\pi)\in\mathcal C_O\}$ with $O\subset E$ open. A monotone-class argument therefore reduces to $Z=\ind_{\{B(\pi)\in\mathcal C_O\}}$. For such $Z$, take $h(F)=\ind_{\{F\in\mathcal C_O\}}$ in \eqref{eq:step1_identity}; then
\begin{equation*}
\E[Zf(X)g(Y)]=\E[Zf(X)g(Y')].
\end{equation*}
Hence
\begin{equation}
\label{eq:replaceY_Yprime}
\E[f(X)g(Y)\mid \sigma(B(\pi))]
=
\E[f(X)g(Y')\mid \sigma(B(\pi))].
\end{equation}
Since $\pi'$ is independent of $\pi$, conditionally on $\sigma(B(\pi))$,
\begin{align*}
    \E[f(X)g(Y')\mid\sigma(B(\pi))]
&=
\E[f(X)\mid\sigma(B(\pi))]\,
\E[g(Y')\mid\sigma(B(\pi))]
\\
&=
\E[f(X)\mid\sigma(B(\pi))]\,
\E[g(Y)\mid\sigma(B(\pi))].
\end{align*}
using \eqref{eq:replaceY_Yprime} with $f\equiv1$. Hence $X$ and $Y$ are conditionally independent given $\sigma(B(\pi))$. Moreover, conditionally on $B(\pi)$, $Y'$ is Poisson with intensity $\ind_{B(\pi)^c}\nu$, so
\begin{equation*}
\mathcal L\bigl(Y\mid \sigma(B(\pi))\bigr)
=
\Poisson\bigl(\ind_{B(\pi)^c}\nu\bigr).
\end{equation*}

\textbf{Step 3: conditioning on $\mathcal F_T^{N^\pi}$.}
We have $\sigma(B(\pi))\subset\mathcal F_T^{N^\pi}$, because $\lambda^\pi$ is a measurable functional of $N^\pi$. Also,
\begin{equation*}
\mathcal F_T^{N^\pi}\subset\sigma(X),
\end{equation*}
since Lemma~\ref{lem:stability} with $(\mu,\mu')=(\pi,0)$ yields $N^\pi=N^X$ on $[0,T]$.

Fix bounded measurable $g:\mathbb M(E)\to\mathbb R$. From Step 2, $Y$ is conditionally independent of $\sigma(X)$ given $\sigma(B(\pi))$, hence
\begin{equation*}
\E[g(Y)\mid \sigma(X)]=\E[g(Y)\mid \sigma(B(\pi))].
\end{equation*}
Taking conditional expectation with respect to $\mathcal F_T^{N^\pi}\subset\sigma(X)$, and using $\sigma(B(\pi))\subset\mathcal F_T^{N^\pi}$,
\begin{equation*}
\E[g(Y)\mid \mathcal F_T^{N^\pi}]
=
\E[g(Y)\mid \sigma(B(\pi))].
\end{equation*}
Therefore
\begin{equation*}
\mathcal L\bigl(Y\mid\mathcal F_T^{N^\pi}\bigr)
=
\Poisson\bigl(\ind_{B(\pi)^c}\nu\bigr).
\end{equation*}

Now let $f,g$ be bounded measurable. Using $\mathcal F_T^{N^\pi}\subset\sigma(X)$, Step 2, and the previous identity,
\begin{align*}
\E[f(X)g(Y)\mid \mathcal F_T^{N^\pi}]
&=
\E\!\left[f(X)\E[g(Y)\mid \sigma(X)]\mid \mathcal F_T^{N^\pi}\right] 
=
\E[g(Y)\mid \mathcal F_T^{N^\pi}]
\E[f(X)\mid \mathcal F_T^{N^\pi}],
\end{align*}
which proves the conditional independence of $X$ and $Y$ given $\mathcal F_T^{N^\pi}$.
\end{proof}

\begin{proof}[Proof of the conditional law of marks in Theorem~\ref{thm:conditionallaw}]
For $n\ge0$, let $(T_n^\mu,k_n^\mu)$ be the  $n$-th jump times or components of $N^\mu$, and define
\begin{equation*}
A_n(\mu):=\{(s,k,z)\in E:\ z\le\lambda_s^{\mu,k},\ s\le T_n^\mu\},
\qquad
B_n(\mu):=\overline{A_n(\mu)}.
\end{equation*}
As in Lemmas~\ref{lem:closuremeasurability} and~\ref{lem:stoppingset}, $B_n(\pi)$ is a stopping set. Let $\pi'$ be an independent copy of $\pi$, and set
\begin{equation*}
\tilde\pi^{(n)}:=\pi_{|B_n(\pi)}\cup\pi'_{|B_n(\pi)^c},
\qquad
g_n(\mu):=\mu_{|B_n(\mu)}.
\end{equation*}
Applying Theorem~\ref{thm:strongmarkov} to $B_n(\pi)$, then locality (Lemma~\ref{lem:stability} at level $n$),
\begin{equation*}
\E[F(g_{n+1}(\pi))]
=
\E[F(g_{n+1}(\tilde\pi^{(n)}))]
\end{equation*}
for every nonnegative measurable $F$. Hence the first $n+1$ accepted atoms under $\pi$ and $\tilde\pi^{(n)}$ have the same law. Conditionally on $\mathcal F_{T_n^\pi}^{N^\pi}$, all post-$T_n^\pi$ accepted atoms of $\tilde\pi^{(n)}$ are generated from the independent cloud $\pi'$ with predictable intensities
\begin{equation*}
\lambda_{n,j}(u):=f_j\bigl(S_{T_n^\pi}(N^\pi),T_n^\pi+u\bigr),
\qquad
\Lambda_n(u):=\int_0^u\sum_{j=1}^d \lambda_{n,j}(r)\,\dd r.
\end{equation*}
Let $E_{n+1}\sim\mathrm{Exp}(1)$, $U_{n+1}\sim\mathcal U([0,1])$, independent of $\mathcal F_{T_n^\pi}^{N^\pi}$ and of $\pi_{|B_n(\pi)}$. Set
\begin{equation*}
\tau_{n+1}:=\Lambda_n^{-1}(E_{n+1}),\qquad
T_{n+1}^{\tilde\pi^{(n)}}:=T_n^\pi+\tau_{n+1},
\end{equation*}
and choose $k_{n+1}$ with
\begin{equation*}
\mathbb P\!\left(k_{n+1}=j\mid \mathcal F_{T_n^\pi}^{N^\pi},\tau_{n+1}\right)
=
\Big(
\sum_{\ell=1}^d\lambda_{n,\ell}(\tau_{n+1})
\Big)^{-1}
\lambda_{n,j}(\tau_{n+1}).
\end{equation*}
By the standard first-jump thinning representation (Lemma~\ref{lem:next_jump}, componentwise),
\begin{equation*}
\bigl(T_{n+1}^{\tilde\pi^{(n)}},k_{n+1}^{\tilde\pi^{(n)}},Z_{n+1}^{\tilde\pi^{(n)}}\bigr)
=
\left(
T_n^\pi+\tau_{n+1},
k_{n+1},
U_{n+1}\lambda_{n,k_{n+1}}(\tau_{n+1})
\right)
\ \Big|\ \mathcal F_{T_n^\pi}^{N^\pi}
\quad\text{in law, conditionally on }\mathcal F_{T_n^\pi}^{N^\pi}.
\end{equation*}

Let
\begin{equation*}
\mathcal G_{n+1}:=
\sigma\bigl((T_i^\pi,k_i^\pi)_{1\le i\le n+1}\bigr)\vee\mathcal F_{T_n^\pi}^{N^\pi}.
\end{equation*}
Conditionally on $\mathcal G_{n+1}$, the only remaining randomness in $Z_{n+1}^\pi$ is $U_{n+1}$, so
\begin{equation*}
\E\!\left[h(Z_{n+1}^\pi)\mid \mathcal G_{n+1}\right]
=
\Big( \lambda_{T_{n+1}^\pi}^{\pi,k_{n+1}^\pi} \Big)^{-1}
\int_0^{\lambda_{T_{n+1}^\pi}^{\pi,k_{n+1}^\pi}} h(z)\,\dd z
\end{equation*}
for bounded measurable $h$. Moreover $U_{n+1}$ is independent of previous accepted marks, so $Z_{n+1}^\pi$ is conditionally independent of $(Z_i^\pi)_{1\le i\le n}$ given $\mathcal G_{n+1}$.

Iterating this one-step factorization, for every $m\ge1$ and bounded measurable $h_1,\dots,h_m$,
\begin{equation*}
\E\!\left[\prod_{i=1}^m h_i(Z_i^\pi)\,\middle|\,\mathcal F_T^{N^\pi}\right]
=
\prod_{i=1}^m
\frac{1}{\lambda_{T_i^\pi}^{\pi,k_i^\pi}}
\int_0^{\lambda_{T_i^\pi}^{\pi,k_i^\pi}} h_i(z)\,\dd z.
\end{equation*}
This is the finite-dimensional characterization of
\begin{equation*}
\mathcal L\bigl((Z_n)_{n\in I}\mid\mathcal F_T^{N^\pi}\bigr)
=
\bigotimes_{n\in I}\mathcal U\bigl([0,\lambda_{T_n^\pi}^{\pi,k_n^\pi}]\bigr).
\end{equation*}
\end{proof}

\subsection{Proof of remaining lemmas}
\label{sec:conditional_measures:remaining_proofs}

\begin{proof}[Proof of Lemma \ref{lem:closuremeasurability}]
For $g\in\mathbb L$, define
\begin{equation*}
\Gamma(g):=\overline{\{(s,k,z)\in E:\ z\le g_k(s)\}}\in\mathcal C(E).
\end{equation*}
It is enough to check measurability of $g\mapsto\Gamma(g)$ on hit sets of a countable basis of $E$. For
\begin{equation*}
O=(a,b)\times\{k\}\times(c,+\infty),
\end{equation*}
we have
\begin{align*}
\{g:\Gamma(g)\cap O\neq\varnothing\}
&=
\{g:\exists t\in(a,b),\ g_k(t)>c\} \\
&=
\{g:\sup_{t\in(a,b)} g_k(t)>c\} \\
&=
\bigcup_{q\in(a,b)\cap\mathbb Q}\{g:\ g_k(q)>c\},
\end{align*}
where the last equality uses the càglàd regularity of $g_k$. Hence $g\mapsto\Gamma(g)$ is measurable for the Kolmogorov $\sigma$-field on $\mathbb L$. This also implies measurability for $B(\pi)$ which is obtained by composition of measurable maps
\begin{equation*}
    \begin{tikzcd}
    B(\pi) \colon (\Omega,\mathcal F) \arrow[r, "\pi"] & \mathbb{M}(E) \arrow[r, "\mu\mapsto \lambda^\mu"] & \mathbb{L} \arrow[r, "\Gamma"] & \mathcal{C}(E).
    \end{tikzcd}
\end{equation*}


For compactness, by the non-explosion condition of Section~\ref{sec:cond_distribution},
\begin{equation*}
A(\mu)\subset
[0,T]\times\{1,\dots,d\}\times
\left[0,\sup_{t\in[0,T]}\sup_{1\le i\le d}\lambda_t^{\mu,i}\right],
\end{equation*}
and the right-hand side is compact. Thus $B(\mu)=\overline{A(\mu)}$ is compact for every $\mu$.
\end{proof}

\begin{proof}[Proof of Lemma \ref{lem:stability}]
Fix $\mu,\mu'\in\mathbb M(E)$, and set
\begin{equation*}
\tilde\mu:=\mu_{|B(\mu)}\cup\mu'_{|B(\mu)^c}.
\end{equation*}
Choose a compact set $K\subset E$ such that $B(\mu)\cup B(\tilde\mu)\subset K$. Since $(\mu+\mu')(K)<\infty$, the set
\begin{equation*}
S:=\{-1\}\cup
\Bigl\{t\in[0,T]:\ (\mu+\mu')\bigl(K\cap(\{t\}\times\{1,\dots,d\}\times\R_+)\bigr)>0\Bigr\}
\end{equation*}
is finite. Extend $N^\mu,N^{\tilde\mu}$ to $(-\infty,0)$ by $0$, and extend $\lambda^\mu,\lambda^{\tilde\mu}$ arbitrarily on $(-\infty,0)$ (this only initializes the induction). We prove by induction over $S$ that, for each $t\in S$,
\begin{equation*}
N^\mu=N^{\tilde\mu}\ \text{on }(-\infty,t],
\qquad
A(\mu)=A(\tilde\mu)\ \text{on }[0,t].
\end{equation*}

At $t=-1$, the claim is immediate. Let $t_0<t$ be consecutive elements of $S$, and assume the claim holds at $t_0$. First note that there are no atoms of $\mu$ or $\tilde\mu$ in $K\cap\bigl((t_0,t)\times\{1,\dots,d\}\times\R_+\bigr)$ and therefore
            \begin{align*}
            N_s^\mu &= N_{t_0}^\mu + \int_{t_0+}^s \int_{\{1,\cdots, d\} \times \R_+} e_k \ind_{\lp z \leq \lambda_u^\mu\rp} \mu(\dd u, \dd k, \dd z) \\
            &= N_{t_0}^{\tilde{\mu}} + \int_{t_0+}^s \int_{\{ 1,\cdots, d \} \times \R_+} e_k \ind_{\lp z \leq \lambda_u^{\tilde{\mu}} \rp} \tilde{\mu} (\dd u, \dd k, \dd z) = N_s^{\tilde{\mu}}
            \end{align*}
which means that $N^\mu$ and $N^{\tilde\mu}$ coincide on $(-\infty,t)$. Since $f$ is non-anticipative (Section~\ref{sec:primer}), the two intensities coincide up to time $t$. Therefore
\begin{equation*}
A(\mu)\cap\bigl([0,t]\times\{1,\dots,d\}\times\R_+\bigr)
=
A(\tilde\mu)\cap\bigl([0,t]\times\{1,\dots,d\}\times\R_+\bigr).
\end{equation*}
For each $k\in\{1,\dots,d\}$,
\begin{align*}
\Delta N_t^{\tilde\mu,k}
&=
\tilde\mu\bigl(\{t\}\times\{k\}\times[0,\lambda_t^{\tilde\mu,k}]\bigr) \\
&=
\mu\bigl(\{t\}\times\{k\}\times[0,\lambda_t^{\mu,k}]\bigr)
+\mu'\!\left(\bigl(\{t\}\times\{k\}\times[0,\lambda_t^{\mu,k}]\bigr)\cap B(\mu)^c\right).
\end{align*}
Since $\{t\}\times\{k\}\times[0,\lambda_t^{\mu,k}]\subset A(\mu)\subset B(\mu)$, the second term is $0$, hence $\Delta N_t^{\tilde\mu,k}=\Delta N_t^{\mu,k}$. Therefore $N_t^{\tilde\mu}=N_t^\mu$, closing the induction. \\

After the largest time in $S$, both measures have no atoms in $K$. Because $A(\mu)\subset B(\mu)\subset K$ and $A(\tilde\mu)\subset B(\tilde\mu)\subset K$, no accepted jump can occur afterwards. Thus $N^\mu=N^{\tilde\mu}$ on $[0,T]$, so $\lambda^\mu=\lambda^{\tilde\mu}$, hence $A(\mu)=A(\tilde\mu)$. Taking closures yields $B(\mu)=B(\tilde\mu)$.
\end{proof}

\begin{proof}[Proof of Lemma \ref{lem:stoppingset}]
Let $L\subset E$ be compact. We claim
\begin{equation*}
\{m\in\mathbb M(E):\ B(m)\subset L\}
=
\{m\in\mathbb M(E):\ B(m_{|L})\subset L\}.
\end{equation*}
If $B(m)\subset L$, apply Lemma~\ref{lem:stability} with $(\mu,\mu')=(m,m_{|L})$:
\begin{equation*}
B(m)
=
B\bigl(m_{|B(m)}\cup m_{|L\cap B(m)^c}\bigr)
=
B(m_{|L}),
\end{equation*}
so $B(m_{|L})\subset L$. Conversely, if $B(m_{|L})\subset L$, apply Lemma~\ref{lem:stability} with $(\mu,\mu')=(m_{|L},m)$:
\begin{equation*}
B(m_{|L})
=
B\bigl((m_{|L})_{|B(m_{|L})}\cup m_{|B(m_{|L})^c}\bigr) =
B\bigl(m_{|B(m_{|L})}\cup m_{|B(m_{|L})^c}\bigr)
=
B(m),
\end{equation*}
hence $B(m)\subset L$. Therefore we get $\{B(\pi)\subset L\}
=
\{B(\pi_{|L})\subset L\}$. By Lemma~\ref{lem:closuremeasurability}, $m\mapsto B(m)$ is measurable, and $m\mapsto m_{|L}$ is $\mathcal M(E)$-measurable; thus $\{B(\pi_{|L})\subset L\}\in\sigma(\pi_{|L})$. This is exactly the stopping-set criterion.
\end{proof}

\section{Proof of Proposition \ref{prop:conditional_simulation}} \label{appendix:B}

We keep the one-sided notation of Subsection~\ref{subsec:conditional_sim} on a fixed horizon $T>0$, and formalize Algorithm~\ref{algo:conditional_simulation}. Proposition~\ref{prop:conditional_simulation} can be rephrased as follows.

\begin{proposition}
    \label{prop:conditional_simulation:rephrased}
    The conditional simulation of $(\widebar{L}, \widebar{C}, \widebar{N})$ given $\mathcal{F}_T$ is event-driven.
    Assume that the construction is at an inspection time $t$: all observed jumps up to $t$ have been processed, and $\widebar q_t$ is known. For $x\in\{L,C,N\}$, let
    \begin{equation*}
    j_x(t):=\inf\{j\ge1:\ T_j^x>t\},
    \qquad
    \theta^x:=T_{j_x(t)}^x-t,
    \qquad
    \theta:=\min_{x\in\{L,C,N\}}\theta^x,
    \end{equation*}
    with $\theta^x=+\infty$ when no observed jump of type $x$ remains before $T$.
    Draw i.i.d. uniforms $(U_j^x)\sim \mathcal U([0,1])$, independent across $x\in\{L,C,N\}$, and conditionally on $(\mathcal F_T,\widebar q_t)$ draw independent clocks
    \begin{equation*}
    \tau^x\sim \mathrm{Exp}\Big((\lambda^x(\widebar q_t)-\lambda^x(q_t))_+\Big),
    \qquad x\in\{L,C,N\},
    \end{equation*}
    with the convention $\tau^x=+\infty$ when the rate is zero. Set
    \begin{equation*}
    \tau^{L^o}:=\inf\{s>0:\ L^o_{t+s}=L^o_t+1\},
    \qquad
    \tau:=\min\{\tau^L,\tau^C,\tau^N,\tau^{L^o},\theta,T-t\}.
    \end{equation*}
    Then:
    \begin{itemize}
        \item if $\tau=\tau^x$ for some $x\in\{L,C,N\}$, add one counterfactual jump of type $x$ at time $t+\tau$;
        \item if $\tau=\tau^{L^o}$, only the intervention path jumps and the queue is updated through $L^o$;
        \item if $\tau=\theta$, let $x^\star$ be the type attaining $\theta$, $s=t+\theta=T_{j_{x^\star}(t)}^{x^\star}$, and keep the observed atom if
        \begin{equation*}
        U_{j_{x^\star}(t)}^{x^\star}\lambda^{x^\star}(q_{s-})
        \le
        \lambda^{x^\star}(\widebar q_{s-}).
        \end{equation*}
    \end{itemize}
    In every case, the algorithm advances to $t+\tau$, and event times are almost surely distinct.
\end{proposition}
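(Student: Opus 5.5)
The plan is to derive the event-driven description directly from Theorem~\ref{thm:conditionallaw}, applied to the one-sided system with $d=3$ components $(L,C,N)$ driven by $\pi=(\pi^L,\pi^C,\pi^N)$. In the factual world the intensities are $\lambda^x(q_{s-})$, so the accepted region is $A(\pi)=\{(s,x,z): z\le \lambda^x(q_{s-})\}$. By the theorem, conditionally on $\mathcal F_T$ the configuration $\pi$ splits into two independent parts. The first is the accepted atoms at the observed jump times $T^x_j$, with marks $Z^x_j=U^x_j\lambda^x(q_{T^x_j-})$ for i.i.d.\ uniforms $U^x_j$. The second is an independent Poisson cloud with Lebesgue intensity on $B(\pi)^c=\{z>\lambda^x(q_{s-})\}$, up to a null set. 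Since the counterfactual path $(\wb L,\wb C,\wb N,\wb q)$ is a measurable functional of $(\pi,L^o)$ by Proposition~\ref{prop:shocked}, an exact conditional sample is obtained by two steps: draw a configuration from this conditional law, then solve the counterfactual thinning equation $\wb X_t=\int \ind_{\{z\le\lambda^x(\wb q_{s-})\}}\pi^x(\dd s,\dd z)$ on it. It remains to show that the algorithm realizes exactly this two-step construction.

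I then proceed event by event, by induction over successive inspection times $t$ (the times in the proposition). Suppose the reconstruction is correct up to $t$, with $\wb q_t$ known. Between $t$ and the next event, $\wb q$ is constant, so a counterfactual jump of type $x$ occurs exactly at an atom in $\{z\le\lambda^x(\wb q_t)\}$. I split this strip into two parts. The part $\{z\le\lambda^x(q_{s-})\}$ contains only the observed atoms, and each is accepted iff $U^x_j\lambda^x(q_{s-})\le\lambda^x(\wb q_{s-})$; this is the $\theta$-branch. The part $\{\lambda^x(q_{s-})<z\le\lambda^x(\wb q_t)\}$ lies in $B(\pi)^c$, where the conditional law is Poisson with Lebesgue intensity. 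Its projection onto time is therefore a Poisson process of rate $(\lambda^x(\wb q_t)-\lambda^x(q_s))_+$ on any subinterval where $q$ is also constant, and every such atom is accepted. Atoms above $\max(\lambda^x(q),\lambda^x(\wb q))$ are never used, so they need not be simulated. By memorylessness and independence of Poisson restrictions to disjoint regions, the first point of this projected process after $t$ is $\tau^x\sim\mathrm{Exp}$ with that rate.

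The exponential clocks are valid only while $q$ and $\wb q$ stay constant. This is why every observed jump time (through $\theta$) and every intervention jump (through $\tau^{L^o}$) is treated as an inspection time. At an inspection time the rates are recomputed, and the previously drawn exponential clocks may be discarded and redrawn without bias, again by memorylessness and independence of the Poisson cloud over disjoint time slabs. The strong Markov property of Theorem~\ref{thm:strongmarkov}, applied at the (stopping) inspection times, justifies the fresh draws. Distinctness of event times holds almost surely because the clocks have continuous laws, and they are independent of the fixed observed jump times and of $L^o$. Finiteness of the number of steps follows from the bound~\eqref{eq:finite_horizon_non_explosion} applied to both $q$ and $\wb q$: the rates are bounded on $[0,T]$, so only finitely many clock events occur a.s.

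The main obstacle I expect is making the induction rigorous. At each inspection time, the conditional law of the not-yet-explored part of the cloud on $B(\pi)^c\cap((t,T]\times\cdot)$ must still be Poisson and independent of everything already generated, including the random counterfactual path, which depends on earlier parts of the cloud. I would handle this by noting that the explored region up to time $t$ is a stopping set for the residual Poisson measure given $\mathcal F_T$. This region is the strips between $\lambda^x(q)$ and $\lambda^x(\wb q)$ over $[0,t]$, which are determined by atoms at earlier times, so the same locality argument as in Lemma~\ref{lem:stoppingset} applies. Theorem~\ref{thm:strongmarkov} then allows replacing the remainder by a fresh independent copy, which is exactly what the freshly drawn clocks do.
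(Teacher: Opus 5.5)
Your proposal is correct and follows essentially the same route as the paper: you identify the conditional law of $\pi$ given $\mathcal F_T$ via Theorem~\ref{thm:conditionallaw} (observed atoms with uniform marks, plus an independent residual Poisson cloud on $B(\pi)^c$). You then check that the algorithm's one-step transitions coincide with the thinning construction on this reconstructed configuration: exponential clocks from the residual strip between $\lambda^x(q)$ and $\lambda^x(\wb q)$, the uniform-mark acceptance test on recycled observed atoms, and the intervention jumps. Two of your steps are, if anything, more careful than the paper's ``componentwise'' application and its brief appeal to induction over inspection times: you apply the theorem jointly with $d=3$, which is what is needed since all three intensities depend on the common queue $q$, and you justify drawing fresh clocks at random inspection times with a stopping-set argument.
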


\begin{proof}
For each $X\in\{L,C,N\}$, let $\pi^X$ be the Poisson random measure driving the factual process
\begin{equation*}
X_t=\int_0^t\!\!\int_{\R_+}\ind_{\{z\le\lambda^X(q_{s-})\}}\,\pi^X(\dd s,\dd z).
\end{equation*}
Define $A_X^\pi:=\{(s,z)\in[0,T]\times\R_+:\ z\le\lambda^X(q_{s-})\}$.
By Theorem~\ref{thm:conditionallaw}, applied componentwise, and by the conditional law of accepted marks,
\begin{equation*}
\pi^X_{|A_X^\pi}
\stackrel{\mathcal L}{=}
\sum_{j\ge1}\delta_{(T_j^X,U_j^X\lambda^X(q_{T_j^X-}))},
\qquad
\mathcal L\!\left(\pi^X_{|(A_X^\pi)^c}\mid\mathcal F_T\right)
=
\Poisson\!\left(\ind_{(A_X^\pi)^c}\dd s\,\dd z\right),
\end{equation*}
with conditional independence between the two restrictions. Hence, conditionally on $\mathcal F_T$, $\pi$ has the same law as $\pi^\sharp=(\pi^{\sharp,L},\pi^{\sharp,C},\pi^{\sharp,N})$, where
\begin{equation*}
\pi^{\sharp,X}
:=
\sum_{j\ge1}\delta_{(T_j^X,U_j^X\lambda^X(q_{T_j^X-}))}
+\wb\pi^X_{|(A_X^\pi)^c},
\end{equation*}
and $\wb\pi^L,\wb\pi^C,\wb\pi^N$ are independent Poisson random measures of intensity $\dd s\,\dd z$, independent of the uniforms. Therefore it is enough to construct $(\wb L,\wb C,\wb N,\wb q)$ from $\pi^\sharp$:
\begin{equation*}
\wb X_t=\int_0^t\!\!\int_{\R_+}\ind_{\{z\le\lambda^X(\wb q_{s-})\}}\,\pi^{\sharp,X}(\dd s,\dd z),
\qquad
\wb q_t=q_0+\wb L_t-\wb C_t-\wb N_t+L_t^o.
\end{equation*}
Splitting observed and residual atoms, for each $X\in\{L,C,N\}$,
\begin{equation*}
\wb X_t
=
\sum_{T_j^X\le t}
\ind_{\{U_j^X\lambda^X(q_{T_j^X-})\le\lambda^X(\wb q_{T_j^X-})\}}
+\int_0^t\!\!\int_{\R_+}
\ind_{\{\lambda^X(q_{s-})<z\le\lambda^X(\wb q_{s-})\}}
\wb\pi^X(\dd s,\dd z).
\end{equation*}

Fix an inspection time $t$. Let $\theta$ and $\tau^{L^o}$ be as in the statement. On $(t,t+\theta\wedge\tau^{L^o})$, the factual path is constant; before the first new event, $\wb q$ is also constant. Conditionally on $(\mathcal F_T,\wb q_t)$, for each $X\in\{L,C,N\}$ the residual integral above generates a Poisson clock with rate $(\lambda^X(\wb q_t)-\lambda^X(q_t))_+$, hence waiting time $\tau^X\sim\mathrm{Exp}((\lambda^X(\wb q_t)-\lambda^X(q_t))_+)$. The next observed candidate time is $\theta$, and the intervention candidate time is $\tau^{L^o}$. Therefore the next transition time is
\begin{equation*}
\tau=\min\{\tau^L,\tau^C,\tau^N,\tau^{L^o},\theta,T-t\}.
\end{equation*}
If $\tau=\tau^X$, one residual atom of type $X$ is added. If $\tau=\tau^{L^o}$, only $L^o$ jumps. If $\tau=\theta$, with $x^\star$ denoting the corresponding type and $s=t+\theta=T_{j_{x^\star}(t)}^{x^\star}$, the observed atom contributes if
\begin{equation*}
U_{j_{x^\star}(t)}^{x^\star}\lambda^{x^\star}(q_{s-})
\le
\lambda^{x^\star}(\wb q_{s-}),
\end{equation*}
which is exactly the acceptance test in the statement. Thus one-step transitions of the algorithm coincide with those of the Poisson construction driven by $\pi^\sharp$. Since ties have probability zero, induction over successive inspection times on $[0,T]$ yields
\begin{equation*}
\mathcal L\big((\wb L,\wb C,\wb N,\wb q)\mid\mathcal F_T\big),
\end{equation*}
proving Proposition~\ref{prop:conditional_simulation:rephrased}, and therefore Proposition~\ref{prop:conditional_simulation}.
\end{proof}

\section{Proof of Theorem~\ref{thm:explicit_impact_shape}}
\label{appendix:C}

Set $U_s^t:=\widebar q_s^{a,t}-q_s^a$. Under Assumption~\ref{assum:linear},
\begin{equation*}
\kappa(\widebar q_s^{a,t})-\kappa(q_s^a)=c_\kappa U_s^t,
\end{equation*}
hence \eqref{eq:mi} gives
\begin{equation*}
\mathrm{MI}_t
=
c_\kappa\int_0^t U_s^t\,\dd N_s^a
+c_\kappa\E\!\left[\int_{(t,\infty)}U_s^t\,\dd N_s^a\ \middle|\ \calG_t\right].
\end{equation*}
Under the shared-noise construction, $U^t$ is driven by the Poisson measures of
$(L^a,C^a,\widebar L^{a,t},\widebar C^{a,t},L^o)$, while $N^a$ is driven by an
independent Poisson measure. Therefore $[U^t,N^a]\equiv0$, so
\begin{equation*}
\int_{(t,\infty)}U_s^t\,\dd N_s^a
=
\int_{(t,\infty)}U_{s-}^t\,\dd N_s^a
\quad\text{a.s.}
\end{equation*}
Using the conditional compensator identity for counting processes
\citep[Chapter II]{bremaud1981point},
\begin{equation*}
\E\!\left[\int_{(t,\infty)}U_{s-}^t\,\dd N_s^a\ \middle|\ \calG_t\right]
=
\int_t^\infty \E\!\left[U_s^t\lambda_s^a\mid\calG_t\right]\dd s.
\end{equation*}
Thus, to close the formula it remains to identify
$\E[U_s^t\lambda_s^a\mid\calG_t]$.

\begin{lemma}
\label{lem:queue_independence}
For all $s>t$, we have
\begin{equation*}
\E\!\left[U_s^t\lambda_s^a\mid\calG_t\right]
=
\E\!\left[U_s^t\mid\calF_t\right]\,
\E\!\left[\lambda_s^a\mid\calG_t\right]
\quad \text{ and } \quad
\E\!\left[U_s^t\mid\calG_t\right]
=
U_t^t e^{c_\lambda(s-t)}.
\end{equation*}
\end{lemma}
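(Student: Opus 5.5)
The plan rests on two structural facts. First, for $s>t$ the displacement $U^t_s=\widebar q^{a,t}_s-q^a_s$ is driven only by the limit/cancel measures $\pi^{L,a},\pi^{C,a}$. Indeed, the truncated intervention $L^{o,t}$ is frozen after $t$, and $N^a$ cancels in the difference. Second, $\lambda^a_s$ is a functional of $\pi^{N,a}$ alone.

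For the factorisation, decompose $\calG_t$ into the pre-$t$ restrictions of the six independent Poisson measures. Write the post-$t$ evolution of $U^t$ through the restart map of Appendix~\ref{appendix:section_two_one}: it is a measurable function of $(q^a_t,\widebar q^{a,t}_t)$ and of the shifted measures $\vartheta_t\pi^{L,a},\vartheta_t\pi^{C,a}$. Likewise, $\lambda^a_s$ is a function of $S_t(N^a)$ and $\vartheta_t\pi^{N,a}$. Conditionally on $\calG_t$, these two families of shifted measures are independent fresh copies, independent of $\calG_t$. Hence
\[
\E[U^t_s\lambda^a_s\mid\calG_t]=\E[U^t_s\mid\calG_t]\,\E[\lambda^a_s\mid\calG_t].
\]
It then remains to identify $\E[U^t_s\mid\calG_t]$, and the second identity will show that it equals $U^t_te^{c_\lambda(s-t)}$. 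This quantity is $\calF_t$-measurable, which lets us write $\calF_t$ in the first factor as in the statement. One caveat must be addressed: $U^t_t$ involves $\widebar q^{a,t}_t$, which is not literally $\calF_t$-measurable. I would read the statement as conditioning on the observed data augmented by the counterfactual state, or simply note that the conditional expectation is the explicit $\calG_t$-measurable expression above.

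For the exponential decay, write the semimartingale decomposition for $s\ge t$. The queue equations give
\[
\dd U^t_s=\dd(\widebar L^{a,t}_s-L^a_s)-\dd(\widebar C^{a,t}_s-C^a_s),
\]
since $N^a$ cancels and $\dd L^{o,t}_s=0$ for $s>t$. By the thinning representation, the compensator of $\widebar L^{a,t}-L^a$ is $\int(\lambda^L(\widebar q_{u-})-\lambda^L(q_{u-}))\,\dd u$, and similarly for $C$. Their difference is
\[
(\lambda^L-\lambda^C)(\widebar q_{u-})-(\lambda^L-\lambda^C)(q_{u-})=c_\lambda U^t_{u-}
\]
by the affine drift in Assumption~\ref{assum:linear}. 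Thus $U^t_s-U^t_t-c_\lambda\int_t^sU^t_u\,\dd u$ is a $(\calG_u)$-local martingale. Setting $m(s)=\E[U^t_s\mid\calG_t]$ and taking conditional expectations gives the ODE
\[
m(s)=U^t_t+c_\lambda\int_t^s m(u)\,\dd u,
\]
so $m(s)=U^t_te^{c_\lambda(s-t)}$.

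The main obstacle is justifying that the local martingale is a true martingale and that Fubini applies. This requires integrability of $\sup_{u\le s}|U^t_u|$. I would get it from the bound
\[
|U^t_u|\le |U^t_t|+(\text{jumps of }\pi^{L,a},\pi^{C,a}\text{ in the region between the two intensity curves}).
\]
A Gronwall argument on $\E|U^t_u|$ then applies: the affine intensities imply $|\lambda^X(\widebar q)-\lambda^X(q)|\le C|U|$, so the compensator of the total variation is at most $C\int\E|U^t_u|\,\dd u$. Localizing with stopping times $\sigma_n=\inf\{u:|U^t_u|\ge n\}$ and applying monotone convergence completes the argument.
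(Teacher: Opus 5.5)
\textbf{The gap is in the factorisation step.} Your compensator computation for the second identity matches the paper's. You are also right that the first factor has to be read as $\E[U^t_s\mid\calG_t]$, since $U^t_t$ involves $\bar q^{a,t}_t$. That is what the paper's proof establishes and later uses.

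The trouble is your claim that, after $t$, $U^t$ is a measurable function of $(q^a_t,\bar q^{a,t}_t)$ and of $\vartheta_t\pi^{L,a},\vartheta_t\pi^{C,a}$ only. This is false pathwise. $N^a$ cancels in the queue \emph{difference}, but not in the acceptance thresholds. An atom $(u,z)$ of $\pi^{L,a}$ moves $U^t$ exactly when $z$ lies between $\lambda^L(\bar q^{a,t}_{u-})$ and $\lambda^L(q^a_{u-})$, and similarly for $\pi^{C,a}$. These thresholds depend on the \emph{levels} $q^a_{u-}$ and $\bar q^{a,t}_{u-}$, and both levels drop by one at every jump of $N^a$. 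Even if $\lambda^L$ and $\lambda^C$ are individually affine, the width of this strip depends only on $U^t_{u-}$, but its vertical position depends on $q^a_{u-}$. So $U^t_s$ is also a functional of $\vartheta_t\pi^{N,a}$. Your argument through disjoint, independent pieces of noise therefore does not apply to the pair $(U^t_s,\lambda^a_s)$.

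\textbf{How the paper closes it.} The factorisation is still true, but only in law, and proving it needs an extra idea. The paper rewrites $U^t$ after $t$ as the solution of an autonomous thinning equation,
$U_s=U_t-\int_{(t,s]}\int_{\R_+}\1_{\{z\le -c_\lambda U_{u-}\}}\tilde\pi(\dd u,\dd z)$, driven by a Poisson measure $\tilde\pi$ independent of $\pi^N$. This works because, under the monotone coupling $\bar q^{a,t}\ge q^a$ of Assumption~\ref{assumption:passive}, $U^t$ only jumps downwards after $t$. Its total jump rate is $\lambda^L(q^a)-\lambda^L(\bar q^{a,t})+\lambda^C(\bar q^{a,t})-\lambda^C(q^a)=-c_\lambda U^t$, a function of $U^t$ alone.

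\textbf{A shorter repair using your own second step.} Run your compensator argument in the enlarged filtration $\mathcal H_u:=\calG_u\vee\sigma(\pi^{N,a})$. Since $\pi^{L,a}$ and $\pi^{C,a}$ are independent of $\pi^{N,a}$, they remain Poisson for $(\mathcal H_u)$, and the levels $q^a_{u-},\bar q^{a,t}_{u-}$ remain predictable. Your ODE then gives $\E[U^t_s\mid\mathcal H_t]=U^t_te^{c_\lambda(s-t)}$. Because $\lambda^a_s$ is $\mathcal H_t$-measurable, the tower property yields both identities at once.

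\textbf{A smaller point on integrability.} Your Lipschitz bound $|\lambda^X(\bar q)-\lambda^X(q)|\le C|U|$ does not follow from Assumption~\ref{assum:linear}, which only makes $\lambda^L-\lambda^C$ affine. It is not needed, however. Under the monotone coupling, $0\le U^t_u\le U^t_t\le L^o_t$ for $u\ge t$, so integrability is immediate and the Gronwall and localisation step can be dropped.
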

Plugging this into the previous display yields
\begin{equation}
\label{eq:mi_intermediaire}
\mathrm{MI}_t
=
c_{\kappa}\int_0^t U_s^t\,\dd N_s^a
+
c_{\kappa}U_t^t
\int_t^\infty e^{c_{\lambda}(s-t)}\E\!\left[\lambda_s^a\mid\calG_t\right]\,\dd s.
\end{equation}

To turn \eqref{eq:mi_intermediaire} into the explicit affine formula of
Theorem~\ref{thm:explicit_impact_shape}, define
$f_t(s):=\E[\lambda_s^a\mid\calG_t]$, $s\ge t$, and
\begin{equation*}
R_t^i:=\int_0^t \alpha_i e^{-\beta_i(t-u)}\,\dd N_u^a,\qquad 1\le i\le m,
\end{equation*}
so that Hawkes dynamics give
\begin{equation*}
f_t(s)
=
\mu+\sum_{i=1}^m R_t^i e^{-\beta_i(s-t)}
+\int_t^s\left(\sum_{i=1}^m \alpha_i e^{-\beta_i(s-u)}\right)f_t(u)\,\dd u.
\end{equation*}
The next lemmas solve this Volterra equation and integrate it against
$e^{c_\lambda(s-t)}$.

\begin{lemma}
\label{lem:propagator_formula}
There exist positive constants $(\lambda_j,c_j)_{1\le j\le m}$ such that
\begin{equation*}
\Bigl(\delta_0-\sum_{i=1}^m \alpha_i e^{-\beta_i(\cdot)}\Bigr)^{-1}
=
\delta_0+\sum_{j=1}^m c_j e^{-\lambda_j(\cdot)}
\end{equation*}
as convolution kernels on $\R_+$.
\end{lemma}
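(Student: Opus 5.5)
The plan is to compute the resolvent of the kernel $\varphi=\sum_{i=1}^m \alpha_i e^{-\beta_i(\cdot)}$ through Laplace transforms. The resolvent is $\psi=\sum_{n\ge1}\varphi^{*n}$, which is well defined and integrable because $\|\varphi\|_{L^1}=\sum_i\alpha_i/\beta_i<1$. It satisfies $(\delta_0-\varphi)^{-1}=\delta_0+\psi$, so it suffices to show that $\psi=\sum_{j=1}^m c_j e^{-\lambda_j(\cdot)}$ with positive $c_j,\lambda_j$.

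\textbf{Step 1: Laplace transform.} For $\Re p>-\min_i\beta_i$ we have $\hat\varphi(p)=\sum_i \alpha_i/(p+\beta_i)$, hence
\begin{equation*}
\hat\psi(p)=\frac{\hat\varphi(p)}{1-\hat\varphi(p)}=\frac{Q(p)}{R(p)-Q(p)},
\qquad R(p)=\prod_{i=1}^m(p+\beta_i),\quad Q(p)=\sum_{i=1}^m\alpha_i\prod_{l\neq i}(p+\beta_l).
\end{equation*}
Here $\deg Q\le m-1$ and $R-Q$ is monic of degree $m$, so $\hat\psi$ is a proper rational function. Its poles are exactly the roots of $1-\hat\varphi(p)=0$, with the cancellation of the $-\beta_i$ handled by clearing denominators.

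\textbf{Step 2: locating the roots.} First reduce to distinct $\beta_i$ by merging equal rates, which only decreases $m$; the missing terms then get $c_j=0$, or, if strict positivity of all $c_j$ is required, one states the lemma with $m$ replaced by the number of distinct rates. Order the rates as $\beta_1<\dots<\beta_m$. On the real line the function $g(p)=\hat\varphi(p)$ is strictly decreasing on each interval between consecutive poles $-\beta_{i+1}<p<-\beta_i$, running from $+\infty$ to $-\infty$. It therefore takes the value $1$ exactly once in each of these $m-1$ intervals. On $(-\beta_1,\infty)$ it decreases from $+\infty$ to $0$, and $g(0)=\|\varphi\|_1<1$, so it crosses $1$ exactly once in $(-\beta_1,0)$.

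This gives $m$ real, simple roots $-\lambda_j$ with $0<\lambda_1<\beta_1<\lambda_2<\beta_2<\dots<\lambda_m<\beta_m$. All $\lambda_j$ are positive, and since $R-Q$ has degree $m$ these are all the roots.

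\textbf{Step 3: partial fractions and signs.} Since the poles are simple, $\hat\psi(p)=\sum_j c_j/(p+\lambda_j)$ with
\begin{equation*}
c_j=\operatorname{Res}_{p=-\lambda_j}\frac{\hat\varphi}{1-\hat\varphi}=\frac{\hat\varphi(-\lambda_j)}{-\hat\varphi'(-\lambda_j)}=\frac{1}{\sum_i\alpha_i(\beta_i-\lambda_j)^{-2}}>0.
\end{equation*}
This uses $\hat\varphi(-\lambda_j)=1$ and $\hat\varphi'(p)=-\sum_i\alpha_i(p+\beta_i)^{-2}<0$.

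Inverting the Laplace transform, which is legitimate because both sides are analytic for $\Re p>-\lambda_1$ and $\psi\in L^1$, yields $\psi(t)=\sum_j c_j e^{-\lambda_j t}$. Uniqueness of Laplace transforms of integrable functions then gives the convolution identity on $\R_+$.

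The main obstacle is Step 2, namely showing that all $m$ roots are real, simple and negative. The interlacing/monotonicity argument handles it cleanly once the rates are distinct, with the stability condition $\|\varphi\|_1<1$ used precisely to place the first root $-\lambda_1$ strictly inside $(-\beta_1,0)$.
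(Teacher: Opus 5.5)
Your proof is correct and is essentially the paper's argument: Laplace transform of the resolvent, strict monotonicity of $\sum_i \alpha_i/(x+\beta_i)$ between consecutive poles together with $\|\varphi\|_{L^1}<1$ to obtain $m$ simple zeros $-\lambda_j$ with $\lambda_j>0$, residues $c_j=\bigl(\sum_i\alpha_i(\beta_i-\lambda_j)^{-2}\bigr)^{-1}>0$, and Laplace inversion. Your degree count ruling out non-real roots is more explicit than the paper, which leaves this implicit; and for coinciding rates you need not restate the lemma with fewer terms, since a term $c\,e^{-\lambda t}$ can be split into two positive halves with the same $\lambda$.
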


\begin{lemma}
\label{lem:closed_impact_term}
Under \eqref{eq:kernel_sum_exps},
\begin{equation*}
\int_t^\infty e^{c_\lambda(s-t)}\E\!\left[\lambda_s^a\mid\calG_t\right]\,\dd s
=
\zeta+\sum_{i=1}^m \Theta_i R_t^i,
\end{equation*}
where
\begin{align*}
\Theta_i
&:=
\frac{1}{\beta_i-c_\lambda}
+ \sum_{j=1}^m \frac{c_j}{\lambda_j-\beta_i}
\left(
\frac{1}{\beta_i-c_\lambda}-\frac{1}{\lambda_j-c_\lambda}
\right),\\
\zeta
&:=
\mu \left(
-\frac{1}{c_\lambda}
+ \sum_{j=1}^m \frac{c_j}{\lambda_j}
\left(
\frac{1}{c_\lambda-\lambda_j}-\frac{1}{c_\lambda}
\right)
\right).
\end{align*}
\end{lemma}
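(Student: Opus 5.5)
The plan is to solve the renewal (Volterra) equation for $f_t(s)=\E[\lambda^a_s\mid\calG_t]$ explicitly with the resolvent of Lemma~\ref{lem:propagator_formula}, and then integrate term by term against $e^{c_\lambda(s-t)}$. The computation becomes a product of Laplace transforms evaluated at $-c_\lambda>0$.

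\textbf{Step 1: reduce to a convolution equation.} Change variables $u=s-t$ and write $F(u):=f_t(t+u)$. The equation displayed before Lemma~\ref{lem:propagator_formula} then reads $F=g+\varphi*F$ on $\R_+$, where
\begin{equation*}
g(u):=\mu+\sum_{i=1}^m R_t^i e^{-\beta_i u},\qquad \varphi(u)=\sum_{i=1}^m\alpha_i e^{-\beta_i u}.
\end{equation*}
The function $F$ is locally bounded, since $f_t(s)\le \mu+\sum_i R^i_t+\E[\text{future excitation}\mid\calG_t]$, which is finite because $\|\varphi\|_{L^1}<1$. By Lemma~\ref{lem:propagator_formula}, the unique locally bounded solution is
\begin{equation*}
F=g+k*g,\qquad k(u):=\sum_{j=1}^m c_j e^{-\lambda_j u}.
\end{equation*}
Uniqueness follows from the standard Gr\"onwall/Picard iteration for Volterra equations with a locally integrable kernel.

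\textbf{Step 2: integrate against $e^{c_\lambda u}$.} Set $p:=-c_\lambda>0$ and write $\widehat h(p):=\int_0^\infty e^{-pu}h(u)\,\dd u$. Every term is nonnegative, because $\mu,R_t^i,c_j,\lambda_j>0$. Tonelli therefore justifies interchanging integrals, and the convolution theorem gives
\begin{equation*}
\int_t^\infty e^{c_\lambda(s-t)}f_t(s)\,\dd s=\widehat g(p)\bigl(1+\widehat k(p)\bigr).
\end{equation*}
The two transforms are
\begin{equation*}
\widehat g(p)=-\frac{\mu}{c_\lambda}+\sum_{i=1}^m\frac{R_t^i}{\beta_i-c_\lambda},\qquad
\widehat k(p)=\sum_{j=1}^m\frac{c_j}{\lambda_j-c_\lambda}.
\end{equation*}
Both are finite because $c_\lambda<0<\beta_i,\lambda_j$.

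\textbf{Step 3: match the constants.} Expand the product and collect the coefficient of $R_t^i$. This coefficient is
\begin{equation*}
\frac{1}{\beta_i-c_\lambda}+\sum_{j=1}^m\frac{c_j}{(\beta_i-c_\lambda)(\lambda_j-c_\lambda)}.
\end{equation*}
It equals $\Theta_i$ by the partial-fraction identity
\begin{equation*}
\frac{1}{\beta_i-c_\lambda}-\frac{1}{\lambda_j-c_\lambda}=\frac{\lambda_j-\beta_i}{(\beta_i-c_\lambda)(\lambda_j-c_\lambda)}.
\end{equation*}
The constant term is $-\frac{\mu}{c_\lambda}\bigl(1+\sum_j \frac{c_j}{\lambda_j-c_\lambda}\bigr)$. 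It equals $\zeta$ by the identity
\begin{equation*}
\frac{1}{c_\lambda-\lambda_j}-\frac{1}{c_\lambda}=\frac{\lambda_j}{c_\lambda(c_\lambda-\lambda_j)}.
\end{equation*}

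\textbf{Main obstacle.} The algebra in Steps 2 and 3 is routine. The step I expect to need the most care is Step 1: establishing that $f_t$ actually satisfies the stated Volterra equation and is the locally bounded solution to which the resolvent applies. I will first derive the equation from the compensator identity $\E[\int_t^s\varphi(s-u)\,\dd N_u^a\mid\calG_t]=\int_t^s\varphi(s-u)f_t(u)\,\dd u$, using monotone convergence since everything is nonnegative. I will then get a priori local boundedness of $f_t$ by comparison with the stationary-type bound $\mu/(1-\|\varphi\|_{L^1})+\sum_i R_t^i$, obtained by iterating the equation. Uniqueness then identifies $f_t$ with $g+k*g$.
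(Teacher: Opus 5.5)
Your proposal is correct and follows the paper's route: derive the Volterra equation for $f_t$, invert it with the resolvent of Lemma~\ref{lem:propagator_formula}, and integrate against $e^{c_\lambda(s-t)}$. Evaluating that last step by the convolution theorem at $p=-c_\lambda$, instead of computing $f_t$ explicitly, is only tidier bookkeeping, and since $1+\widehat k(p)=(1-\widehat\varphi(p))^{-1}=1/D$ with $D=1-\sum_i\alpha_i/(\beta_i-c_\lambda)$, it gives at once the simplified constants $\Theta_i=1/(D(\beta_i-c_\lambda))$ and $\zeta=-\mu/(Dc_\lambda)$ stated in the remark after Theorem~\ref{thm:explicit_impact_shape}.

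One minor correction to your side argument on local boundedness. Iterating the equation gives the bound $(\mu+\sum_i R_t^i)/(1-\|\varphi\|_{L^1})$, not $\mu/(1-\|\varphi\|_{L^1})+\sum_i R_t^i$; the latter can fail when $\varphi$ combines a fast and a slow exponential, because the fast part amplifies a slowly decaying $R_t^i$-term. The iteration also presupposes that $f_t$ is finite, since the Volterra equation alone does not exclude infinite solutions. You should obtain finiteness separately, e.g.\ by stopping at the $n$-th jump and applying Gr\"onwall. Neither point affects the rest of your proof.
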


\begin{proof}[Proof of Theorem~\ref{thm:explicit_impact_shape}]
Combining \eqref{eq:mi_intermediaire} and Lemma~\ref{lem:closed_impact_term},
\begin{equation*}
\mathrm{MI}_t
=
c_\kappa\int_0^t U_s^t\,\dd N_s^a
+c_\kappa U_t^t\left(\zeta+\sum_{i=1}^m \Theta_i R_t^i\right).
\end{equation*}
Since $R_t^i=\alpha_i\int_0^t e^{-\beta_i(t-u)}\,\dd N_u^a$, define
$\gamma_i:=\alpha_i\Theta_i$. Then
\begin{equation*}
\sum_{i=1}^m \Theta_iR_t^i
=
\int_0^t\sum_{i=1}^m \gamma_i e^{-\beta_i(t-u)}\,\dd N_u^a,
\end{equation*}
which is exactly the statement of Theorem~\ref{thm:explicit_impact_shape}.
\end{proof}

\begin{proof}[Proof of Lemma \ref{lem:queue_independence}]
Fix $s>t$ and write $U_u:=U_u^t$. Following
\citep{youssef2024passiveimpact}, one can rewrite
\begin{equation*}
U_s
=
U_t
-\int_{(t,s]}\int_{\R_+}\1_{\{z\le -c_\lambda U_{u-}\}}\tilde\pi(\dd u,\dd z),
\end{equation*}
with $\tilde\pi$ independent of the Poisson measure $\pi^N$ driving $N^a$:
\begin{equation*}
N_s^a
=
N_t^a+\int_{(t,s]}\int_{\R_+}\1_{\{z\le\lambda_u^a\}}\pi^N(\dd u,\dd z),
\qquad
\lambda_u^a=\mu+\int_0^{u-}\varphi(u-r)\,\dd N_r^a.
\end{equation*}
Conditionally on $\calG_t$, $U_s$ depends only on
$\tilde\pi_{|(t,s]\times\R_+}$, whereas $\lambda_s^a$ depends only on
$\pi^N_{|(t,s]\times\R_+}$. Hence
\begin{equation*}
\E\!\left[U_s\lambda_s^a\mid\calG_t\right]
=
\E\!\left[U_s\mid\calG_t\right]\E\!\left[\lambda_s^a\mid\calG_t\right].
\end{equation*}
Set $m_t(s):=\E[U_s\mid\calG_t]$. Taking $\calG_t$-conditional expectation in
the equation for $U_s$ and using the compensator, we get
\begin{equation*}
m_t(s)=U_t+c_\lambda\int_t^s m_t(u)\,\dd u,
\end{equation*}
thus $m_t(s)=U_t e^{c_\lambda(s-t)}=U_t^t e^{c_\lambda(s-t)}$.
\end{proof}

\begin{proof}[Proof of Lemma \ref{lem:propagator_formula}]
Let $k(u):=\sum_{i=1}^m\alpha_i e^{-\beta_i u}\1_{\{u\ge0\}}$. Grouping equal
decay rates, assume $0<\beta_1<\cdots<\beta_m$. The Laplace transform of the
resolvent is
\begin{equation*}
F(x):=\mathcal L\!\left((\delta_0-k)^{-1}\right)(x)
=\Big(1-\sum_{i=1}^m\frac{\alpha_i}{x+\beta_i}\Big)^{-1},\qquad x>0.
\end{equation*}
Set
\begin{equation*}
R(x):=1-\sum_{i=1}^m\frac{\alpha_i}{x+\beta_i},
\qquad
x\in\R\setminus\{-\beta_1,\dots,-\beta_m\}.
\end{equation*}
Since $R'(x) > 0$, $R$ is strictly increasing on each interval between consecutive poles, and
\begin{equation*}
\lim_{x\to-\beta_i^-}R(x)=+\infty,\qquad
\lim_{x\to-\beta_i^+}R(x)=-\infty,\qquad
R(0)=1-\sum_{i=1}^m\frac{\alpha_i}{\beta_i}>0.
\end{equation*}
Hence $R$ has exactly $m$ simple zeros $-\lambda_j$, with $\lambda_j>0$.
Therefore
\begin{equation*}
F(x)=1+\sum_{j=1}^m\frac{c_j}{x+\lambda_j},
\qquad
c_j=\frac{1}{R'(-\lambda_j)}
=\Big(\sum_{i=1}^m\frac{\alpha_i}{(-\lambda_j+\beta_i)^2}\Big)^{-1}>0.
\end{equation*}
Taking inverse Laplace transforms yields the claim.
\end{proof}

\begin{proof}[Proof of Lemma \ref{lem:closed_impact_term}]
Set $f_t(s):=\E[\lambda_s^a\mid\calG_t]$, $s\ge t$, and $f_t(s):=0$ for
$s<t$. Taking $\calG_t$-conditional expectation in Hawkes dynamics yields
\begin{equation*}
f_t(s)
=
\mu+\sum_{i=1}^m R_t^i e^{-\beta_i(s-t)}
+\int_t^s\left(\sum_{i=1}^m\alpha_i e^{-\beta_i(s-u)}\right)f_t(u)\,\dd u.
\end{equation*}
Applying Lemma~\ref{lem:propagator_formula}, for $s\ge t$,
\begin{align*}
f_t(s)
&=
\sum_{i=1}^m R_t^i\left(
e^{-\beta_i(s-t)}
+\sum_{j=1}^m c_j\int_t^s e^{-\lambda_j(s-u)}e^{-\beta_i(u-t)}\,\dd u
\right)+\mu\left(
1+\sum_{j=1}^m\frac{c_j}{\lambda_j}\bigl(1-e^{-\lambda_j(s-t)}\bigr)
\right).
\end{align*}
We conclude by computing explicitely the $\dd u$ integrals and then by integrating this against the exponential weight.
\end{proof}

\end{document}